\algrenewcommand\algorithmicrequire{\textbf{Input:}}
\algrenewcommand\algorithmicensure{\textbf{Output:}}
\algnewcommand{\Initialize}[1]{
  \State\hspace*{-1em}\textbf{Initialize:}
  \Statex \hspace*{\algorithmicindent}\parbox[t]{.8\linewidth}{\raggedright #1}
}
\algnewcommand{\AlgoProcedure}[1]{\\
  \State\hspace*{-1em}\textbf{Procedure:}
  \Statex \hspace*{\algorithmicindent}\parbox[t]{.8\linewidth}{\raggedright #1}
}
\algnewcommand{\AlgoProcedureNoTitle}[1]{\\
  \Statex \hspace*{\algorithmicindent}\parbox[t]{.8\linewidth}{\raggedright #1}
}
\def\bstart#1\bstop{\boldsymbol{#1}}
\theoremstyle{definition}
\newlength\min@x
\newtheorem{theorem}{Theorem}
\newtheorem{lemma}{Lemma}
\newtheorem{corollary}{Corollary}
\newtheorem{definition}{Definition}
\newtheorem*{remark}{Remark}
\newtheorem{proposition}{Proposition}
\newtheorem{example}{Example} 
\def\ket#1{| #1 \rangle}
\def\bra#1{\langle #1 |}
\title{Unified and Generalized Approach to Entanglement-Assisted Quantum Error Correction}
\author{Priya J. Nadkarni$^1$, Serge Adonsou$^{1,2}$, Guillaume Dauphinais$^1$,\\David W. Kribs$^{1,2}$\thanks{Corresponding author: \href{mailto:dkribs@uoguelph.ca}{dkribs@uoguelph.ca}}, and Michael Vasmer$^{1,3,4}$}
\date{
\small $^1$Xanadu, Toronto, ON M5G 2C8, Canada \\
$^2$Department of Mathematics \& Statistics, University of Guelph, Guelph, ON N1G 2W1, Canada \\
$^3$Perimeter Institute for Theoretical Physics, Waterloo, ON N2L 2Y5, Canada \\
$^4$Institute for Quantum Computing, University of Waterloo, Waterloo, ON N2L 3G1, Canada \\[1ex]
\normalsize \today
}
\begin{document}
\maketitle
\begin{abstract}
We introduce a framework for entanglement-assisted quantum error correcting codes that unifies the three original frameworks for such codes called EAQEC, EAOQEC, and EACQ under a single umbrella. The unification is arrived at by viewing entanglement-assisted codes from the operator algebra quantum error correction perspective, and it is built upon a recently established extension of the stabilizer formalism to that setting. We denote the framework by EAOAQEC, and we prove a general error correction theorem for such codes, derived from the algebraic perspective, that generalizes each of the earlier results. This leads us to a natural notion of distance for such codes, and we derive a number of distance results for subclasses of the codes. We show how EACQ codes form a proper subclass of the entanglement-assisted subspace codes defined by EAOAQEC. We identify and construct new classes of entanglement-assisted subsystem codes and entanglement-assisted hybrid classical-quantum codes that are found outside of the earlier approaches. 
\end{abstract}

\section{Introduction}
As quantum technologies continue to advance, the corresponding development of a variety of approaches for quantum error correction (QEC) will be necessary and surely play increasingly important roles as time goes on. The basic foundations for QEC were laid three decades ago \cite{PhysRevA.52.R2493,knill1997theory,PhysRevA.54.1098,ShorFaultTolerant}, and the subject now touches on all aspects of quantum information science. An important advance in the theory of QEC, which came roughly a decade later, was the introduction of quantum entanglement as a resource for boosting transmission rates when a sender and receiver share pre-existing entanglement \cite{brun2006correcting}.  

The resulting framework was built upon the stabilizer formalism for QEC \cite{gottesman1996class,gottesman1997stabilizer,PhysRevLett.78.405} and is appropriately called entanglement-assisted quantum error correction (EAQEC). Shortly thereafter it was generalized \cite{hsieh2007general} to the setting of operator quantum error correction (OQEC)  \cite{Kribs2005Unified,Kribs2006oqec} and subsystem codes \cite{Poulin2005Stabilizer,Bacon2006Operator,aly2008subsystem,Bombin2015Gauge,Bombin2015Single,PhysRevX.11.031039,kubica2022SingleshotQuantumError,Hastings2021dynamically,nemec2023quantum}, and denoted by the acronym EAOQEC. Subsequently, it was further generalized in a different direction to the setting of hybrid codes that simultaneously encode both classical and quantum information \cite{kuperberg2003capacity,devetak2005capacity,hsieh2010entanglement,hsieh2010trading,grassl2017codes,li2020error,cao2021higher,nemec2021infinite}, and that framework was denoted EACQ \cite{kremsky2008classical}. These entanglement-assisted approaches to error correction have now found applications in a wide variety of computational, experimental and theoretical settings, as a forward reference search on the papers \cite{brun2006correcting,hsieh2007general,kremsky2008classical} readily confirms. 

The QEC framework itself was generalized beyond the traditional and subsystem code cases, via a formulation that started with the Heisenberg picture for quantum dynamics instead of the Schr\"{o}dinger picture that is typically used. Coined operator algebra quantum error correction (OAQEC) \cite{Beny2007Generalization,Beny2007Quantum}, the approach allowed for the injection of operator algebra techniques more directly into the subject, leading to notions of von Neumann algebra codes and complementary private algebras, while also providing and building on techniques to correct hybrid classical-quantum and infinite-dimensional codes \cite{kuperberg2003capacity,devetak2005capacity,hsieh2010entanglement,hsieh2010trading,beny2009quantum,crann2016private}. More recently, interest in the OAQEC framework has been renewed through a mix of motivations, including hybrid coding theory   \cite{grassl2017codes,li2020error,cao2021higher,nemec2021infinite} and black hole theory  \cite{Verlinde2013,Almheiri2015,Harlow2017,hayden_penington2019,almheiri2018holographic,Penington2020,akers2019,kamal2019ryu,Akers2022,akers2022black}. These latest advances helped to provide the impetus to develop a stabilizer formalism for OAQEC, and very recently this was accomplished for finite-dimensional cases in \cite{kribs2023stabilizer}. A natural next step, which we address here, is to ask if there is an extension of entanglement-assisted quantum error correction to OAQEC? Such an extension would presumably generalize the previous approaches. Indeed, this was even left as an open line of investigation years ago in the wake of the original entanglement-assisted works (see the Conclusions of \cite{kremsky2008classical}). 

In this paper, we introduce a framework for entanglement-assisted quantum codes that simultaneously generalizes the three original frameworks given in \cite{brun2006correcting,hsieh2007general,kremsky2008classical}. Our approach unifies the frameworks under a single umbrella by viewing them through the lens of OAQEC, and it is built upon the corresponding extension of the stabilizer formalism to that setting from \cite{kribs2023stabilizer}. We denote the framework by EAOAQEC, and we prove a general error correction theorem for such codes, derived from the algebraic perspective, that generalizes each of the earlier results. This in turn leads to a notion of distance for such codes, and we use it to derive a number of distance results for subclasses of the codes. We show how EACQ codes form a proper subclass of the entanglement-assisted subspace codes defined by the framework. Further, we identify and construct new classes of entanglement-assisted subsystem codes and entanglement-assisted hybrid classical-quantum codes that are not captured by the earlier approaches but naturally find a place in EAOAQEC. 

This paper is organized as follows. In the next section we establish notation and briefly recall the elements of the stabilizer formalism we require. In \Cref{sec:EAOAQEC} we present the EAOAQEC framework and in \Cref{sec:ErrorCorrectionTheorem} we establish the general error correction theorem and code distance definition. \Cref{sec:EAOAQEC_special_cases} indicates how EAQEC and EAOQEC are captured as special cases (and we include a schematic diagram to keep track of the different acronyms). In \Cref{sec:eacq} we show how EACQ codes form a subclass of EAOAQEC subspace codes, and we identify algebraic conditions that specify the subclass. In \Cref{sec:eaoaqec_code_construction} we give a number of examples and constructions of EAOAQEC subsystem codes (both hybrid and non-hybrid). Some of the subsystem code constructions in this section are rather technical, and so we have placed parts of the arguments in appendices. We conclude in \Cref{sec:conc} with a summary and some forward looking remarks.  %

\section{Preliminaries}

We consider error correcting codes for sets of noise operators from the $n$-qubit Pauli group $\mathcal P_n$, which is the group of $n$-qubit unitary operators generated by $n$-tensors of the single qubit bit flip $X$ and phase flip $Z$ Pauli operators and $iI$ (we will write $I$ for the identity operator on any sized Hilbert space as the context will be clear). We use standard notation for $n$-qubit operators, such as $X_1 = X \otimes (I^{\otimes (n-1)})$, $X_2 = I\otimes  X \otimes (I^{\otimes (n-2)})$, etc. 

Each of the entanglement-assisted (EA) error correction approaches \cite{brun2006correcting,hsieh2007general,kremsky2008classical} is built upon a stabilizer formalism that constructs codes for Pauli noise models \cite{gottesman1996class,gottesman1997stabilizer,Poulin2005Stabilizer}. Here we make use of the codes constructed in the recently introduced stabilizer formalism \cite{kribs2023stabilizer} for `operator algebra quantum error correction' (OAQEC) \cite{Beny2007Generalization,Beny2007Quantum}. The starting point is the same as previous stabilizer formalism settings. 

Let $\mathcal S$ be an Abelian subgroup of $\mathcal P_n$ that does not contain $-I$, and suppose it has $s\geq 1$ independent generators. As an illustrative example for this discussion, take $\mathcal S = \langle Z_1, \ldots , Z_s \rangle$ to be the group generated by phase flip operators on the first $s$ qubits. The normalizer $\mathcal N(\mathcal S)$ and centralizer $\mathcal Z(\mathcal S)$ subgroups of $\mathcal S$ inside $\mathcal P_n$ coincide, as elements of the Pauli group either commute or anti-commute up to some power of $iI$ and $\mathcal S$ does not contain $\langle iI \rangle$. 
The stabilizer subspace for $\mathcal S$ is the joint eigenvalue-$1$ eigenspace for $\mathcal S$; that is, 
$
C =  C(\mathcal S) = \mathrm{span} \{ \ket{\psi} \, : \, g \ket{\psi} = \ket{\psi} \,\, \forall g\in \mathcal S \}. 
$
In the example,  $C(\mathcal S) = \mathrm{span} \big\{ \ket{ 0^{\otimes s} i_1 \cdots i_{n-s}  } \, : \, i_j = 0,1 \big\}$.  
We will let $P$ denote the codespace projector for $C$, which encodes $n-s$ qubits. 

Subsystem structure generated by a stabilizer subspace can be obtained as follows. 
Suppose we have subsets $\mathcal G_0$ and $\mathcal L_0$ of $\mathcal N(\mathcal S) = \mathcal Z(\mathcal S)$ with the following properties: (i) The `compression algebra' generated by $\mathcal G_0 P$, respectively $\mathcal L_0 P$,  is unitarily equivalent to a full matrix algebra $M_{2^r}$ for some $r \geq 1$, respectively to $M_{2^k}$ for some $k \geq 1$. 
(ii) The sets $\mathcal G_0$ and $\mathcal L_0$ are mutually commuting; $[g,L]=0$ for all $g\in \mathcal G_0$, $L\in \mathcal L_0$. 
(iii) $\mathcal N(\mathcal S) $ is generated by $\mathcal S$, $iI$, $\mathcal G_0$, and $\mathcal L_0$ (and so $n-s = r+k$). The trivial ancilla subsystem case (when $r=0$, and we will write $\mathcal G_0 = \emptyset$ the empty set in that case) can be viewed as a special case of this set up, and yields standard (subspace) stabilizer codes, but generally we view the subspace and bona fide subsystem cases (with $r\geq 1$ and $\mathcal G_0 \neq \emptyset$) as separate.  

The group $\mathcal G$ defined as
   $ \mathcal G = \langle \mathcal S, iI, \mathcal G_0 \rangle, $ 
is called the {\it gauge group} for the code, and the group
   $ \mathcal L = \langle \mathcal L_0, iI \rangle, $ 
is called the {\it logical group}. The third property above ensures that the normalizer satisfies the direct product group isomorphism $\mathcal N(\mathcal S) \times \langle iI \rangle \cong \mathcal G \times \mathcal L$. 
The codespace $C$ then decomposes as a tensor product of subsystems $C = A \otimes B$ with $A \cong (\mathbb C^2)^{\otimes r}$, $B \cong (\mathbb C^2)^{\otimes k}$, with $\mathcal G$ (respectively $\mathcal L$) restricted to $C$ generating $\mathcal L(A)\otimes I_B$ (respectively $I_A \otimes \mathcal L(B)$), where $\mathcal L(A)$ is the set of operators on $A$.  
For the example, we can take 
$\mathcal G_0 = \big\{  X_{s+1}, Z_{s+1}, \ldots , X_{s+r}, Z_{s+r} \big\}$ and 
$\mathcal L_0 = \big\{  X_{s + r +1}, Z_{s+ r + 1}, \ldots , X_{n}, Z_{n} \big\}$ to obtain these structures.    

Lastly, OAQEC stabilizer codes allow for hybrid classical-quantum encodings as follows. Let $\mathcal T \subseteq \mathcal P_n$  be a maximal set of coset representatives for $\mathcal N(\mathcal S)$, a so-called {\it coset transversal} for $\mathcal N(\mathcal S)$ as a subgroup of $\mathcal P_n$, and include $I\in \mathcal T$ as the representative for the normalizer coset itself. Then we have the (disjoint) union $\mathcal P_n = \cup_{g\in \mathcal T} \, g \mathcal N(\mathcal S) $, and the cardinality of $\mathcal T$ is equal to $|\mathcal T| = | \mathcal P_n | / |\mathcal N(\mathcal S) | = 2^s$. One can easily compute this directly for the example, for instance a choice of transversal in that case is given by the set  $\mathcal T = \big\{ X_1^{a_1} \cdots X_s^{a_s} \, : \, 0 \leq a_j \leq 1 \big\}$, as the set is multiplicatively closed and each operator $X_j$, $1\leq j \leq s$, does not belong to $\mathcal N(\mathcal S)$ nor do any of their products other than those that collapse to the identity. 

We use the terminology {\it code sector} to refer to the (quantum) code defined  by a given $T\in \mathcal T$ and the elements that define the base code: $\mathcal S$, $\mathcal L$, $\mathcal G$. So, the code sector for $T$ is defined by the collection of operators given by the sets $\{ T {\mathcal S} T^{-1}$, $T {\mathcal L} T^{-1}$, $T {\mathcal G} T^{-1} \}$, and the codespace $T C$. 
An important observation concerning normalizer cosets in this setting as noted in \cite{kribs2023stabilizer} is that the subgroup and coset  structure induces orthogonality at the Hilbert space level, in the sense that $\bra{\psi_1} g_1^{-1} g_2 \ket{\psi_2} = 0$ for all $g_1,g_2\in \mathcal T$ with $g_1\neq g_2$ and all $\ket{\psi_1}, \ket{\psi_2} \in C$; or equivalently, $P g_1^{-1} g_2 P =0$. 
Hence, any subset $\mathcal T_0 \subseteq \mathcal T$ defines an `OAQEC stabilizer code', which will be a hybrid classical-quantum code whenever $|\mathcal T_0| > 1$ and  $\dim C > 1$. 

To summarize, this formalism yields stabilizer codes that generalize both the original (subspace) setting of Gottesman \cite{gottesman1996class,gottesman1997stabilizer} (captured with $\mathcal T_0 = \{ I \}$ and $\mathcal G_0 = \emptyset$), and the OQEC (subsystem) setting of Poulin \cite{Poulin2005Stabilizer} (captured with $\mathcal T_0 = \{ I \}$ and $\mathcal G_0 \neq \emptyset$). As noted above, a code defined by $\mathcal T_0 \subseteq \mathcal T$ with $|\mathcal T_0|>1$ will be a hybrid code, with a subspace base code ($C = A \otimes B$ with $A = \mathbb C$) when the gauge group is Abelian and a subsystem base code ($C=A\otimes B$ with $\dim A > 1$) otherwise. The size of the subset $\mathcal T_0$ determines the number of classical bit strings that can be encoded in the code, each on top of a quantum code.
We point the reader to \cite{kribs2023stabilizer} for further details on the formalism. 

\section{Entanglement-Assisted Operator Algebra Quantum Error Correction}
\label{sec:EAOAQEC}

We will introduce key notions and notation while carrying an illustrative example through the discussion that includes all the key elements. We begin by recalling well-known (for instance see \cite{fattal2004entanglement,bravyi2006ghz}) basic structural features of Pauli subgroups used in entanglement-assisted code formulations.  

Suppose we have a (not necessarily Abelian) subgroup $\mathcal H$ of the $n$-qubit Pauli group $\mathcal P_n$. We can assume $\mathcal H$ has $2^m$ elements, up to overall phase, for some $1 \leq m \leq n$. Then there exists a set of $m$ independent (in a group-theoretic sense) generators for $\mathcal H$ of the form
\[
\big\{  \overline{Z}_1,  \ldots , \overline{Z}_{m-\ell}, \overline{Z}_{m-\ell + 1},  \ldots , \overline{Z}_\ell, \,\, \overline{X}_1,  \ldots , \overline{X}_{m-\ell} \big\}, 
\]
where $m/2 \leq \ell \leq m$, such that: 
\begin{center}
\begin{enumerate}
\item[(i)]  $[ \overline{Z}_i,  \overline{Z}_j ] = 0 =  [ \overline{X}_i,  \overline{X}_j ]$ for all $i,j$; 
\item[(ii)]  $[ \overline{Z}_i,  \overline{X}_j ]  = 0$ for all $ i \neq j$; 
\item[(iii)]  $\{ \overline{Z}_i,  \overline{X}_i \}  = 0$ for all $i$; 
\end{enumerate}
\end{center}
and where here $[A,B]$ and $\{A,B\}$ are the usual commutator and anti-commutator of $A$ and $B$. 

Given such a representation of generators, we can define two subgroups of $\mathcal H$: an {\it isotropic subgroup} generated by the commuting generators, 
\[
\mathcal H_I = \langle \overline{Z}_{m-\ell + 1},  \ldots , \overline{Z}_\ell\rangle , 
\]
and a {\it symplectic subgroup} generated by the set of anti-commuting generator pairs, 
\[
\mathcal H_S = \langle  \overline{Z}_1,  \ldots , \overline{Z}_{m-\ell},  \overline{X}_1,  \ldots , \overline{X}_{m-\ell}  \rangle .
\]
Then $\mathcal H = \langle \mathcal H_I , \mathcal H_S \rangle$ is generated by both subgroups. 
This isotropic-symplectic decomposition of $\mathcal H$ is not unique, but any two such decompositions are related by a unitary. More generally, we will make use of the fact that any group isomorphism between two Pauli subgroups $\mathcal H_1 \cong \mathcal H_2$ is unitarily implemented up to phases; that is, there is a unitary $U$ such that for all $h_1\in \mathcal H_1$, there is $h_2\in \mathcal H_2$ with $h_2 = U h_1 U^\dagger$ up to an overall phase. 

As an example, consider the following six generators for a 6-qubit Pauli subgroup $\mathcal H$ (this example is a variant of the original example from \cite{brun2006correcting}): 
\[
\begin{array}{c|cccccc}
\hhline{=======}
h_1 & Z & I & I & I & I & I \\ 
h_2 & X & I & I & I  & I & I \\ 
h_3 & I & Z & I & I & I & I \\ 
h_4 & I & X & I & I  & I & I \\ 
h_5 & I & I & Z & I  & I  & I \\ 
h_6 & I & I & I & Z & I & I \\
\hhline{=======}
\end{array}
\]
So here we have $n=6$, $m=6$, $\ell = 4$.  We can take the symplectic subgroup as generated by $\{ h_1, h_2, h_3, h_4\} = \{Z_1, X_1, Z_2, X_2\}$, and the isotropic subgroup as generated by $\{ h_5, h_6 \} = \{Z_3, Z_4\}$.  

As in previous EA settings, we can extend the generators by adding new qubits and operator actions on those qubits such that the new subgroup $\mathcal S$ on the extended space is Abelian. Typically this can be done in many ways, and in the traditional EAQEC settings, which we will follow here, it is done with a minimal number of qubits, which is equal to one additional {\it entangled bit} (ebit) for each non-commuting pair of generators in the original group $\mathcal H$. So the generators of the Abelian group $\mathcal S$ are the generators of $\mathcal H$ extended in this way. For the example above, this can be accomplished by adding $X$ and $Z$ on a pair of extra qubits to obtain the six generators of $\mathcal S$ as follows: 
\[
\begin{array}{c|cccccc|cc}
\hhline{=========}
S_1 & Z & I & I & I & I & I & Z & I \\ 
S_2 & X & I & I & I  & I & I & X & I \\ 
S_3 & I & Z & I & I & I & I & I & Z \\ 
S_4 & I & X & I & I  & I & I & I & X \\ 
S_5 & I & I & Z & I  & I  & I & I  & I \\ 
S_6 & I & I & I & Z & I & I & I  & I \\
\hhline{=========}
\end{array}
\]

We will let $e$ denote the number of ebits in an EA code; so $e=m-\ell$ in the general notation and $e=2$ in this example.  The number of what we will call {\it isotropic qubits} $s$ is equal to the number of independent generators of $\mathcal H_I$; so $s= 2 \ell - m$ in general and $s=2$ in the example (these are called `ancilla qubits' in early EAQEC formulations; here we use different terminology as we will need this term to refer to other qubits in our applications). Without any further structure, here we have an EAQEC (subspace) code that encodes $k = n - e -s = n- \ell$ qubits. In the example above, $k=2$ and the two-qubit code is given by the stabilizer subspace,  
\[
C : = C(\mathcal S) = \mathrm{span}\,\big\{ \ket{\Phi}_{1,7}\ket{\Phi}_{2,8}\ket{0}\ket{0}\ket{\Psi}   \},
\]
where $\ket{\Psi}$ is an arbitrary two-qubit state supported on the fifth and sixth qubits, $\ket{\Phi}_{1,7}$ is the standard maximally entangled Bell state between the first and seventh qubits, and similarly for $\ket{\Phi}_{2,8}$. 

If the $n-e-s = n - \ell$ qubits allow for a subsystem decomposition, say into a tensor decomposition of an $r$ qubit system with a $k$ qubit system, then we have a (OQEC) subsystem code with an entanglement-assisted feature, and this is how we can obtain EAOQEC codes. The overall $t = r+k$ qubit code subspace has $r$ {\it gauge qubits} along with $k$ encoded {\it logical qubits} (here we choose variables for the gauge and logical qubits in line with other settings). The {\it gauge group} $\mathcal G$ for the code is generated by the operators $\mathcal G_0$ on the gauge qubits, the stabilizer group $\mathcal S$ and the scalar phase group $\langle i I \rangle$. The {\it logical group} is generated by the logical operators $\mathcal L_0$ acting on the encoded qubits and  $\langle i I \rangle$. Note that by construction the logical and gauge operators are $(n+e)$-qubit operators that act as the identity operator on the extra $e$ ebits. We will use the notation $L^{(n)}$ to denote the operator acting on the first $n$-qubits for such operators; so for instance, $L= L^{(n)} \otimes I$ for $L\in \mathcal L_0$. In the example, we can choose $r=1=k$, and in the table below we identify the fifth and sixth qubits as the gauge and encoded qubits respectively, with gauge generators $\mathcal G_0 = \{\overline{G}_1, \overline{G}_2\}$ and $\mathcal L_0 = \{ \overline{L}_X, \overline{L}_Z \}$ as the logical operators acting on the encoded qubit: 
\[
\begin{array}{c|cccccc|cc}
\hhline{=========}
S_1 & Z & I & I & I & I & I & Z & I \\ 
S_2 & X & I & I & I  & I & I & X & I \\ 
S_3 & I & Z & I & I & I & I & I & Z \\ 
S_4 & I & X & I & I  & I & I & I & X \\ 
S_5 & I & I & Z & I  & I  & I & I  & I \\ 
S_6 & I & I & I & Z & I & I & I  & I \\
\hhline{---------}
\overline{G}_1 & I & I & I & I & X & I & I & I  \\  
\overline{G}_2 & I & I & I & I & Z & I  & I & I \\ 
\hhline{---------}
\overline{L}_X & I & I & I & I & I & X & I & I  \\  
\overline{L}_Z & I & I & I & I & I & Z  & I & I \\ 
\hhline{=========}
\end{array}
\]

Finally, we introduce hybrid classical-quantum structure by adopting the approach from \cite{kribs2023stabilizer}, through the coset structure of the subgroup $\mathcal N(\mathcal S)$ inside $\mathcal P_{n+e}$. Let $\mathcal T$ be a coset transversal for $\mathcal N(\mathcal S)$, which without loss of generality includes the identity operator, and so $\mathcal P_{n+e}$ is equal to the disjoint union of the sets $g\mathcal N(\mathcal S)$ with $g\in\mathcal T$. 
There are a total of $2^m$ cosets, where recall $m$ is the number of independent generators of $\mathcal H$, and hence of $\mathcal S$. The coset representatives and code space define a family of mutually orthogonal subspaces, $\{ g C \, : \, g\in\mathcal T \}$, each of which carries the structure of $C$ (which has subsystems if $r>1$) to $g C$ through the unitary action of $g$, along with the corresponding code sector operators $\{ g \mathcal S g^{-1}, g \mathcal G g^{-1}, g \mathcal L g^{-1} \}$. In particular, any subset of coset representatives $\mathcal T_0 \subseteq \mathcal T$ (which we will always assume includes $I$) with cardinality $|\mathcal T_0|  > 1$ defines a hybrid classical-quantum code.

Building on the example above, we can consider the hybrid code defined by the three coset representatives $\mathcal T_0 = \{ I, T_1, T_2\}$, where: 
\[
\begin{array}{c|cccccc|cc}
\hhline{=========}
S_1 & Z & I & I & I & I & I & Z & I \\ 
S_2 & X & I & I & I  & I & I & X & I \\ 
S_3 & I & Z & I & I & I & I & I & Z \\ 
S_4 & I & X & I & I  & I & I & I & X \\ 
S_5 & I & I & Z & I  & I  & I & I  & I \\ 
S_6 & I & I & I & Z & I & I & I  & I \\
\hhline{---------}
\overline{G}_1 & I & I & I & I & X & I & I & I  \\  
\overline{G}_2 & I & I & I & I & Z & I  & I & I \\ 
\hhline{---------}
\overline{L}_X & I & I & I & I & I & X & I & I  \\  
\overline{L}_Z & I & I & I & I & I & Z  & I & I \\ 
\hhline{---------}
T_0 & I & I & I & I & I & I  & I & I \\  
T_1 & I & I & X & I & I & I & I & I  \\  
T_2 & I & I & I & X & I & I  & I  &  I \\ 
\hhline{=========}
\end{array}
\]
It is easy to see that $\mathcal T_0$ defines three distinct cosets here. Indeed, $T_1$ (respectively $T_2$) does not commute with $S_5$ (respectively $S_6$) for instance, and so $T_i=T_i I \in T_i \mathcal N(\mathcal S)$ but $T_i \notin I \mathcal N(\mathcal S) = \mathcal Z(\mathcal S)$ for $i=1,2$, and so the cosets $T_i \mathcal N(\mathcal S)$, $i=1,2$, are distinct from $\mathcal N(\mathcal S)$. Similarly $T_1T_2$ does not commute with either of $S_5, S_6$, so the cosets $T_i \mathcal N(\mathcal S)$, $i=1,2$, are also distinct from each other. In fact, one can check that an example of a full transversal for this code is the $2^6$ element set: 
\[
\mathcal T = \big\{  X_1^{a_1} Z_1^{b_1} X_2^{a_2} Z_2^{b_2} X_3^{a_3} X_4^{a_4}\,\, | \,\, a_i, b_j \in \{0,1\}   \big\} . 
\]

Thus, we have constructed a class of entanglement-assisted codes, 
with structural features indicated by $C = C(\mathcal H, \mathcal S, \mathcal G_0, \mathcal L_0, \mathcal T_0)$. As a reminder, $\mathcal H$ is an $n$-qubit Pauli subgroup, $\mathcal S$ is an Abelian $(n+e)$-qubit Pauli subgroup with generators that extend the generators of $\mathcal H$ as in the discussion above, $\mathcal G_0$ and  $\mathcal L_0$ are the gauge and logical operators respectively for the code which act as the identity on the extended qubits, and $\mathcal T_0$ are the transversal operators (which we will see in the next section can also be chosen to act as the identity on the extended qubits). 

These codes are OAQEC codes on the extended space that arise from the stabilizer formalism for OAQEC, and they are EA codes when viewed on the original space. This justifies our terminological choice of entanglement-assisted operator algebra quantum error correcting (EAOAQEC) codes. We will use the following notation to denote the structural properties of these codes.  

\begin{definition}
    The parameters of an EAOAQEC code $C = C(\mathcal H, \mathcal S, \mathcal G_0, \mathcal L_0, \mathcal T_0)$ that encodes $k$ logical qubits and one of the $c_b$ classical bitstrings into $n$ physical qubits and $e$ ebits and contains $r$ gauge qubits is denoted by $[\![n,k; r, e, c_b]\!]$. At times we will suppress reference to features that are not present (e.g., $[\![n,k ; r, e]\!]$ for subsystem codes with no hybrid component and $[\![n,k ; r]\!]$ for subsystem codes with no hybrid component and no entanglement-assistance), and we will add $d$ for distance (defined below) when that notion is being considered for a code.
\end{definition}

\section{Error Correction Theorem} \label{sec:ErrorCorrectionTheorem}

In this section we find the general error correction conditions for EAOAQEC codes, which we will subsequently apply in multiple special cases, and we use them to define an appropriate notion of code distance. 

The traditional physical model for entanglement-assisted quantum error correction proceeds as follows. 
We assume that Alice and Bob have pre-shared $e$ ebits.
Alice then performs the unitary encoding operation on her $k$ qubits, her half of the ebits, and the $s$ isotropic qubits.
She then sends the $n$ qubits through a noisy channel to Bob.

A key assumption in the model is that the entangled qubits held by Bob are error-free, and so the errors to be considered are of the form $E \otimes I$ where $E$ is an $n$-qubit Pauli operator and $I$ is the identity operator on Bob's $e$ qubits. Bob can thus measure the generators of $\mathcal S$, which recall are extensions of the generators of the original group $\mathcal H$, on the full $n+e$ qubits, the outcome of which allows him to determine the error syndrome, correct the error and decode the transmitted information. 

This entire situation lifts to the full blown hybrid (subspace or subsystem) code setting, using the OAQEC notion of correctability as originally laid out in \cite{Beny2007Generalization,Beny2007Quantum}. In particular, we will make use of the main error correction result of \cite{kribs2023stabilizer} which characterizes what sets of Pauli errors are correctable for such a code. For background, that result generalized the stabilizer formalism theorems of Gottesman \cite{gottesman1996class,gottesman1997stabilizer} and Poulin \cite{Poulin2005Stabilizer}, and relied on the OAQEC testable conditions from \cite{Beny2007Generalization,Beny2007Quantum}, which in turn generalized the Knill-Laflamme \cite{knill1997theory} and OQEC conditions \cite{Kribs2005Unified,Kribs2006oqec}.  
Specifically, an $n$-qubit OAQEC stabilizer code $C = C(\mathcal S, \mathcal G_0, \mathcal L_0, \mathcal T_0)$, with $\mathcal T_0 = \{ T_i \}_i$,  is correctable for a set of error operators $\{F_a\} \subseteq \mathcal P_n$ if and only if for all $a,b$,  
\begin{equation}\label{stabformoaqeccond}
F_a^\dagger F_b \notin \Big( \mathcal N(\mathcal S)\setminus \mathcal G  \Big) \bigcup \Big(  \bigcup_{i\neq j} T_i T_j^{-1}  \mathcal N(\mathcal S)   \Big) .
\end{equation}

We shall apply this result to the present setting, and in particular $n$ will be replaced by $n+e$ and the $F_a$ will be of the form $E_a \otimes I$. Further, and as noted above, observe that a coset transversal $\mathcal T \subseteq \mathcal P_{n+e}$ for an EAOAQEC code can always be obtained with elements that are noiseless on Bob's qubits; that is, we can find $\mathcal T$ such that for all $T\in \mathcal T$, there is $T^{(n)} \in \mathcal P_n$ such that $T = T^{(n)} \otimes I$. Indeed, in the generic notation above, one can choose a set of $s$ elements $\{ \overline{X}_{m-\ell + 1}, \ldots , \overline{X}_\ell \}$ of $\mathcal P_n$ based on the elements of the initial group, that anti-commute with the corresponding isotropic generator exclusively and commute with every other generator of the original subgroup (these are also called \emph{destabilizers} when the isotropic generators are seen as generators for a stabilizer group \cite{PhysRevA.70.052328}.)  
Then the following $2^m$ elements will define a transversal for the hybrid code on the extended space: 
\[
\Big\{ (\overline{X}_{1}^{a_1}\otimes I) (\overline{Z}_{1}^{b_1}\otimes I) \cdots  (\overline{X}_{m - \ell}^{a_{m - \ell}} \otimes I) (\overline{Z}_{m - \ell}^{b_{m - \ell}}\otimes I) \,
(\overline{X}_{m-\ell + 1}^{a_{m - \ell + 1}}\otimes I) \cdots  (\overline{X}_\ell^{a_\ell} \otimes I) \, \, : \, \, a_j, b_j \in\{ 0 ,1\} \Big\}; 
\]
where this follows from the fact that any product of two of these operators does not belong to the normalizer of the extended stabilizer subgroup, and hence the operators define a set of distinct cosets of maximal size. (Recall that the extended Abelian subgroup has $m$ independent generators and hence its normalizer, inside the extended Pauli group, has $2^m$ cosets.)

As a consequence, this simplifies the encoding of these codes as it means that Alice and Bob do not need to use entangled operations to generate the hybrid features of the code. It also allows for an explicit characterization of correctable sets of errors as follows. 

\begin{theorem} \label{noiselesserrorcorrecthm}
Suppose we have an $n$-qubit entanglement-assisted code $C = C(\mathcal H, \mathcal S, \mathcal G_0, \mathcal L_0, \mathcal T_0)$, with $\mathcal T_0 = \{ T_i = T_i^{(n)} \otimes I \}_i$. Then a set of errors $\{ E_a \otimes I \}$ with $E_a \in \mathcal P_n$ is correctable if and only if for all $a,b$,   
\begin{equation}\label{noiselessBobeqn}
E_a^\dagger E_b \in \Big( \langle \mathcal H_I, \mathcal G_0^{(n)}, iI \rangle \bigcup \big( \mathcal P_n \setminus \mathcal Z(\mathcal H)\big) \Big) \bigcap \Big( \mathcal P_n \setminus \big( \bigcup_{i \neq  j}  T_i^{(n)} (T_j^{(n)})^{-1} \mathcal Z(\mathcal H)  \big)  \Big),
\end{equation}
where $\mathcal Z(\mathcal H)$ is the centralizer of $\mathcal{H}$ in $\mathcal{P}_n$.
\end{theorem}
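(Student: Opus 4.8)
The plan is to deduce the statement directly from the OAQEC stabilizer correctability criterion \eqref{stabformoaqeccond} applied to the extended $(n+e)$-qubit code, and then to translate every clause appearing there, which is phrased in terms of objects living on the extended space, into an equivalent membership statement on $\mathcal P_n$. Concretely, I would invoke \eqref{stabformoaqeccond} with $n$ replaced by $n+e$ and with the error set $\{F_a = E_a\otimes I\}$, so that $F_a^\dagger F_b = (E_a^\dagger E_b)\otimes I$. Writing $W = E_a^\dagger E_b \in \mathcal P_n$, correctability is then equivalent to $W\otimes I \notin \big(\mathcal N(\mathcal S)\setminus\mathcal G\big) \cup \big(\bigcup_{i\neq j} T_iT_j^{-1}\mathcal N(\mathcal S)\big)$, and the remaining work is to rewrite each clause as a condition on $W$.

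The engine of the translation is a single observation: for any $W\in\mathcal P_n$,
\[
W\otimes I \in \mathcal N(\mathcal S) \iff W \in \mathcal Z(\mathcal H).
\]
Since $\mathcal S$ is Abelian and omits $-I$, we have $\mathcal N(\mathcal S)=\mathcal Z(\mathcal S)$, so $W\otimes I\in\mathcal N(\mathcal S)$ iff $W\otimes I$ commutes with every generator $S_j$ of $\mathcal S$. Because $W\otimes I$ acts as the identity on the $e$ ebits, the ebit tensor factor of $S_j$ contributes nothing to the commutation, so $W\otimes I$ commutes with $S_j$ exactly when $W$ commutes with the $n$-qubit restriction $\overline h_j$ of $S_j$, i.e. with the corresponding generator of $\mathcal H$. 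As Pauli operators pairwise commute or anticommute, commuting with all generators of $\mathcal H$ is the same as lying in $\mathcal Z(\mathcal H)$, which gives the claim.

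The main technical point is to characterize which operators of the form $W\otimes I$ lie in the gauge group, namely to show
\[
W\otimes I \in \mathcal G \iff W \in \langle \mathcal H_I, \mathcal G_0^{(n)}, iI\rangle.
\]
Writing a general element of $\mathcal G = \langle \mathcal S, iI, \mathcal G_0\rangle$ as $i^a\, s\, g_0$ with $s\in\mathcal S$ and $g_0\in\langle\mathcal G_0\rangle$, both the phase and $g_0$ are already identity on the ebits, so the ebit action of $i^a s g_0$ is that of $s$ alone. In the minimal extension each anticommuting pair $(\overline Z_i,\overline X_i)$ is dressed by $Z$ and $X$ on its own dedicated ebit; hence the ebit action of $s$ is trivial precisely when $s$ uses none of the symplectic generators, i.e. when $s\in\mathcal H_I$ (extended by $I$). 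Restricting to the first $n$ qubits then identifies the trivially-dressed elements of $\mathcal G$ with $\langle\mathcal H_I,\mathcal G_0^{(n)},iI\rangle$, and conversely every element of this group tensored with $I$ lies in $\mathcal G$. I expect this bookkeeping, tracking the ebit factors and the role of $\mathcal H_I$ versus $\mathcal H_S$, to be the step requiring the most care.

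With these two facts in hand the two clauses translate immediately. The condition $W\otimes I\notin\mathcal N(\mathcal S)\setminus\mathcal G$ is equivalent to $W\otimes I\notin\mathcal N(\mathcal S)$ or $W\otimes I\in\mathcal G$, which by the two displayed equivalences becomes $W\in(\mathcal P_n\setminus\mathcal Z(\mathcal H))\cup\langle\mathcal H_I,\mathcal G_0^{(n)},iI\rangle$, the first bracket of \eqref{noiselessBobeqn}. For the coset clause, using $T_i=T_i^{(n)}\otimes I$ we have $T_iT_j^{-1}=\big(T_i^{(n)}(T_j^{(n)})^{-1}\big)\otimes I$, so $W\otimes I\in T_iT_j^{-1}\mathcal N(\mathcal S)$ iff $T_j^{(n)}(T_i^{(n)})^{-1}W\in\mathcal Z(\mathcal H)$, i.e. iff $W\in T_i^{(n)}(T_j^{(n)})^{-1}\mathcal Z(\mathcal H)$; requiring this to fail for all $i\neq j$ yields the second bracket $\mathcal P_n\setminus\bigcup_{i\neq j}T_i^{(n)}(T_j^{(n)})^{-1}\mathcal Z(\mathcal H)$. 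Taking the logical conjunction of the two clauses is the intersection of the two brackets, which is exactly \eqref{noiselessBobeqn}.
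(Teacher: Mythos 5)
Your proposal is correct and follows essentially the same route as the paper: a direct application of Eq.~(\ref{stabformoaqeccond}) to the extended $(n+e)$-qubit code with errors $E_a\otimes I$, followed by a clause-by-clause translation back to $\mathcal P_n$. The only difference is that you explicitly verify the two key equivalences ($W\otimes I\in\mathcal N(\mathcal S)\iff W\in\mathcal Z(\mathcal H)$ and $W\otimes I\in\mathcal G\iff W\in\langle\mathcal H_I,\mathcal G_0^{(n)},iI\rangle$, the latter via the ebit-dressing bookkeeping) that the paper's proof leaves as assertions, which is a welcome filling-in of detail rather than a different argument.
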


\begin{proof}
This result comes as a direct application of the conditions given in Eq.~(\ref{stabformoaqeccond}). Note that we formulate the conditions of Eq.~(\ref{noiselessBobeqn}) as inclusions in the sets rather than set complements, in recognition of earlier entanglement-assisted error correction results as described in subsequent sections.  Let $E = E_a^\dagger E_b$, and note that $E \otimes I$ not belonging to the first set of Eq.~(\ref{stabformoaqeccond}) means that either $E\in \langle \mathcal H_I, \mathcal G_0^{(n)}, iI \rangle$ or $E\notin \mathcal Z(\mathcal H)$. Further, one can verify that $E\otimes I$ not belonging to the second set of Eq.~(\ref{stabformoaqeccond}) is equivalent to $E$ not belonging to the union $\bigcup_{i \neq j}  T_i^{(n)} (T_j^{(n)})^{-1}  \mathcal Z(\mathcal H)$, where this makes use of the form of transversal operators $T_i = T_i^{(n)} \otimes I$ and the construction of $\mathcal S$ from $\mathcal H$. Combining these constraints yields the intersection of sets given in Eq.~(\ref{noiselessBobeqn}),  and the result follows. 
\end{proof}

As a simple illustration of the theorem, we can apply it to the example considered in the previous section. Note in that case that $\mathcal H_I = \langle Z_3, Z_4 \rangle$ and $\mathcal G_0^{(6)} = \{ X_5, Z_5 \}$, and further in that example we have $\mathcal Z(\mathcal H)=  \langle Z_3, Z_4, X_5, Z_5, X_6, Z_6 \rangle$. For a set of error operators to be correctable, we need their products to belong to the two sets in the intersection of Eq.~(\ref{noiselessBobeqn}). Consideration of membership in the first set is straightforward. And, as $\mathcal T_0 = \{ I, X_3, X_4\}$ in this case, belonging to the second set means not belonging to the three sets $X_3^a X_4^b \mathcal Z(\mathcal H)$ for $a,b=0,1$ and $a,b$ not both zero. One can check, for instance, that any set of error operators from the sets $\mathcal H_I$, $\mathcal G_0^{(6)}$, and $\{ X_1, X_2\}$ satisfy these conditions and hence are correctable. 

For the EAOAQEC framework, Theorem~\ref{noiselesserrorcorrecthm} allows us to introduce a notion of code distance that generalizes the corresponding distances for EAQEC, EAOQEC and EACQ codes. Before doing so, let us note that Eq.~(\ref{noiselessBobeqn}) can be equivalently written with the set intersection in the middle factored through the union of the first two sets, and that the first set in the new union will not change since it is contained in $\mathcal Z(\mathcal H)$ and from the fact that the transversal operators are taken from different normalizer cosets on the extended space.  This leads to a formulation of the relevant operator set in the theorem in a way that is more convenient for defining code distances, as in the following, wherein the set considered in Eq.~(\ref{noiselessBobeqn_dist}) is the complement of the set in Eq.~(\ref{noiselessBobeqn}). 

\begin{definition}\label{def:distance}
Given an $n$-qubit entanglement-assisted (EAOAQEC) code $C = C(\mathcal H, \mathcal S, \mathcal G_0, \mathcal L_0, \mathcal T_0)$, with $\mathcal T_0 = \{ T_i = T_i^{(n)} \otimes I \}_i$, we define the {\it code distance $d(C)$ of $C$} as:  
    \begin{equation}
        d(C) = \min\mbox{wt} \left(\left(\mathcal Z(\mathcal H)\backslash \langle \mathcal H_I, \mathcal G_0^{(n)}, iI \rangle \right)\bigcup \big( \bigcup_{i \neq j}  T_i^{(n)} (T_j^{(n)})^{-1} \mathcal Z(\mathcal H)\big)\right), \label{noiselessBobeqn_dist}
    \end{equation}
    where $\min \mbox{wt}(\mathcal A)$ is the minimum of the  weight of operators in a set $\mathcal A \subseteq \mathcal P_n$.
\end{definition}

\begin{remark}
 We note that the error correction theorems and the distance definitions of stabilizer codes \cite{gottesman1997stabilizer}, EAQEC codes \cite{brun2014catalytic}, EAOQEC codes \cite{hsieh2007general}, EACQ codes \cite{kremsky2008classical}, OQEC codes \cite{Poulin2005Stabilizer} and OAQEC codes \cite{kribs2023stabilizer} are all special cases of those in \Cref{noiselesserrorcorrecthm} and \Cref{def:distance}. The code distance is also called the minimum distance of the code. The distance defined in \Cref{def:distance} is sometimes called the `dressed distance' in the literature, and this will be the distance notion we will focus on below. But for reference, the `bare distance' of an EAOAQEC code is given by: 
 \[d_{\mathrm{bare}}(C) = 
 \min\mbox{wt} \bigg(\left(\langle \mathcal H_I, \mathcal L_0^{(n)}, iI \rangle\backslash \langle \mathcal H_I,iI \rangle \right) \bigcup \big( \bigcup_{i \neq j}  T_i^{(n)} (T_j^{(n)})^{-1} \mathcal Z(\mathcal H)\big)\bigg) .
 \]
 Also note that \Cref{def:distance} is based on the operators acting on Alice's qubits and does not involve Bob's qubits as we assume Bob's qubits to be maintained error-free. When Bob's qubits are not error-free, the distance is then defined based on the Pauli group defined over Alice's and Bob's qubits; that is, 
 \[
 d_{\mathrm{noisyBob}}(C) = \min\mbox{wt} \left(\left(\mathcal Z(\mathcal S)\backslash \langle \mathcal S, \mathcal G_0, iI \rangle \right)\bigcup \big( \bigcup_{i \neq j}  T_i T_j^{-1} \mathcal Z(\mathcal S)\big)\right),
 \]
 similar to the OAQEC code distance. 
\end{remark}

\section{EAQEC and EAOQEC as Special Cases of EAOAQEC} \label{sec:EAOAQEC_special_cases}

It is readily apparent that the EAOAQEC codes that one obtains with empty gauge set ($\mathcal G_0 =  \emptyset$) and singleton transversal set ($\mathcal T_0 = \{ I \}$) are exactly the EA codes from the original framework of EAQEC \cite{brun2006correcting}. The subsystem codes from the extension of that framework  \cite{hsieh2007general} are also easily seen to be captured in the non-trivial gauge group and singleton transversal set case, as is pointed out below, along with the error correction theorem for that setting. With some more work, in the next section we will also show that the classically enhanced EA (subspace) codes from  \cite{kremsky2008classical} are obtained as a special case. Figure~\ref{fig:EAOAQEC Diagram} illustrates the hierarchy of special cases.

\begin{figure}
    \centering
    \includegraphics[width=0.65\linewidth]{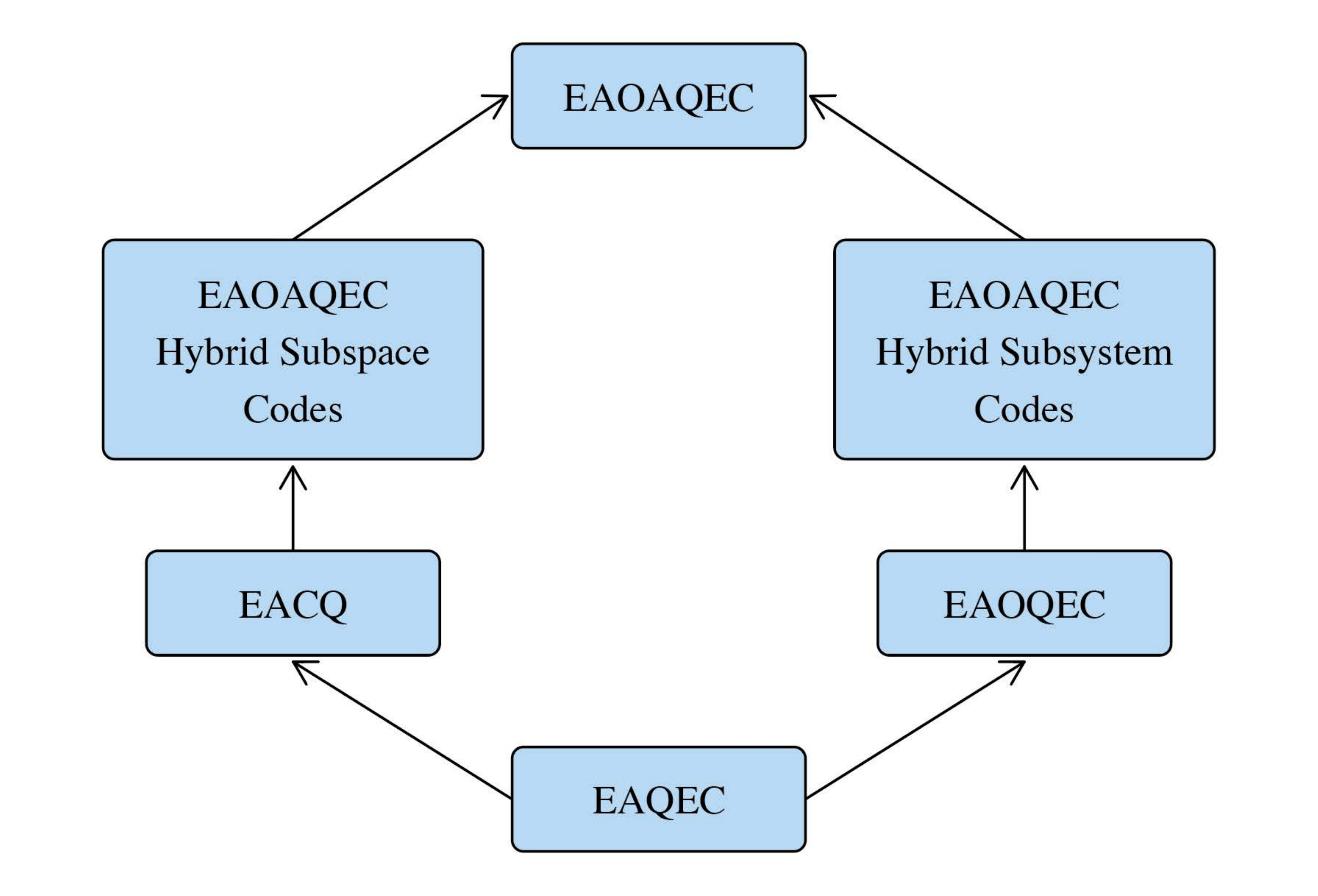}
    \caption{Hierarchy of entanglement-assisted error correction frameworks. Arrows indicate proper inclusions of code types. Conceptually the subspace and subsystem cases are typically treated separately, even though strictly speaking the former is a special case of the latter, and so reflecting this viewpoint we have not included an arrow from the top left box to the top right box. 
    }
    \label{fig:EAOAQEC Diagram}
\end{figure}

\subsection{Subsystem Codes} 

Here we briefly indicate how the general entanglement-assisted subsystem codes and the main error correction result from \cite{hsieh2007general} are obtained as special cases of EAOAQEC. In Section~7 we will return to the subsystem code case (both hybrid and non-hybrid) for further analysis and constructions. 

We will use the notation of \cite{hsieh2007general}, meshing it with ours as needed. To avoid extra notation we will simply focus on the motivating class of codes; all others can be obtained as unitary conjugations of these codes. 
As above, we will have $n= s + e + k + r$, where $s$ is the number of isotropic qubits, $e$ is the number of ebits, and then the subsystem code structure is generated with $k$ logical qubits and $r$ gauge qubits. The starting point is a (in general, non-Abelian) $n$-qubit Pauli subgroup $\mathcal H = \langle \mathcal H_{I} , \mathcal H_{S} \rangle$, with isotropic and symplectic subgroups respectively given by: 
\[
\mathcal H_{I}  =  \langle Z_1 ,  \ldots , Z_{s} \rangle  \quad \quad \mathrm{and} \quad \quad  
\mathcal H_{S}  =  \langle Z_{s+1}, X_{s+1} , \ldots , Z_{s+ e + r}, X_{s+ e + r}  \rangle ,  
\]
and where $\mathcal H_{S}$ is further divided into {\it entanglement} and {\it gauge} subgroups, 
\[
\mathcal H_{E}  =  \langle Z_{s+1}, X_{s+1} ,  \ldots , Z_{s + e}, X_{s+e} \rangle  \quad \mathrm{and} \quad  
\mathcal H_{G}  =  \langle Z_{s+e+1}, X_{s+e+1} , \ldots , Z_{s+ e + r}, X_{s+ e + r}  \rangle . 
\]

We then extend the operators of $\mathcal H$ to $(n+e)$-qubit operators $\widetilde{\mathcal H} = \langle \widetilde{\mathcal H}_I, \widetilde{\mathcal H}_{E} , \widetilde{\mathcal H}_{G}  \rangle$, with $\widetilde{\mathcal H}_I$, $\widetilde{\mathcal H}_G$ extended with identity operators and $\widetilde{\mathcal H}_E$ generators extended in the usual entanglement-assisted way. The Abelian subgroup $\mathcal S  = \langle \widetilde{\mathcal H}_I, \widetilde{\mathcal H}_{E} \rangle$ will have as its stabilizer subspace: 
\[
C = \mathrm{span} \big\{ \ket{0}^{\otimes s} \ket{\Phi}^{\otimes e} \ket{\phi} \ket{\psi} \, \, : \, \, \ket{\phi}\in (\mathbb C^2)^{\otimes r}, \, \ket{\psi}\in (\mathbb C^2)^{\otimes k}  \big\} , 
\]
where note here the entangled pairs are between qubits $(s+1)$ to $(s+e)$ and Bob's ebits. 

This gives an $(r+k)$-qubit subspace with subsystem structure generated by the gauge operators $\widetilde{\mathcal H}_{G}$, which are the logical operators on the first $r$ qubits of the code. The subsystem (`operator') quantum error correction viewpoint of the code is that any two states that differ just in the element $\ket{\phi}$ encode the same quantum information. And, as investigated in the (extensive) subsystem code literature, there is a trade-off between increased error avoidance and simplified decoding procedures that corresponds to the number of logical and gauge qubits within the fixed total $r+k$ along with other properties of a given code.    

In the notation of  \cite{hsieh2007general}, this defines an $[\![ n,k; r, e]\!]$ EAOQECC. For completeness we include the main error correction theorem of \cite{hsieh2007general} and its proof, which can be viewed as the special case of Theorem~\ref{noiselesserrorcorrecthm} (or Eq.~(\ref{stabformoaqeccond})) obtained with trivial transversal set. 

\begin{theorem} \label{eaoqecerrorcorrectionthm}
Given an $[\![n,k ; r , e]\!]$ EAOQEC code as defined above, a set of errors $\{ E_a \otimes I \}$ with $E_a \in \mathcal P_n$ is correctable if and only if for all $a,b$,   
\begin{equation}\label{eaoqecequation}
E_a^\dagger E_b \in \langle \mathcal H_{I} , \mathcal H_{G} , iI \rangle \bigcup \Big(  \mathcal P_n \setminus \mathcal Z( \langle \mathcal H_I , \mathcal H_E   \rangle )  \Big).
\end{equation} 
\end{theorem}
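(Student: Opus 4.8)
The plan is to obtain this statement as a direct specialization of Theorem~\ref{noiselesserrorcorrecthm}, taking the transversal set to be trivial. Since an EAOQEC code carries no hybrid classical component, we have $\mathcal T_0 = \{ I \}$, so the index set for the inner union $\bigcup_{i \neq j}$ in Eq.~(\ref{noiselessBobeqn}) is empty. First I would observe that an empty union is the empty set, so the second factor $\mathcal P_n \setminus \big( \bigcup_{i \neq j} T_i^{(n)} (T_j^{(n)})^{-1} \mathcal Z(\mathcal H)\big)$ collapses to all of $\mathcal P_n$, and intersecting with $\mathcal P_n$ leaves the first factor unchanged. This reduces the correctability condition to $E_a^\dagger E_b \in \langle \mathcal H_I, \mathcal G_0^{(n)}, iI \rangle \bigcup \big( \mathcal P_n \setminus \mathcal Z(\mathcal H)\big)$, and no further use of the already-proven conditions is needed.

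The remaining work is to translate the generic symbols of Theorem~\ref{noiselesserrorcorrecthm} into the notation of the present construction, and this is where the one genuine subtlety lies. In the setup of this section the symbol $\mathcal H$ denotes the full group $\langle \mathcal H_I, \mathcal H_E, \mathcal H_G \rangle$, whereas the group playing the role of $\mathcal H$ in the general framework --- namely the group whose generators are all extended (via ebits for the anti-commuting pairs, via identities for the commuting ones) to build the Abelian stabilizer $\mathcal S$ --- is only $\langle \mathcal H_I, \mathcal H_E \rangle$; the gauge subgroup $\mathcal H_G$ is extended by identities to furnish the gauge operators, so that $\mathcal G_0^{(n)} = \mathcal H_G$. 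I would therefore take care to identify the general framework's $\mathcal H$ with $\langle \mathcal H_I, \mathcal H_E \rangle$, whose isotropic part is precisely $\mathcal H_I$ and whose symplectic part is $\mathcal H_E$, so that the generic first set $\langle \mathcal H_I, \mathcal G_0^{(n)}, iI \rangle$ becomes $\langle \mathcal H_I, \mathcal H_G, iI \rangle$.

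To justify the centralizer identification cleanly, I would verify the membership criterion directly: for $E \in \mathcal P_n$, the operator $E \otimes I$ commutes with a generator $g = P \otimes Q$ of $\mathcal S$ (with $P$ acting on the $n$ physical qubits and $Q$ on Bob's ebits) exactly when $E$ commutes with $P$, and the $n$-qubit parts $P$ of the generators of $\mathcal S$ are precisely the generators of $\langle \mathcal H_I, \mathcal H_E \rangle$. Since $\mathcal N(\mathcal S) = \mathcal Z(\mathcal S)$ for a stabilizer group not containing $-I$, it follows that $E \otimes I \in \mathcal N(\mathcal S)$ iff $E \in \mathcal Z(\langle \mathcal H_I, \mathcal H_E \rangle)$, which shows that the generic $\mathcal Z(\mathcal H)$ above is exactly the EAOQEC centralizer $\mathcal Z(\langle \mathcal H_I, \mathcal H_E \rangle)$. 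Substituting $\mathcal G_0^{(n)} = \mathcal H_G$ and $\mathcal Z(\mathcal H) = \mathcal Z(\langle \mathcal H_I, \mathcal H_E \rangle)$ into the reduced condition then yields Eq.~(\ref{eaoqecequation}) verbatim.

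The main obstacle, then, is not any hard computation but rather the bookkeeping of this notational clash: one must resist reading $\mathcal Z(\mathcal H)$ as the centralizer of the full group $\langle \mathcal H_I, \mathcal H_E, \mathcal H_G \rangle$, since the gauge generators $\mathcal H_G$ are deliberately excluded from the stabilizer $\mathcal S$ and instead live in the gauge group (and $\mathcal Z(\langle \mathcal H_I, \mathcal H_E, \mathcal H_G \rangle)$ is strictly smaller than $\mathcal Z(\langle \mathcal H_I, \mathcal H_E \rangle)$). Keeping this distinction straight is the crux; once the correspondence $\mathcal H \leftrightarrow \langle \mathcal H_I, \mathcal H_E \rangle$ and $\mathcal G_0^{(n)} \leftrightarrow \mathcal H_G$ is fixed, the theorem follows immediately from the $\mathcal T_0 = \{I\}$ specialization.
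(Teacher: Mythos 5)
Your proposal is correct and takes essentially the same approach as the paper: the paper's own proof explicitly notes that the result follows either by specializing Theorem~\ref{noiselesserrorcorrecthm} to the trivial transversal $\mathcal T_0 = \{I\}$ (your route) or by applying Eq.~(\ref{stabformoaqeccond}) directly in the non-hybrid case (the variant it then writes out), and the substantive identifications are identical in both --- namely $E \otimes I \in \mathcal N(\mathcal S)$ iff $E \in \mathcal Z(\langle \mathcal H_I, \mathcal H_E \rangle)$, and the gauge generators entering the first set as $\langle \mathcal H_I, \mathcal H_G, iI \rangle$. The notational clash you flag (the section's $\mathcal H = \langle \mathcal H_I, \mathcal H_E, \mathcal H_G \rangle$ versus the framework's $\mathcal H = \langle \mathcal H_I, \mathcal H_E \rangle$, with $\mathcal G_0^{(n)} = \mathcal H_G$) is precisely what the paper's proof resolves by writing $\mathcal Z(\langle \mathcal H_I, \mathcal H_E \rangle)$ in place of the generic $\mathcal Z(\mathcal H)$, so your bookkeeping matches its intent exactly.
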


\begin{proof}
Let $E:= E_a^\dagger E_b$. We can prove this either by applying Theorem~\ref{noiselesserrorcorrecthm} or by using the general statement of Eq.~(\ref{stabformoaqeccond}). From Eq.~(\ref{stabformoaqeccond}) for instance, applied in the non-hybrid case when the second set of the union is absent, 
the relevant set of $(n+e)$-qubit operators that $E$ cannot belong to in this case is the set $\mathcal N(\mathcal S) \setminus \mathcal G$ where $\mathcal G = \langle \mathcal S, \widetilde{\mathcal H}_G , iI \rangle = \langle \widetilde{\mathcal H}, iI \rangle$. However, $E\otimes I$ does not belong to $\mathcal N(\mathcal S) = \mathcal Z(\mathcal S)$ if and only if $E\notin \mathcal Z(\langle \mathcal H_I, \mathcal H_E \rangle)$. Further, $E\otimes I \in \mathcal G$ if and only if $E\in \langle \mathcal H_I, \mathcal H_G , iI \rangle$, and hence the result follows. 
\end{proof}

\section{EACQ and Entanglement-Assisted Hybrid Subspace Codes} \label{sec:eacq}

We next focus on the subclass of EAOAQEC codes with a trivial gauge group, which we call {\it entanglement-assisted (EA) hybrid subspace codes}. We will show how the class of codes given in \cite{kremsky2008classical}, presented as a classical enhancement of entanglement-assisted quantum error correcting codes and called EACQ, can be viewed as a subclass of EA hybrid subspace codes. We will also apply Theorem~\ref{eaoqecerrorcorrectionthm} to obtain an error correction result that extends Theorem~1 of \cite{kremsky2008classical} to general EA hybrid subspace codes, and as a consequence we will prove a distance bound for these codes. 

The class of EACQ codes are obtained as follows. As before, we will use the notation of \cite{kremsky2008classical}, meshing it with ours as needed. The parameters $n$, $s$, $k$ (and here $r=0$), and $e$ are as before, with $n = s + e + k$. We introduce an additional parameter here $c$ that will describe the classical bit strings $i \in \{ 0 , 1, \ldots , 2^c -1 \}$ that can be encoded. Each $i$ can be identified, uniquely through its binary expansion, with a sequence of 0's and 1's that we will denote by $\mathbf{x_i}\in ( \mathbb Z_2 )^c$.   It is further assumed that $c = c_1 + 2c_2$ with $s\geq c_1$ and $e\geq c_2$. 

Consider the following $n$-qubit (in general non-Abelian) group $\mathcal S = \langle \mathcal S_C, \mathcal S_Q  \rangle$ generated by two subgroups, which are called the {\it classical stabilizer} $\mathcal S_C = \langle \mathcal S_{C,I} , \mathcal S_{C,S} \rangle$  and {\it quantum stabilizer} $\mathcal S_Q = \langle \mathcal S_{Q,I} , \mathcal S_{Q,S} \rangle$, and are given by: 
\[
\begin{array}{ccl}
\mathcal S_{C,I} & = & \langle Z_1 , Z_2 , \ldots , Z_{c_1} \rangle , \\ 
\mathcal S_{C,S} & = & \langle Z_{s+1}, X_{s+1} , \ldots , Z_{s+ c_2}, X_{s+ c_2}  \rangle , 
\end{array}
\]
and 
\[
\begin{array}{ccl}
\mathcal S_{Q,I} & = & \langle Z_{c_1+1} , Z_{c_1 + 2} , \ldots , Z_{s} \rangle , \\ 
\mathcal S_{Q,S} & = & \langle Z_{s+c_2+1}, X_{s+c_2+1} , \ldots , Z_{s+ e}, X_{s+ e}  \rangle .
\end{array}
\]
As in EAQEC, we can define an Abelian extension of $\mathcal S$ in the usual way, with the extended operators denoted by $\widetilde{S}$, which act on $(n+e)$-qubit  space (note there are $e$ symplectic pairs in $\mathcal S$, with $c_2$ in $\mathcal S_{C,S}$ and $e-c_2$ in $\mathcal S_{Q,S}$), the isotropic operators extended with identity operators, and the symplectic operators extended with $X$'s and $Z$'s as needed. 

The corresponding EACQ code is defined by the following encodings. First, corresponding to $i = 0$, consider the $k$-qubit subspace $C^{(0)}$ of $(n+e)$-qubit space given as follows: 
\[
C^{(0)} = \mathrm{span} \big\{ \ket{\psi}^{(0)} = \ket{0}^{\otimes s} \ket{\Phi}^{\otimes e} \ket{\phi} \, \, : \, \, \ket{\phi}\in (\mathbb C^2)^{\otimes k} \big\}, 
\]
where $\ket{\phi}$ is an arbitrary $k$-qubit state, and $\ket{\Phi} = \frac{1}{\sqrt{2}} ( \ket{00} + \ket{11})$ is the standard maximally entangled state split between each pair from Bob's $e$ qubits and qubits $s+1$ to $s+e$ on the original system. Observe that $C^{(0)}$ is the stabilized subspace of the extended (Abelian) group $\widetilde{S}$. 

Next, given two vectors ${\mathbf x} = (x_1,\ldots , x_n) , \mathbf z = (z_1,\ldots , z_n) \in (\mathbb Z_2)^n$, we define the $n$-qubit operator 
\[
N_{({\mathbf z} | {\mathbf x})} = Z^{z_1} X^{x_1} \otimes  Z^{z_2} X^{x_2} \otimes \ldots  \otimes Z^{z_n} X^{x_n}.
\] 
Then for every other $i$, define a $k$-qubit subspace $C^{(i)}$ spanned by the following $(n+e)$-qubit states, which are obtained by applying specific operators to the states of $C^{(0)}$: 
\begin{equation}\label{transoperators}
\ket{\psi}^{(i)} = \big(  N_{({\mathbf 0} | {\mathbf x}_a)}    \ket{0}^{\otimes c_1} \big)  \ket{0}^{\otimes (s- c_1) } \,  
\big[ \big(  N_{({\mathbf x}_{b_2} | {\mathbf x}_{b_1})} \otimes I \big)   \ket{\Phi}^{\otimes c_2} \big) \big] \,
  \ket{\Phi}^{\otimes (e- c_2)} \, \ket{\phi}, 
\end{equation}
where ${\mathbf x}_a \in (\mathbb Z_2)^{c_1}$ and ${\mathbf x}_{b_1} , {\mathbf x}_{b_2} \in (\mathbb Z_2)^{c_2}$, the $I$ is the identity operator acting on the first $c_2$ of Bob's qubits, and $ {\mathbf x}_i = ( {\mathbf x}_a, {\mathbf x}_{b_1}, {\mathbf x}_{b_2})$ is a splitting of the $c = c_1 +2 c_2$ coordinates of ${\mathbf x}_i$ into its $c_1$, $c_2$, and final $c_2$ coordinate listings. 

For our purposes, we will denote the operators defined by Eq.~(\ref{transoperators}) as $N_i$, so that $\ket{\psi}^{(i)} = N_i \ket{\psi}^{(0)}$. Then, as a point that is relevant below in the error correction related considerations, note that the operators $N_i$ act as the identity operator on the extra ebits. Also observe that these operators form a multiplicatively closed set up to phases. 

As stated in the proof of Theorem~1 from \cite{kremsky2008classical}, an EACQ code is thus defined by a pair of groups $\mathcal H=\langle S_Q^\prime, S_C^\prime \rangle$ and a unitary $U$ such that $S_Q^\prime = U S_Q U^\dagger$ and $S_C^\prime = U S_C U^\dagger$. The codewords are obtained by application of the unitary to the canonical states; $\Psi_i:= U \ket{\psi}^{(i)}$, which is described by the operators $T_i = U N_i U^\dagger$, and so $\Psi_i = T_i \Psi_0$. 

Observe that in Eq.~(\ref{transoperators}) for the canonical EACQ codes we thus have a family of operators $\{ N_i \}$, indexed by $i\in \{ 0, 1, \ldots , 2^c -1\}$, which act on $(n+e)$-qubit space. Notice that $C^{(i)} = N_i C^{(0)}$, and that these subspaces are mutually orthogonal. Further, observe that the sets $N_i \mathcal N(\widetilde{S})$ define different cosets for different $i$; indeed, this follows from the fact that any product $N_i N_j$, with $i \neq j$, does not commute with some element of $\widetilde{S}$ and hence does not belong to $\mathcal Z(\widetilde{S}) = \mathcal N(\widetilde{S})$. It follows, taking everything together, that we have an EAOAQEC code here with a trivial gauge group (i.e., an EA hybrid subspace code), and with the transversal operators $\{N_i \}$ encoding the hybrid classical component of the code. This structure is clearly carried over to general EACQ codes by the unitary conjugations. 

In this sense we can view EACQ codes as forming a subclass of EA hybrid subspace codes. It is clear from the EACQ construction that such codes form a proper subset though; indeed, one simple observation as noted above is that the transversal operators defined in Eq.~(\ref{transoperators}) form a group up to scalar factors, whereas the EAOAQEC construction does not make this demand (only that the transversal operators define different normalizer cosets). In fact there are other constraints as the results that follow make clear.  
To investigate this aspect, we shall say an EA hybrid subspace code is {\it representable as an EACQ code} when an EACQ code can be defined from the group generators as classical and quantum stabilizers that have the same codespace and transversal set as that of the original code. 

The following result identifies a rather restrictive necessary condition that EA hybrid subspace codes must satisfy to be EACQ representable. We give the complete characterization later in this section. 

\begin{lemma}\label{lem:eahybridsubspace}
Suppose that an entanglement-assisted hybrid subspace code defined by a group $\mathcal{H}$ and a coset transversal subset $\mathcal{T}_0$ with elements that act as the identity on the extended space, so $C(\mathcal H, \mathcal S, \mathcal G_0 = \emptyset, \mathcal L_0, \mathcal T_0)$, is representable as an EACQ code. Then the following condition is satisfied:
\begin{equation}\label{eacqsubclass1}
\mathcal{H} \subseteq \mathcal{Z}(Z(\mathcal{Z}(\langle \mathcal{T}_0^{(n)} \rangle) \cap \mathcal{H})), 
\end{equation} 
where $Z(\cdot)$ is the center of the group and $\mathcal{Z}(\cdot)$ is the centralizer of the set inside the Pauli group. Further, if Eq.~(\ref{eacqsubclass1}) is satisfied for a set of coset representatives $\mathcal T_0$, then it is satisfied for any set of representatives from the same cosets that act as the identity on the extended space. 
\end{lemma}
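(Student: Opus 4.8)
The plan is to prove both assertions by reducing to the canonical EACQ presentation of Eq.~(\ref{transoperators}) and tracking commutation relations, and then to establish the representative-independence directly at the level of the intersection $\mathcal{Z}(\langle \mathcal{T}_0^{(n)} \rangle) \cap \mathcal{H}$. Since EACQ representability means $\mathcal{H} = \langle \mathcal{S}_C, \mathcal{S}_Q \rangle$ for a classical stabilizer $\mathcal{S}_C$ and quantum stabilizer $\mathcal{S}_Q$, with $\mathcal{T}_0^{(n)}$ (up to phase) the group of sign-flip operators attached to the generators of $\mathcal{S}_C$, I would first verify the condition in the canonical frame, where $\mathcal{S}_{C,I} = \langle Z_1, \ldots, Z_{c_1}\rangle$, $\mathcal{S}_{C,S}$, $\mathcal{S}_{Q,I} = \langle Z_{c_1+1},\ldots,Z_s\rangle$ and $\mathcal{S}_{Q,S}$ all have the explicit form recalled above, and the Alice parts $N_i^{(n)}$ of the transversal operators are supported only on the classical sector (the $X$-flips on qubits $1,\ldots,c_1$ and the symplectic flips on qubits $s+1,\ldots,s+c_2$).

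The first and main step is to show $\mathcal{Z}(\langle \mathcal{T}_0^{(n)} \rangle) \cap \mathcal{H} = \mathcal{S}_Q$. The inclusion $\supseteq$ is immediate: $\mathcal{S}_Q$ is supported on the quantum sector, which is disjoint from the support of every $N_i^{(n)}$, so $\mathcal{S}_Q$ commutes with $\langle \mathcal{T}_0^{(n)}\rangle$. For $\subseteq$ I would argue that any $h \in \mathcal{H}$ with a nontrivial classical-stabilizer component fails to commute with some transversal flip, since a nonzero $\mathcal{S}_{C,I}$ component anticommutes with a corresponding $X$-flip on qubits $1,\ldots,c_1$ and a nonzero $\mathcal{S}_{C,S}$ component anticommutes with one of the symplectic flips on qubits $s+1,\ldots,s+c_2$ (here one uses that the EACQ transversal realizes independent sign-flips of every classical generator). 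Thus only the $\mathcal{S}_Q$ part survives. The remaining steps are routine commutation checks: the center $Z(\mathcal{S}_Q)$ equals $\mathcal{S}_{Q,I}$ up to phase, since the anticommuting pairs of $\mathcal{S}_{Q,S}$ contribute nothing central; and every generator of $\mathcal{H}$ commutes with $\mathcal{S}_{Q,I} = \langle Z_{c_1+1},\ldots,Z_s\rangle$ (the classical generators have disjoint or purely $Z$-overlapping support, and the symplectic quantum generators live on qubits beyond $s$). Hence $\mathcal{H} \subseteq \mathcal{Z}(\mathcal{S}_{Q,I}) = \mathcal{Z}(Z(\mathcal{Z}(\langle\mathcal{T}_0^{(n)}\rangle)\cap\mathcal{H}))$, which is Eq.~(\ref{eacqsubclass1}). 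To pass from the canonical frame to a general EACQ code I would invoke the unitary $U$ of the EACQ representation, taken to be Clifford, so that conjugation by $U$ is an automorphism of $\mathcal{P}_n$ commuting with $\langle\cdot\rangle$, $\cap$, $\mathcal{Z}(\cdot)$ and $Z(\cdot)$; the condition then transports covariantly.

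For the representative-independence claim, I would note that Eq.~(\ref{eacqsubclass1}) depends on $\mathcal{T}_0$ only through $\mathcal{K} := \mathcal{Z}(\langle \mathcal{T}_0^{(n)}\rangle) \cap \mathcal{H}$, so it suffices to show $\mathcal{K}$ is unchanged when the representatives are replaced by others from the same cosets that act as the identity on the extended space. If $T_i = T_i^{(n)} \otimes I$ and $T_i' = (T_i')^{(n)} \otimes I$ lie in the same coset of $\mathcal{N}(\mathcal{S})$, then $T_i^{-1} T_i' \in \mathcal{N}(\mathcal{S}) = \mathcal{Z}(\mathcal{S})$ is trivial on the ebits; a short check on the generators of $\mathcal{S}$ (isotropic generators extended by $I$, symplectic generators extended by $X/Z$ on ebits) shows that such an element has Alice part in $\mathcal{Z}(\mathcal{H})$. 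Writing $(T_i')^{(n)} = T_i^{(n)} g_i$ with $g_i \in \mathcal{Z}(\mathcal{H})$, any $h \in \mathcal{H}$ commutes with $g_i$, so $h$ commutes with $(T_i')^{(n)}$ if and only if it commutes with $T_i^{(n)}$; taking intersections with $\mathcal{H}$ gives $\mathcal{Z}(\langle (\mathcal{T}_0')^{(n)}\rangle) \cap \mathcal{H} = \mathcal{Z}(\langle \mathcal{T}_0^{(n)}\rangle) \cap \mathcal{H}$, so $\mathcal{K}$ is unchanged and Eq.~(\ref{eacqsubclass1}) follows for the new representatives.

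I expect the main obstacle to be the equality $\mathcal{Z}(\langle\mathcal{T}_0^{(n)}\rangle) \cap \mathcal{H} = \mathcal{S}_Q$, and specifically the $\subseteq$ inclusion: this is where one must use the precise structure of the EACQ transversal, that its Alice parts are supported entirely on the classical sector and realize independent sign-flips of all classical stabilizer generators, to rule out any stray classical-stabilizer component surviving in the centralizer. Care is also needed to state the Clifford-equivalence step cleanly, since the condition is phrased via the centralizer inside the full Pauli group; I would either insist that the EACQ unitary $U$ be Clifford, or equivalently verify the four commutation facts directly, namely that the transversal commutes with $\mathcal{S}_Q$, that $\mathcal{K} = \mathcal{S}_Q$, that $Z(\mathcal{S}_Q) = \mathcal{S}_{Q,I}$, and that $\mathcal{H}$ centralizes $\mathcal{S}_{Q,I}$, since these are structural commutation relations preserved by any Pauli-subgroup isomorphism.
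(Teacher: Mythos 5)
Your proposal is correct and follows essentially the same route as the paper's proof: reduce to the canonical EACQ presentation (justified by unitary conjugation), observe that $\mathcal{Z}(\langle \mathcal{T}_0^{(n)} \rangle) \cap \mathcal{H} = \mathcal{S}_Q$, compute $Z(\mathcal{S}_Q) = \mathcal{S}_{Q,I}$ so that $\mathcal{Z}(Z(\mathcal{S}_Q)) = \mathcal{Z}(\mathcal{S}_{Q,I}) \supseteq \mathcal{H}$, and prove representative-independence by writing new representatives as old ones times elements of $\mathcal{N}(\mathcal{S})$ whose Alice parts lie in $\mathcal{Z}(\mathcal{H})$ and hence commute with everything in $\mathcal{H}$. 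The only differences are matters of rigor in your favor: you spell out the inclusion $\mathcal{Z}(\langle \mathcal{T}_0^{(n)} \rangle) \cap \mathcal{H} \subseteq \mathcal{S}_Q$ and flag the Clifford/structural nature of the conjugation step, both of which the paper treats as immediate observations.
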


\begin{proof}
If we have an EA hybrid subspace code represented as an EACQ code, to verify Eq.~(\ref{eacqsubclass1}) it is enough to check it for the canonical EACQ codes $\mathcal H = \langle \mathcal S_Q, \mathcal S_C \rangle$ as they define the code class up to unitary conjugations. (Note we use $\mathcal H$ instead of $\mathcal S$ here for the canonical codes as for us $\mathcal S$ is used to denote the extended Abelian group.) For such codes, observe that $S_Q = \mathcal{Z}(\langle \mathcal{T}_0^{(n)} \rangle) \cap \mathcal{H}$, and so from direct calculation with the generators of $\mathcal H$, namely the generators given above for $\{ \mathcal S_{C,I}, \mathcal S_{C,S}, \mathcal S_{Q,I}, \mathcal S_{Q,S} \}$, it follows that 
\[
\mathcal{Z}(Z(\mathcal{Z}(\langle \mathcal{T}_0^{(n)} \rangle ) \cap \mathcal{H})) = \mathcal{Z} ( Z( \mathcal S_Q)) = \mathcal{Z} ( \mathcal S_{Q,I}) \supseteq \langle \mathcal S_C, \mathcal S_Q  \rangle = \mathcal H.
\]

Now we show that if the condition $\mathcal{H} \subseteq \mathcal{Z}(Z(\mathcal{Z}(\langle \mathcal{T}_0^{(n)} \rangle ) \cap \mathcal{H}))$ is satisfied for one subset of coset representatives $\mathcal T_0$, then it is satisfied for any subset of coset representatives $\mathcal T_0^\prime$ that represent the same subset of cosets of $\mathcal N(\mathcal S)$ inside $\mathcal P_{n+e}$. As these coset representatives correspond to the same coset subset, each representative in $\mathcal{T}_0$ differs from the corresponding representative in $\mathcal{T}_0'$ by an element in $\mathcal{N}(S)$. As the subsets of coset representatives for an EAOAQEC code are of the form $M \otimes I$, the elements in $\mathcal{N}(S)$ by which the corresponding representatives differ is of the form $N \otimes I$, where $N$ belongs to $\mathcal{Z}(\mathcal H)$ (as it commutes with the operator over the first $n$ qubits of the stabilizers). 

Now let $L \in \mathcal{Z}(\langle \mathcal{T}_0^{(n)} \rangle) \cap \mathcal{H}$. Let $M'\in \mathcal{T}_0^{\prime (n)}$ and find $M\in \mathcal{T}_0^{(n)}$, $N\in \mathcal Z(\mathcal H)$ such that $M^\prime = MN$. As $N \in \mathcal{Z}(\mathcal H)$ and $L \in \mathcal{H}$, we have $NL = LN$. As $L \in \mathcal{Z}(\langle \mathcal{T}_0^{(n)} \rangle) \cap \mathcal{H}$, we obtain $\mathcal{Z}(\langle \mathcal{T}_0^{(n)} \rangle) \cap \mathcal{H}\subseteq \mathcal{Z}(\langle \mathcal{T}_0^{\prime(n)} \rangle) \cap \mathcal{H}$ since $LM = ML$ implies $LMN = MLN$ which in turn implies $LMN = MNL$ and $LM' = M'L$. Exchanging $\mathcal{T}_0$ and $\mathcal{T}_0'$, we obtain $\mathcal{Z}(\langle \mathcal{T}_0^{\prime (n)} \rangle)\cap \mathcal{H} \subseteq \mathcal{Z}(\langle \mathcal{T}_0^{(n)} \rangle)\cap \mathcal{H}$; hence,  $\mathcal{Z}(\langle \mathcal{T}_0^{(n)} \rangle)\cap \mathcal{H} = \mathcal{Z}(\langle \mathcal{T}_0^{\prime (n)} \rangle)\cap \mathcal{H}$. Thus, if $\mathcal{H} \subseteq \mathcal{Z}(Z(\mathcal{Z}(\langle \mathcal{T}_0^{(n)} \rangle) \cap \mathcal{H}))$ holds true for a choice of $\mathcal{T}_0$, it holds true for any choice of $\mathcal{T}_0'$ which corresponds to the same subset of cosets.
\end{proof}

Taking motivation from Lemma~\ref{lem:eahybridsubspace} and the EACQ construction, in what follows if we are given a group $\mathcal H$ and transversal subset $\mathcal T_0$, we will define the {\it quantum stabilizer subgroup} to be the group $S_Q := \mathcal{Z}(\langle \mathcal{T}_0^{(n)} \rangle) \cap \mathcal{H}$.
We next give an example of an EA hybrid subspace code that is not EACQ representable based on the previous result.  

\begin{example}
Let us consider a 7-qubit and 1-ebit EA hybrid subspace code with the following minimal stabilizer generators and coset transversal subset: $\mathcal H = \langle \mathcal H_I , \mathcal H_S \rangle$ where, 
\begin{align*}   
    \mathcal{H}_S &= \langle S_1:=Z_1Z_2Z_3Z_4X_6X_7, S_2:=Y_3X_4Y_5Z_7 \rangle ,\\
    \mathcal{H}_I &= \langle S_3:=X_3X_4X_5\rangle, 
\end{align*}
and $\mathcal T_0^{(7)} = \langle T_1^{(7)}, T_2^{(7)} \rangle$ where, 
\begin{align*}   
T_1^{(7)} &:= Z_1Z_4Z_7, \\
T_2^{(7)} &:= X_1X_3X_4X_6.
\end{align*}
The following matrix depicts the commutativity of the minimal generators of $\mathcal{T}_0$ with the stabilizers where $0$ and $1$ respectively depict commuting and anti-commuting operators:
\[
    \begin{array}{c||ccc}
    & S_1 & S_2 & S_3\\
\hhline{====}
    T_1^{(7)} & 1 & 1 & 1\\
\hhline{----}
    T_2^{(7)} & 1 & 1 & 0\\
\hhline{====}
    \end{array}
\]

In equation form, this matrix describes the relations $T_i^{(7)} S_j = (-1)^{m_{i, j}} S_j T_i^{(7)}$, where $1\leq i \leq 2$, $1\leq j \leq 3$, and $m_{i, j}$ are the matrix entries.  

We note that the coset transversal operators $T_1^{(7)}$ and $T_2^{(7)}$ have the same commutativity relations with $S_1$ and $S_2$. Thus, $S_1S_2$ will commute with coset transversal operators $T_1^{(7)}$ and $T_2^{(7)}$, and hence is a quantum stabilizer. But $T_1^{(7)}$ anticommutes with both $S_2$ and $S_3$, and so $S_2S_3$ will commute with $T_1^{(7)}$. So we can update the symplectic generators of $\mathcal{H}$ as follows:
\begin{align*}
    \mathcal{H}_S &= \langle S_{12}:=Z_1Z_2X_3Y_4Y_5X_6Y_7, S_{23}:=Z_3Z_5Z_7 \rangle ,\\
    \mathcal{H}_I &= \langle S_3 = X_3X_4X_5\rangle.
\end{align*}

The commutativity relations of the updated minimal generators of $\mathcal{T}_0$ with the stabilizers are depicted in matrix form as follows:
 \[
    \begin{array}{c||ccc}
    & S_{12} & S_{23} & S_3\\
\hhline{====}
    T_1^{(7)} & 0 & 0 & 1\\
\hhline{----}
    T_2^{(7)} & 0 & 1 & 0\\
\hhline{====}
    \end{array}
\]
As $S_{12}$ commutes with all coset transversals, it is a quantum stabilizer. As $S_{23}$ and $S_3$ do not commute with at least one coset transversal, they can be used as classical stabilizers. 
(This idea will be further elucidated in the proof of Theorem~\ref{eahybridsubspacethm} below.)

We note in this case that $Z(\mathcal{Z}(\mathcal{T}_0^{(7)}) \cap \mathcal{H}) = \mathcal{Z}(\mathcal{T}_0^{(7)}) \cap \mathcal{H} = \langle S_{12}\rangle$. As $S_{23} \notin \mathcal{Z}(Z(\mathcal{Z}(\mathcal{T}_0^{(7)}) \cap \mathcal{H}))$, it follows that this code is not EACQ representable. 
\end{example}

In Section~\ref{eacqsubsection} we will establish the full characterization of EACQ as a subclass. The proof is somewhat technical, so let us first consider error correction conditions for these codes without much extra work. We first show how Theorem~\ref{noiselesserrorcorrecthm} applied to EACQ codes reduces to the main error correction theorem from \cite{kremsky2008classical} in that case, and then we prove a general distance bound for EA hybrid subspace codes. 

\begin{theorem} \label{eacqerrorcorrectionthm}
Given an $n$-qubit EACQ code, a set of errors $\{ E_a \otimes I \}$ with $E_a \in \mathcal P_n$ is correctable if and only if for all $a,b$,   
\begin{equation}\label{eacqequation}
E_a^\dagger E_b \in \langle \mathcal S_{Q,I} , \mathcal S_{C,I} \rangle \bigcup \Big(  \mathcal P_n \setminus \mathcal N(\mathcal S_Q)  \Big),
\end{equation} 
where $\mathcal N(\mathcal S_Q) $ is the normalizer of $\mathcal S_Q$ inside $\mathcal P_n$. 
\end{theorem}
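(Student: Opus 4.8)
The plan is to derive Theorem~\ref{eacqerrorcorrectionthm} as a direct specialization of Theorem~\ref{noiselesserrorcorrecthm} to the EACQ setting, where the gauge group is trivial ($\mathcal{G}_0 = \emptyset$), $r=0$, and the transversal operators are the $N_i$ defined in Eq.~(\ref{transoperators}). Since EACQ codes are defined up to unitary conjugation, it suffices to verify the claim for the canonical codes with $\mathcal{H} = \langle \mathcal{S}_Q, \mathcal{S}_C \rangle$; the general case then follows by conjugating both sides of the membership condition by the fixed unitary $U$, which preserves all the relevant group-theoretic and normalizer structure. So I would state at the outset that we work with the canonical generators given above for $\{\mathcal{S}_{C,I}, \mathcal{S}_{C,S}, \mathcal{S}_{Q,I}, \mathcal{S}_{Q,S}\}$.

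First I would identify each piece of Eq.~(\ref{noiselessBobeqn}) in the EACQ language. Since $\mathcal{G}_0 = \emptyset$, the group $\langle \mathcal{H}_I, \mathcal{G}_0^{(n)}, iI\rangle$ collapses to $\langle \mathcal{H}_I, iI\rangle$, and from the canonical generators $\mathcal{H}_I = \langle \mathcal{S}_{C,I}, \mathcal{S}_{Q,I}\rangle$, matching the first term $\langle \mathcal{S}_{Q,I}, \mathcal{S}_{C,I}\rangle$ in Eq.~(\ref{eacqequation}) (suppressing the phase factor $\langle iI\rangle$, which is immaterial for Pauli error products and correctability). The remaining work is to show that the full set appearing as the complement in Eq.~(\ref{noiselessBobeqn}) — the intersection of $\big(\langle \mathcal{H}_I, iI\rangle \cup (\mathcal{P}_n \setminus \mathcal{Z}(\mathcal{H}))\big)$ with $\mathcal{P}_n \setminus \bigcup_{i\neq j} T_i^{(n)}(T_j^{(n)})^{-1}\mathcal{Z}(\mathcal{H})$ — reduces precisely to $\langle \mathcal{S}_{Q,I}, \mathcal{S}_{C,I}\rangle \cup (\mathcal{P}_n \setminus \mathcal{N}(\mathcal{S}_Q))$. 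The key structural identity I expect to need is that the union of the transversal-coset-difference sets, together with the complement of $\mathcal{Z}(\mathcal{H})$, combines to give exactly the complement of $\mathcal{N}(\mathcal{S}_Q) = \mathcal{Z}(\mathcal{S}_Q)$. This is where the special form of the $N_i$ enters: because the transversal operators correspond to flipping the isotropic and symplectic generators that distinguish the classical stabilizers $\mathcal{S}_C$ from the quantum stabilizers $\mathcal{S}_Q$, the cosets $T_i^{(n)}(T_j^{(n)})^{-1}\mathcal{Z}(\mathcal{H})$ sweep out precisely the elements of $\mathcal{Z}(\mathcal{S}_Q)$ that fail to lie in $\mathcal{Z}(\mathcal{H})$, i.e. the elements that commute with the quantum stabilizer but anticommute with some classical stabilizer.

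The main obstacle I anticipate is carefully establishing this sweeping-out claim: that $\bigcup_{i \neq j} T_i^{(n)} (T_j^{(n)})^{-1} \mathcal{Z}(\mathcal{H}) \cup (\mathcal{P}_n \setminus \mathcal{Z}(\mathcal{H}))$ equals $\mathcal{P}_n \setminus \mathcal{N}(\mathcal{S}_Q)$ modulo the exception of the isotropic part. Concretely, an element $E$ is \emph{excluded} (i.e. lies in the forbidden set of Eq.~(\ref{eacqequation}), namely $\mathcal{N}(\mathcal{S}_Q) \setminus \langle \mathcal{S}_{Q,I}, \mathcal{S}_{C,I}\rangle$) exactly when $E$ commutes with all of $\mathcal{S}_Q$ but is a genuine logical/classical-distinguishing operator. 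I would prove the equivalence by the following dichotomy on $E = E_a^\dagger E_b$: if $E \notin \mathcal{N}(\mathcal{S}_Q)$ then $E$ anticommutes with some quantum stabilizer generator, which I would show forces $E$ out of $\mathcal{Z}(\mathcal{H})$ as well (since $\mathcal{S}_Q \subseteq \mathcal{H}$) and simultaneously out of every coset-difference set, placing it safely in the correctable set of Eq.~(\ref{noiselessBobeqn}); conversely if $E \in \mathcal{N}(\mathcal{S}_Q)$, I would analyze its commutation pattern against $\mathcal{S}_C$ to determine which transversal coset-difference $T_i^{(n)}(T_j^{(n)})^{-1}\mathcal{Z}(\mathcal{H})$ it lands in, concluding it is correctable precisely when it lies in the pure isotropic group $\langle \mathcal{S}_{Q,I}, \mathcal{S}_{C,I}\rangle$. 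The delicate bookkeeping — matching the index pairs $(i,j)$ of the $2^c$ transversal operators to the syndrome pattern of $E$ against the classical stabilizer generators, and confirming the claimed coset equality holds at the level of the canonical $N_i$ of Eq.~(\ref{transoperators}) — is the technical heart, and I would reduce it to the symplectic/isotropic generator structure so that the commutation relations can be read off directly from the canonical generator tables.
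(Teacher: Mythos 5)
Your proposal is correct and takes essentially the same route as the paper: specialize \Cref{noiselesserrorcorrecthm} to the canonical codes (the general case following by unitary conjugation), identify the isotropic subgroup of $\mathcal H$ as $\langle \mathcal S_{Q,I}, \mathcal S_{C,I}\rangle$, and prove that the transversal cosets sweep out all of $\mathcal N(\mathcal S_Q)$ --- which the paper records as the identity $\mathcal N(\mathcal S_Q) = \mathcal Z(\mathcal S_Q) = \bigcup_{i,j} T_i^{(n)}(T_j^{(n)})^{-1}\mathcal Z(\mathcal H)$ over \emph{all} pairs $(i,j)$, using that $\{T_i^{(n)}\}_i$ is multiplicatively closed up to phases --- before factoring the intersection in Eq.~(\ref{noiselessBobeqn}) through the union, exactly as in your closing dichotomy.

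One slip to fix: your restatement of the sweeping-out claim in the final paragraph, namely that $\bigcup_{i\neq j} T_i^{(n)}(T_j^{(n)})^{-1}\mathcal Z(\mathcal H)\,\cup\,\big(\mathcal P_n\setminus\mathcal Z(\mathcal H)\big)$ equals $\mathcal P_n\setminus\mathcal N(\mathcal S_Q)$, is false as written: each set $T_i^{(n)}(T_j^{(n)})^{-1}\mathcal Z(\mathcal H)$ with $i\neq j$ is a coset of $\mathcal Z(\mathcal H)$ distinct from $\mathcal Z(\mathcal H)$, hence already contained in $\mathcal P_n\setminus\mathcal Z(\mathcal H)$, so your left-hand side is all of $\mathcal P_n\setminus\mathcal Z(\mathcal H)$, which strictly contains $\mathcal P_n\setminus\mathcal Z(\mathcal S_Q)$ whenever the classical component is nontrivial. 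The correct complemented form is $\mathcal P_n\setminus\mathcal N(\mathcal S_Q) = \big(\mathcal P_n\setminus\mathcal Z(\mathcal H)\big)\cap\big(\mathcal P_n\setminus\bigcup_{i\neq j} T_i^{(n)}(T_j^{(n)})^{-1}\mathcal Z(\mathcal H)\big)$; since this is what your (correct) second-paragraph formulation and your final case analysis actually rely on, no substantive repair is needed beyond restating that identity properly.
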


\begin{proof}
We can prove this as a direct consequence of Theorem~\ref{noiselesserrorcorrecthm} applied to the canonical EACQ codes given above (which recall define arbitrary EACQ codes up to unitary conjugation of the classical and quantum stabilizer groups). As two technical points on the transversal operators, recall they are of the form $T_i = T_i^{(n)} \otimes I$ for some $n$-qubit operators $T_i^{(n)}$, so that they fit into the setting covered by Theorem~\ref{noiselesserrorcorrecthm}, and also observe that the operators $\{T_i^{(n)}\}_i$ form a multiplicatively closed set up to phases in this case. 

One can verify for this class of codes that 
\[
\mathcal N(\mathcal S_Q) = \mathcal Z(\mathcal S_Q) = \bigcup_{i , j}   T_i^{(n)} (T_j^{(n)})^{-1} \mathcal Z(\mathcal H) ,
\]
and that the isotropic subgroup here is generated by $\mathcal S_{Q,I}$ and $\mathcal S_{C,I}$. 
The conditions of Eq.~(\ref{eacqequation}) thus follow from Theorem~\ref{noiselesserrorcorrecthm}, which can be seen by factoring the set intersection in the middle of Eq.~(\ref{noiselessBobeqn}) through the union of the first two sets and then observing that for EACQ codes the isotropic subgroup is already contained in the last set of that equation.  
\end{proof}

Recall the code distance $d(C)$ defined above, which applies to any EAOAQEC code, including all the subclasses. 

\begin{theorem}\label{coro:distance_EAHybridSubspace_S_Q}
    Let $C$ be an EA hybrid subspace code defined by $\mathcal{H}$ and the coset transversal subset $\mathcal{T}_0$. Let $\mathcal S_Q = \mathcal{Z}(\langle \mathcal{T}_0^{(n)} \rangle) \cap \mathcal H$ and let $C_{\mathcal S_Q}$ be an EAQEC code defined by the group $\mathcal S_Q$. Then, $d(C) \geq d(C_{\mathcal S_Q})$.
\end{theorem}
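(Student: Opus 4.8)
The plan is to prove the inequality by a set-containment argument at the level of the operator sets whose minimum weight defines each distance. Writing $D(C)$ for the set appearing in \Cref{def:distance} with $\mathcal G_0=\emptyset$, namely
\[
D(C)=\big(\mathcal Z(\mathcal H)\setminus\langle\mathcal H_I,iI\rangle\big)\ \cup\ \Big(\bigcup_{i\neq j}T_i^{(n)}(T_j^{(n)})^{-1}\mathcal Z(\mathcal H)\Big),
\]
and $D(C_{\mathcal S_Q})=\mathcal Z(\mathcal S_Q)\setminus\langle(\mathcal S_Q)_I,iI\rangle$ for the plain EAQEC code, it suffices to show $D(C)\subseteq D(C_{\mathcal S_Q})$: since $\min\mathrm{wt}$ is monotone decreasing under set inclusion, this yields $d(C)=\min\mathrm{wt}(D(C))\ge\min\mathrm{wt}(D(C_{\mathcal S_Q}))=d(C_{\mathcal S_Q})$. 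The whole proof thus reduces to a membership calculation, which I would organize around the two ``pieces'' of $D(C)$.

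First I would record two elementary inclusions into $\mathcal Z(\mathcal S_Q)$. Because $\mathcal S_Q=\mathcal Z(\langle\mathcal T_0^{(n)}\rangle)\cap\mathcal H\subseteq\mathcal H$, taking centralizers reverses inclusion and gives $\mathcal Z(\mathcal H)\subseteq\mathcal Z(\mathcal S_Q)$; and because $\mathcal S_Q\subseteq\mathcal Z(\langle\mathcal T_0^{(n)}\rangle)$, commuting is symmetric, so $\langle\mathcal T_0^{(n)}\rangle\subseteq\mathcal Z(\mathcal S_Q)$. As $\mathcal Z(\mathcal S_Q)$ is a group, every $T_i^{(n)}(T_j^{(n)})^{-1}g$ with $g\in\mathcal Z(\mathcal H)$ lies in $\mathcal Z(\mathcal S_Q)$, so both pieces of $D(C)$ sit inside $\mathcal Z(\mathcal S_Q)$. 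The remaining and substantive task is to show that neither piece meets $\langle(\mathcal S_Q)_I,iI\rangle$. Here I would use the structural fact (true for any Pauli subgroup) that the isotropic subgroup equals the center up to phase, so $\langle(\mathcal S_Q)_I,iI\rangle=Z(\langle\mathcal S_Q,iI\rangle)$ and likewise $\langle\mathcal H_I,iI\rangle=Z(\langle\mathcal H,iI\rangle)$.

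For the first piece this is quick: if $E\in\mathcal Z(\mathcal H)$ also lay in $\langle(\mathcal S_Q)_I,iI\rangle$, then up to phase $E\in\mathcal S_Q\subseteq\mathcal H$ while $E$ commutes with all of $\mathcal H$, forcing $E\in Z(\langle\mathcal H,iI\rangle)=\langle\mathcal H_I,iI\rangle$; hence $E$ is excluded from $\mathcal Z(\mathcal H)\setminus\langle\mathcal H_I,iI\rangle$, so that piece is disjoint from $\langle(\mathcal S_Q)_I,iI\rangle$ and therefore contained in $D(C_{\mathcal S_Q})$.

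The hard part, which I expect to be the main obstacle, is the second piece: showing $T_i^{(n)}(T_j^{(n)})^{-1}g\notin\langle(\mathcal S_Q)_I,iI\rangle$ whenever $i\neq j$ and $g\in\mathcal Z(\mathcal H)$. Suppose such an element were, up to phase, a central element $x$ of $\mathcal S_Q$. Since $g\in\mathcal Z(\mathcal H)$ commutes with every generator of $\mathcal H$, the operators $x$ and $T_i^{(n)}(T_j^{(n)})^{-1}$ carry identical commutation patterns against $\mathcal H$, i.e.\ lie in the same $\mathcal N(\mathcal S)$-coset on the extended space; but $i\neq j$ forces that coset to be nontrivial, so $x\notin\mathcal Z(\mathcal H)$. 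The crux is then to contradict ``$x$ is central in $\mathcal S_Q$, yet anticommutes with some element of $\mathcal H$ and represents a nontrivial transversal coset.'' This is precisely where the interplay between $\mathcal T_0$ and $\mathcal S_Q=\mathcal Z(\langle\mathcal T_0^{(n)}\rangle)\cap\mathcal H$ must be used: the transversal operators commute with all of $\mathcal S_Q$, so I would argue that a central (hence isotropic) $x\in\mathcal S_Q$ sharing a coset with a product of transversal operators cannot realize the nontrivial syndrome those operators detect unless $x\in\mathcal Z(\mathcal H)$, contradicting the previous line. I anticipate this is where care is needed: the step genuinely uses that the classical-encoding operators act transversally to the quantum stabilizer (the same structural input isolated in \Cref{lem:eahybridsubspace}), and a naive version of the claim can fail for pathologically chosen transversals whose differences collapse into the center of $\mathcal S_Q$, so the argument should invoke the canonical EACQ-type relationship between $\mathcal T_0$ and $\mathcal S_Q$ rather than pure group theory alone.
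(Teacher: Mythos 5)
Your overall strategy --- reducing the theorem to the set inclusion $D(C)\subseteq D(C_{\mathcal S_Q})$ and checking the two pieces of the distance set separately --- is exactly the paper's strategy, and your treatment of the first piece (that $\mathcal Z(\mathcal H)\setminus\langle\mathcal H_I,iI\rangle$ lands inside $\mathcal Z(\mathcal S_Q)\setminus\langle(\mathcal S_Q)_I,iI\rangle$, via the fact that an element of $\mathcal S_Q$ commuting with all of $\mathcal H$ must lie in $\langle\mathcal H_I,iI\rangle$) is correct and matches the paper's argument. The genuine gap is the step you flagged and left open: proving that the coset part $\bigcup_{i\neq j}T_i^{(n)}(T_j^{(n)})^{-1}\mathcal Z(\mathcal H)$ avoids $\langle(\mathcal S_Q)_I,iI\rangle$. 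You cannot close this step by any argument, because for general EA hybrid subspace codes the claim is false; your worry about ``transversals whose differences collapse into the center of $\mathcal S_Q$'' is realized by perfectly legitimate codes inside the framework, not by pathologies outside it. Concretely, let $\mathcal S_{\mathrm{Shor}}$ be the stabilizer group of the $[\![9,1,3]\!]$ Shor code, put $\mathcal H=\langle\mathcal S_{\mathrm{Shor}},X_1\rangle$, so that $\mathcal H$ has the single symplectic pair $(Z_1Z_2,\,X_1)$ and $e=1$, and take $\mathcal T_0=\{I,\;T_1=Z_1Z_2\otimes I\}$. This $T_1$ represents a nontrivial $\mathcal N(\mathcal S)$-coset (it anticommutes with the extended stabilizer $X_1\otimes X$), and it is in fact one of the representatives appearing in the paper's own generic transversal, which contains $\overline Z_j\otimes I$ for every symplectic generator $\overline Z_j$ of $\mathcal H$. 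Then $\mathcal S_Q=\mathcal Z(\langle Z_1Z_2\rangle)\cap\mathcal H$ equals $\mathcal S_{\mathrm{Shor}}$ up to phases, so $d(C_{\mathcal S_Q})=3$; but $Z_1Z_2=T_1^{(n)}\cdot I\in T_1^{(n)}\mathcal Z(\mathcal H)$ lies in the distance set of $C$, giving $d(C)\le 2<3=d(C_{\mathcal S_Q})$. Here $Z_1Z_2$ is central in $\mathcal S_Q$ yet anticommutes with $X_1\in\mathcal H$ and represents the nontrivial coset --- precisely the configuration your contradiction argument would need to exclude, and it cannot be excluded.

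You should also know that the paper's own proof stumbles at this same point, so your refusal to make the leap was sound. The paper concludes that the coset part lies in $\mathcal Z(\mathcal S_Q)\setminus\langle\mathcal S_Q^{(I)},iI\rangle$ from the two observations that it lies in $\mathcal Z(\mathcal S_Q)$ and is disjoint from $\mathcal Z(\mathcal H)$; that inference tacitly assumes $\langle\mathcal S_Q^{(I)},iI\rangle\subseteq\mathcal Z(\mathcal H)$, which fails in general because the isotropic part of $\mathcal S_Q$ need not commute with elements of $\mathcal H\setminus\mathcal S_Q$ (in the example above, $Z_1Z_2\in\mathcal S_Q^{(I)}$ anticommutes with $X_1$). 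Consequently the statement needs an extra hypothesis to be true, for instance that $T_i^{(n)}(T_j^{(n)})^{-1}\mathcal Z(\mathcal H)\cap\langle\mathcal S_Q^{(I)},iI\rangle=\emptyset$ for all $i\neq j$, i.e.\ that the classical cosets remain distinguishable relative to $\mathcal S_Q$. Your closing instinct --- that the argument must use an EACQ-type transversality of $\mathcal T_0$ relative to $\mathcal S_Q$ --- identifies the correct repair: for canonical EACQ codes the differences $N_iN_j^{-1}$ act nontrivially on qubits where both $\mathcal S_{Q,I}$ and $\mathcal Z(\mathcal H)$ act trivially, so the needed disjointness holds automatically and \Cref{coro:distance_EACQ_S_Q} is safe; but this structure is an additional assumption, not something available for a general EA hybrid subspace code as \Cref{coro:distance_EAHybridSubspace_S_Q} is stated.
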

\begin{proof}
     Let $\mathcal S_Q = \mathcal S_Q^{(I)} \cup \mathcal S_Q^{(S)}$ be an isotropic-symplectic subgroup decomposition of $\mathcal S_Q$. From the generalized distance definition \eqref{noiselessBobeqn_dist}, the distance for the code $S_Q$ is given by 
     \begin{equation}
         d(C_{\mathcal{S}_Q}) = \text{min wt}\left(\mathcal{Z}(\mathcal S_Q) \setminus \mathcal \langle \mathcal{S}^{(I)}_Q, iI\rangle \right).
     \end{equation}
     From the definition of $\mathcal{S}_Q$, one has $T_i^{(n)}(T_j^{(n)})^{-1}\mathcal{Z}(\mathcal{H})\subseteq \mathcal{Z}(\mathcal S_Q)$ for all integer $i$ and $j$. Furthemore, since $T_i^{(n)}(T_j^{(n)})^{-1}\notin \mathcal{Z}(\mathcal{H})$ for all $i\neq j$; one can infer that $T_i^{(n)}(T_j^{(n)})^{-1}E\notin \mathcal{Z}(\mathcal{H})$ for all $E\in \mathcal{Z}(\mathcal{H})$. The arguments above lead to 
     \begin{equation}
         \bigcup_{i\neq j} T_i^{(n)}(T_j^{(n)})^{-1}\mathcal{Z}(\mathcal{H})\subseteq \mathcal{Z}(\mathcal S_Q)\setminus \langle \mathcal{S}_Q^{(I)}, iI \rangle.
     \end{equation}
     To complete the proof, we need to show that 
     \begin{equation} \label{eq:to_complete}
     \mathcal{Z}(\mathcal H)\setminus \langle \mathcal{H}_I, iI \rangle\subseteq \mathcal{Z}(\mathcal S_Q)\setminus \langle \mathcal{S}_Q^{(I)}, iI \rangle\end{equation}
     Indeed, note that $\mathcal{S}_Q\subseteq \mathcal{H}$ implies that $\mathcal{P}_n\setminus \mathcal{Z}(\mathcal{S_Q})\subseteq \mathcal{P}_n\setminus \mathcal{Z}(\mathcal{H})$ and $\langle \mathcal{S}_Q^{(I)}, iI\rangle \subseteq \langle \mathcal{H}_I, iI\rangle\cup (\mathcal{P}_n\setminus \mathcal{Z}(\mathcal{H}))$;
     altogether, we get 
     \begin{equation*}
        \langle \mathcal{S}_Q^{(I)}, iI\rangle \cup (\mathcal{P}_n\setminus \mathcal{Z}(\mathcal{S_Q}))\subseteq \langle \mathcal{H}_I, iI\rangle \cup (\mathcal{P}_n\setminus \mathcal{Z}(\mathcal{H}));
     \end{equation*}
     which gives Eq. \eqref{eq:to_complete} by set complement. In summary 
     \begin{equation}
        \left(\mathcal Z(\mathcal H)\backslash \langle \mathcal H_I, iI \rangle \right)\bigcup \big( \bigcup_{i \neq j}  T_i^{(n)} (T_j^{(n)})^{-1} \mathcal Z(\mathcal H)\big) \subseteq \mathcal{Z}(\mathcal S_Q) \setminus \mathcal \langle \mathcal{S}^{I}_Q, iI\rangle.
     \end{equation}
     Hence, $d(C_{\mathcal{S}_Q})\leq d(C)$ . %
\end{proof}

This result appears to be new for the class of EACQ codes, and so we state that case explicitly. 

\begin{corollary} \label{coro:distance_EACQ_S_Q}
    Let $C$ be an EACQ code defined by quantum and classical stabilizer groups $\mathcal S_Q$ and $\mathcal S_C$, respectively. Let $C_{\mathcal S_Q}$ be an EAQEC code defined by the group $\mathcal S_Q$. Then, $d(C) \geq d(C_{\mathcal S_Q})$.
\end{corollary}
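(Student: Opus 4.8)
The plan is to derive Corollary~\ref{coro:distance_EACQ_S_Q} as an immediate specialization of Theorem~\ref{coro:distance_EAHybridSubspace_S_Q}. The key observation is that an EACQ code is, by the discussion preceding Theorem~\ref{eacqerrorcorrectionthm}, a particular instance of an EA hybrid subspace code: it has trivial gauge group ($\mathcal G_0 = \emptyset$, $r=0$), its transversal operators $\{N_i\}$ (equivalently $\{T_i^{(n)}\otimes I\}$ after unitary conjugation) act as the identity on the extended ebits, and the defining group is $\mathcal H = \langle \mathcal S_Q, \mathcal S_C\rangle$. So the first step is simply to invoke that identification, noting that the canonical EACQ codes define the whole class up to unitary conjugation, and that distance is invariant under such conjugations since it is defined purely through weights of Pauli operators (and unitary conjugation by a Clifford-type operator implementing a Pauli group isomorphism preserves the relevant set-theoretic structure in Eq.~(\ref{noiselessBobeqn_dist})).

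Second, I would verify that the quantum stabilizer group $\mathcal S_Q$ appearing in the EACQ definition coincides with the group $\mathcal S_Q = \mathcal Z(\langle \mathcal T_0^{(n)}\rangle) \cap \mathcal H$ used in Theorem~\ref{coro:distance_EAHybridSubspace_S_Q}. This is exactly the content established in the proof of Lemma~\ref{lem:eahybridsubspace}, where it is observed that $S_Q = \mathcal Z(\langle \mathcal T_0^{(n)}\rangle) \cap \mathcal H$ for the canonical EACQ codes. Hence the two notions of $\mathcal S_Q$ agree, and the EAQEC code $C_{\mathcal S_Q}$ in the corollary is precisely the code $C_{\mathcal S_Q}$ in the theorem.

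With these two identifications in place, the conclusion $d(C) \geq d(C_{\mathcal S_Q})$ follows directly by applying Theorem~\ref{coro:distance_EAHybridSubspace_S_Q} to the EA hybrid subspace code associated to the given EACQ code. No new estimate is required; the inequality is inherited verbatim.

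I expect the only real subtlety to be the bookkeeping in the first step: one must check that restricting to the \emph{canonical} EACQ codes loses no generality for the distance inequality, i.e.\ that both $d(C)$ and $d(C_{\mathcal S_Q})$ transform consistently under the unitary $U$ relating a general EACQ code to its canonical form. Since this $U$ implements a Pauli subgroup isomorphism (conjugating $\mathcal S_Q, \mathcal S_C$ to $\mathcal S_Q', \mathcal S_C'$), it maps the operator sets in Eq.~(\ref{noiselessBobeqn_dist}) for $C$ and for $C_{\mathcal S_Q}$ bijectively onto the corresponding sets for the conjugated codes, so the minimum weights—and thus both distances—are simultaneously preserved; the inequality for one case therefore transfers to the other. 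This is routine but worth stating explicitly so that the corollary is seen to hold for arbitrary EACQ codes and not merely the canonical representatives.
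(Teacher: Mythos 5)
Your overall route is the paper's own: the corollary is stated there without a separate proof precisely because it is Theorem~\ref{coro:distance_EAHybridSubspace_S_Q} specialized to EACQ codes, using the identification of EACQ codes as EA hybrid subspace codes (trivial gauge group, transversal operators acting as identity on the ebits) together with the identity $\mathcal S_Q = \mathcal Z(\langle \mathcal T_0^{(n)}\rangle)\cap\mathcal H$ observed in the proof of Lemma~\ref{lem:eahybridsubspace}. Your first three paragraphs carry this out correctly.

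The flaw is in your final paragraph. Conjugation by the encoding unitary $U$ is a Clifford-type operation: it maps Pauli operators to Pauli operators bijectively, but it does \emph{not} preserve their weights (a CNOT sends $X\otimes I$ to $X\otimes X$). So your claim that ``the minimum weights---and thus both distances---are simultaneously preserved'' is false; indeed the canonical EACQ codes have distance $1$ (a single-qubit $Z$ on an encoded qubit lies in $\mathcal Z(\mathcal H)\setminus\langle\mathcal H_I,iI\rangle$), while the conjugated codes of interest generally do not. The repair is easy and makes the detour through canonical codes unnecessary: Theorem~\ref{coro:distance_EAHybridSubspace_S_Q} applies \emph{directly} to the given (conjugated) EACQ code viewed as an EA hybrid subspace code, so the only thing that must be transported under $U$ is the group-theoretic identity $\mathcal S_Q = \mathcal Z(\langle \mathcal T_0^{(n)}\rangle)\cap\mathcal H$; this does transfer, because conjugation is an automorphism of $\mathcal P_n$ and hence commutes with taking centralizers and intersections. (Equivalently: what the theorem's proof actually establishes is a set inclusion between the two defining sets of Eq.~(\ref{noiselessBobeqn_dist}); inclusions are preserved by the conjugation bijection, and a subset always has minimum weight at least that of its superset, even though individual weights change.)
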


\subsection{Characterization of EACQ Codes as a Subclass of EA Hybrid Subspace Codes}\label{eacqsubsection}

In Theorem~\ref{eahybridsubspacethm} we build on Lemma~\ref{lem:eahybridsubspace} to identify testable conditions that characterize how EACQ codes sit as a subclass of EAOAQEC codes with trivial gauge group; i.e., as a subclass of the class of EA hybrid subspace codes. To do so we require additional notation and preparatory results, beginning with properties of the transversal sets for such codes. 

For ease of presentation we derive the following result just for canonical EACQ codes, but it is readily seen to hold more generally for EA hybrid subspace codes that are EACQ representable by applying unitary conjugations.
As notational preparation for what follows, given a group $(G, *)$ and a subgroup $A$ of $G$, suppose that $\mathcal{M}$ is a set of left coset representatives for some of the left cosets of $A$ as a subgroup of $G$ (we could also focus on right cosets with the same arguments below). Then we define $[\mathcal{M}]_{A}^{G}$ to be the set of (left) cosets of $A$ in $G$ represented by elements of $\mathcal{M}$; that is, 
\begin{align}
     [\mathcal{M}]_{A}^{G} = \{m * A : m \in \mathcal{M}\}.
\end{align}

\begin{lemma}\label{lem:EACQ_coset_subgroup}
    If $\mathcal H$ is an $n$-qubit Pauli subgroup that defines a canonical EACQ code, then the set of $\mathcal Z(\mathcal{S})$ cosets defined by the coset transversal subset $\mathcal T_0 = \{ N_i \}$  is a subgroup of the quotient group $\mathcal{P}_{n+e}/\mathcal{Z}(\mathcal{S})$.
\end{lemma}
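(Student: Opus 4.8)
The plan is to show directly that $[\mathcal{T}_0]_{\mathcal{Z}(\mathcal{S})}^{\mathcal{P}_{n+e}} = \{ N_i \mathcal{Z}(\mathcal{S}) : N_i \in \mathcal{T}_0 \}$ satisfies the subgroup axioms inside the quotient group $\mathcal{P}_{n+e}/\mathcal{Z}(\mathcal{S})$. First I would note that this quotient is well defined: for any $g \in \mathcal{Z}(\mathcal{S})$ and $h \in \mathcal{P}_{n+e}$ one has $h g h^{-1} = \pm g$, since Pauli operators pairwise commute or anticommute, and $\pm g$ still commutes with all of $\mathcal{S}$; hence $\mathcal{Z}(\mathcal{S})$ is normal in $\mathcal{P}_{n+e}$ and coset multiplication is $(aK)(bK) = (ab)K$. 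I would also record the elementary but crucial fact that the scalar group satisfies $\langle iI \rangle \subseteq \mathcal{Z}(\mathcal{S})$, so that multiplying a representative by any phase leaves its coset unchanged.

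The key input is the multiplicative structure of the operators $N_i$, already observed to be multiplicatively closed up to phases. I would make this explicit from Eq.~(\ref{transoperators}): each $N_i$ is a Pauli operator supported on the fixed block consisting of the first $c_1$ qubits and qubits $s+1$ through $s+c_2$ (and acting as the identity elsewhere, in particular on Bob's ebits), parametrized by the binary string $\mathbf{x}_i \in (\mathbb{Z}_2)^c$. Multiplying two such operators gives $N_i N_j = \lambda_{ij} N_\ell$, where $\mathbf{x}_\ell = \mathbf{x}_i \oplus \mathbf{x}_j$ and $\lambda_{ij} I \in \langle iI \rangle$ is a phase accumulated from the single-qubit Pauli products. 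Because $i \mapsto \mathbf{x}_i$ is a bijection onto $(\mathbb{Z}_2)^c$ and $\oplus$ is the group operation there, the index $\ell$ is again a valid label, so $N_\ell \in \mathcal{T}_0$.

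Closure then follows immediately: $(N_i \mathcal{Z}(\mathcal{S}))(N_j \mathcal{Z}(\mathcal{S})) = N_i N_j \mathcal{Z}(\mathcal{S}) = \lambda_{ij} N_\ell \mathcal{Z}(\mathcal{S}) = N_\ell \mathcal{Z}(\mathcal{S})$, the last equality using $\lambda_{ij} I \in \mathcal{Z}(\mathcal{S})$. The identity of the quotient is present as the coset $N_0 \mathcal{Z}(\mathcal{S}) = \mathcal{Z}(\mathcal{S})$, since $N_0 = I$ corresponds to $i = 0$. As $\{ N_i \mathcal{Z}(\mathcal{S}) \}$ is a nonempty finite subset of a group that is closed under the group operation, it is automatically a subgroup; alternatively, inverses are explicit because $N_i^2 \in \langle iI \rangle \subseteq \mathcal{Z}(\mathcal{S})$ shows each coset is its own inverse.

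I expect the only step requiring genuine care to be the verification that the product of two $N_i$ closes within $\mathcal{T}_0$ up to a phase lying in $\langle iI \rangle$ — that is, pinning down that the accumulated scalar is a Pauli phase and hence absorbed by $\mathcal{Z}(\mathcal{S})$. This is routine given the explicit form in Eq.~(\ref{transoperators}), but it is the crux of the argument. By contrast, normality of $\mathcal{Z}(\mathcal{S})$ is a standard Pauli-group fact, and the distinctness of the $2^c$ cosets (established earlier in the section) is not needed for the subgroup property itself, though it does identify the order of the subgroup.
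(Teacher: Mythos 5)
Your proof is correct and follows essentially the same route as the paper's: both verify the subgroup axioms directly from the explicit form of the $N_i$ in Eq.~(\ref{transoperators}), using that products and inverses of the $N_i$ close within $\mathcal{T}_0$ up to phases lying in $\mathcal{Z}(\mathcal{S})$, together with $I \in \mathcal{T}_0$. You additionally spell out details the paper leaves implicit (normality of $\mathcal{Z}(\mathcal{S})$ in $\mathcal{P}_{n+e}$, absorption of $\langle iI\rangle$ into $\mathcal{Z}(\mathcal{S})$, and the bijection with $(\mathbb{Z}_2)^c$ under XOR), which is fine but does not change the substance of the argument.
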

\begin{proof}
    The coset transversal subset $\mathcal{T}_0 = \{N_i \}$, for $i \in \{0,\dots,2^{c_1+2c_2}-1\}$, comprises elements that transform the codeword $\Psi_0$ corresponding to the all-zero classical encoding ${\mathbf x}_0 = \mathbf{0}$ to $\Psi_i$ corresponding to ${\mathbf x}_i \in (\mathbb{Z}_2)^{c_1+2c_2}$.  Recall that the operator $N_i = N_{(0|\mathbf{x}_a)} \otimes I^{\otimes (s-c_1)} \otimes N_{(\mathbf{x}_{b_1}|\mathbf{x}_{b_2})} \otimes I^{\otimes (e-c_2)} \otimes I^{\otimes k}$, and the coset in which $N_i$ lies encodes the classical information ${\mathbf x}_i = ( {\mathbf x}_a, {\mathbf x}_{b_1}, {\mathbf x}_{b_2})$.

To see that the set of cosets $[\mathcal{T}_0]_{\mathcal Z(\mathcal{S})}^{\mathcal P_{n+e}} = \{N_i \mathcal Z(\mathcal{S})   : N_i \in \mathcal{T}_0 \}$ forms a subgroup, note first that either $N_i^{-1} = N_i$ or $N_i^{-1} = -N_i$. So both $N_i$ and its inverse define the same normalizer coset, and hence the set is inverse closed. Similarly, given $i$, $j$, either $N_i N_j$ or $- N_i N_j$ belongs to $\mathcal T_0$, and so the set of cosets is multiplicatively closed as well. Further note that $I\in \mathcal T_0$ as a default assumption. It follows that $[\mathcal{T}_0]_{\mathcal{Z}(\mathcal{S})}^{\mathcal{P}_{n+e}}$ is a subgroup of $\mathcal{P}_{n+e}/\mathcal{Z}(\mathcal{S})$. 
\end{proof}

We next turn to a more detailed analysis of the set identified in Lemma~\ref{lem:eahybridsubspace}. 

\begin{lemma}\label{lem:quotient_group_commutative_relation}
Let $G$ be a subgroup of the Pauli group $\mathcal{P}_{n+e}$ and let $A$ be a subgroup of $G$. Then for any $B_2 \subseteq G$ with $\langle[B_2]_A^G\rangle = G/A$, we have $B_2 \subseteq \mathcal{Z}(Z(A))$ if and only if there exists $B_1 \subseteq G$ with $\langle[B_1]_A^G\rangle = G/A$ such that $B_1 \subseteq \mathcal{Z}(A)$.
\end{lemma}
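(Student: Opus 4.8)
The plan is to first recast both sides of the equivalence in a form that no longer mentions the particular generating set, and then to settle the remaining nontrivial implication by symplectic linear algebra over $\mathbb{Z}_2$. The key preliminary observation is that a subset $B\subseteq G$ satisfies $\langle[B]_A^G\rangle = G/A$ precisely when $\langle A,B\rangle = G$, since the subgroup of $G/A$ generated by the cosets $\{bA:b\in B\}$ is the image of $\langle A,B\rangle$ under the quotient map. I also record the two standing inclusions $A\subseteq\mathcal{Z}(Z(A))$ (every element of $A$ commutes with the center of $A$) and $\mathcal{Z}(A)\subseteq\mathcal{Z}(Z(A))$ (as $Z(A)\subseteq A$). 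Since $\mathcal{Z}(Z(A))$ is a subgroup containing $A$ and $\langle A,B_2\rangle = G$, the inclusion $B_2\subseteq\mathcal{Z}(Z(A))$ holds if and only if $G\subseteq\mathcal{Z}(Z(A))$, which (using $Z(A)\subseteq A\subseteq G$) says exactly $Z(A)\subseteq Z(G)$. In particular the truth of $B_2\subseteq\mathcal{Z}(Z(A))$ is independent of the chosen coset-generating set, and the lemma reduces to the cleaner statement: $Z(A)\subseteq Z(G)$ if and only if there exists $B_1\subseteq\mathcal{Z}(A)$ with $\langle A,B_1\rangle = G$.

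The reverse implication of this reduced statement is immediate. Given $B_1\subseteq\mathcal{Z}(A)$ with $\langle A,B_1\rangle = G$, any $z\in Z(A)$ commutes with all of $A$ by definition of the center, and commutes with each element of $B_1$ because $B_1\subseteq\mathcal{Z}(A)$ and $z\in A$; hence $z$ commutes with $\langle A,B_1\rangle = G$, so $z\in Z(G)$.

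The forward implication is the crux, and I would argue it in the binary symplectic picture, where commutation of Pauli operators is recorded exactly by a symplectic form $\omega$ on $V=(\mathbb{Z}_2)^{2(n+e)}$ (phases are central and so irrelevant to all commutators). Writing $\bar A\subseteq\bar G\subseteq V$ for the images of $A,G$, the center $Z(A)$ corresponds to the radical $R:=\bar A\cap\bar A^{\perp}$, and the hypothesis $Z(A)\subseteq Z(G)$ becomes $R\subseteq\bar G^{\perp}$, where $\bar A^{\perp},\bar G^{\perp}$ denote symplectic-orthogonal complements. The goal is to produce a subspace $W\subseteq\bar G\cap\bar A^{\perp}$ with $W+\bar A=\bar G$, since lifting a basis of $W$ to elements of $G$ then yields $B_1\subseteq\mathcal{Z}(A)$ with $\langle A,B_1\rangle = G$. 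To build $W$, consider the linear map $\phi:\bar G\to\mathrm{Hom}(\bar A,\mathbb{Z}_2)$ sending $g$ to the functional $\omega(g,\cdot)|_{\bar A}$, whose kernel is exactly $\bar G\cap\bar A^{\perp}$. The hypothesis $R\subseteq\bar G^{\perp}$ says every functional in $\mathrm{im}(\phi)$ annihilates $R$, while nondegeneracy of the form induced on $\bar A/R$ shows that the restriction $\phi|_{\bar A}$ already surjects onto all functionals annihilating $R$; hence $\mathrm{im}(\phi)=\mathrm{im}(\phi|_{\bar A})$. Consequently, for every $g\in\bar G$ there is $a\in\bar A$ with $\phi(g)=\phi(a)$, so $g-a\in\ker\phi=\bar G\cap\bar A^{\perp}$ and $\bar G=(\bar G\cap\bar A^{\perp})+\bar A$; one then takes $W=\bar G\cap\bar A^{\perp}$.

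The main obstacle is precisely this equality $\mathrm{im}(\phi)=\mathrm{im}(\phi|_{\bar A})$, i.e.\ that no element of $\bar G$ induces a commutation pattern on $\bar A$ not already induced from within $\bar A$; this is exactly where the hypothesis $R\subseteq\bar G^{\perp}$ is indispensable, and without it coset representatives commuting with all of $A$ need not exist. A minor secondary point is that the lift from $\bar W$ back to genuine Pauli operators respects $\langle A,B_1\rangle = G$ and $B_1\subseteq\mathcal{Z}(A)$; commutation at the binary level is exact, and since all groups here are taken up to phase, equality modulo phases suffices for the generation condition.
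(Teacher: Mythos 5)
Your proof is correct in substance, but it takes a genuinely different route from the paper's. Your opening reduction is a nice structural observation the paper never makes explicit: since $\mathcal{Z}(Z(A))$ is a subgroup containing $A$, the condition $B_2 \subseteq \mathcal{Z}(Z(A))$ is equivalent to $G \subseteq \mathcal{Z}(Z(A))$, i.e.\ to $Z(A) \subseteq Z(G)$, which shows at once that the left-hand side of the equivalence is independent of the chosen coset-generating set $B_2$. The paper instead proves this independence implicitly by a direct computation in its backward direction: writing each $b_j' \in B_2$ as a product of elements of $B_1$ and $A$ and pushing a central element $z$ through the product. For the crux (the forward implication), the paper's argument is constructive and stays inside the group: it performs a symplectic--isotropic decomposition of $A$ with isotropic part $Z(A)$ and symplectic pairs $(x_i,z_i)$, then ``cleans'' each element of $B_2$ by left-multiplying with $z_i$ or $x_i$ to kill anticommutations, which automatically keeps each modified element in the same $A$-coset, so the generation condition and all phase bookkeeping come for free. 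Your argument instead passes to the binary symplectic picture and proves the basis-free statement $\bar{G} = (\bar{G}\cap\bar{A}^{\perp}) + \bar{A}$ under the hypothesis $R \subseteq \bar{G}^{\perp}$, via the annihilator/nondegeneracy argument on $\bar{A}/R$; this is cleaner conceptually and isolates exactly where the hypothesis is used, and it is essentially the coordinate-free version of the paper's Gaussian-elimination-style cleaning.

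One point in your write-up is a genuine (though small and repairable) soft spot: the lifting step. The quotient to $(\mathbb{Z}_2)^{2(n+e)}$ kills phases, so from $W + \bar{A} = \bar{G}$ you only get $\langle A, B_1\rangle = G$ modulo $\langle iI\rangle$; your justification that ``all groups here are taken up to phase'' is not valid for the lemma as stated, which concerns honest subgroups of $\mathcal{P}_{n+e}$ (take $G = \langle iI, X_1\rangle$, $A = \langle X_1\rangle$: your $W$ is trivial and $\langle A, B_1\rangle = A \neq G$). The fix is one line: adjoin to $B_1$ the phase subgroup $G \cap \langle iI\rangle$, whose elements are central in $\mathcal{P}_{n+e}$ and hence lie in $\mathcal{Z}(A)$; then for any $g \in G$ there is $h \in \langle A, B_1\rangle$ with $\bar{h} = \bar{g}$, so $gh^{-1} \in G \cap \langle iI\rangle$ and $g \in \langle A, B_1\rangle$. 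With that patch your argument is complete. The paper's in-group cleaning procedure sidesteps this issue entirely, which is one concrete advantage of its more pedestrian approach; what yours buys is the explicit invariant characterization ($Z(A) \subseteq Z(G)$, equivalently $R \subseteq \bar{G}^{\perp}$) and a proof that does not depend on choosing generators or a decomposition.
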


\begin{proof}
We first prove the forward direction, and so suppose we have such a set $B_2$ for which $B_2 \subseteq \mathcal{Z}(Z(A))$. If in fact $B_2 \subseteq \mathcal{Z}(A)$, then we can take $B_1=B_2$. When $B_2 \not\subseteq \mathcal{Z}(A)$, we perform a symplectic-isotropic decomposition on $A$ and ensure we get the isotropic subgroup to be $Z(A)$, and then the symplectic group to be generated by symplectic pairs $\{(x_i,z_i)\}$. As $B_2 \not\subseteq \mathcal{Z}(A)$ but $B_2 \subseteq \mathcal{Z}(Z(A))$, there exist operators in $B_2$ that do not commute with some operators in $\{(x_i,z_i)\}$. Let $B_x^{(i)} = \{b \in B_2 \, | \, bx_i = -x_ib\}$ and $B_z^{(i)} = \{b \in B_2 \, | \, bz_i = -z_ib\}$. Then one-by-one we go through the sets $B_x^{(i)}$ and $B_z^{(i)}$, multiplying elements (on the left) with $z_i$ and $x_i$ respectively and replacing the corresponding elements of $B_2$, to obtain $B_1$ whose elements commute with all operators in $\{(x_i,z_i)\}$. As a result we obtain $B_1 \subseteq \mathcal{Z}(A)$. Thus, we have obtained a set $B_1$ from $B_2$ that satisfies $\langle[B_1]_A^G\rangle = G/A$ and $B_1 \subseteq \mathcal{Z}(A)$. 

We next prove the backward direction. Without loss of generality, for $g=1,2$, we assume that elements in $[B_g]_A^G$ are independent; namely that $B_1 = \{b_1,b_2,\dots,b_l\}$ and $B_2 = \{b_1',b_2',\dots,b_l'\}$ such that $\langle[B_2]_A^G\rangle =\langle[B_1]_A^G\rangle = G/A$. 
So we will show that if $B_1 \subseteq \mathcal{Z}(A)$, then for any $B_2$ with $\langle [B_2]_A^G\rangle = G/A$, it holds that $B_2 \subseteq \mathcal{Z}(Z(A))$. As $\langle [B_1]_A^G\rangle = \langle [B_2]_A^G\rangle$, we can represent every $b_j'$ in terms of the $b_i$ and elements of $A$; i.e., $b_j' = (\prod_{i=1}^{l}b_i^{m_i})a_j$, where $m_i \in \{0,1\}$ and $a_j \in A$. Thus, for any $z \in Z(A)$, and using the fact that $B_1 \subseteq \mathcal{Z}(A)$, we have 
\[
    zb_j' = z\left(\prod_{i=1}^{l}b_i^{m_i}\right)a_j = \left(\prod_{i=1}^{l}b_i^{m_i}\right)za_j  
    = \left(\prod_{i=1}^{l}b_i^{m_i}\right)a_j z = b_j'z . 
\]
It follows that $B_2 \subseteq \mathcal{Z}(Z(A))$, as claimed. 
\end{proof}

\begin{corollary}\label{coro:quotient_group_generator_group_commutative_relation}
    Let $G$ be a subgroup of the Pauli group $\mathcal{P}_{n+e}$ and let $A$ be a subgroup of $G$. The subgroup $G \subseteq \mathcal{Z}(Z(A))$ if and only if there exists $B \subseteq \mathcal{P}_{n+e}$ with $\langle[B]_A^G\rangle = G/A$ such that $B \subseteq \mathcal{Z}(A)$.
\end{corollary}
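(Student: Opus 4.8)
The plan is to deduce this corollary directly from Lemma~\ref{lem:quotient_group_commutative_relation}, bridging the gap between a statement about a particular generating set $B_2$ and a statement about all of $G$ by means of two small observations. Throughout I would use that, for the expression $\langle[B]_A^G\rangle = G/A$ to be meaningful, each element of $B$ must represent a coset of $A$ lying inside $G$, so the relevant elements of $B$ are in fact in $G$.

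For the forward direction, suppose $G \subseteq \mathcal{Z}(Z(A))$. First I would fix any set $B_2 \subseteq G$ of coset representatives with $\langle[B_2]_A^G\rangle = G/A$; such a set exists since $A$ is a subgroup of $G$. Because $B_2 \subseteq G \subseteq \mathcal{Z}(Z(A))$, the hypothesis of the forward implication of Lemma~\ref{lem:quotient_group_commutative_relation} is met, and applying that lemma produces a set $B \subseteq \mathcal{Z}(A)$ with $\langle[B]_A^G\rangle = G/A$, exactly as required.

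For the backward direction, suppose there is a set $B \subseteq \mathcal{Z}(A)$ with $\langle[B]_A^G\rangle = G/A$. Here I would run the same commutation computation as in the backward part of Lemma~\ref{lem:quotient_group_commutative_relation}, but applied to an arbitrary element of $G$ rather than to a fixed generating set. Concretely, take any $g \in G$. Since the cosets $[B]_A^G$ generate $G/A$, we may write $g = \left(\prod_i b_i^{m_i}\right) a$ with $b_i \in B$, $m_i \in \{0,1\}$, and $a \in A$. For any $z \in Z(A)$, the element $z$ commutes with each $b_i$ (because $b_i \in \mathcal{Z}(A)$ and $z \in Z(A) \subseteq A$) and with $a$ (because $z$ lies in the center of $A$); hence $z$ commutes with $g$. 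As $z \in Z(A)$ was arbitrary we get $g \in \mathcal{Z}(Z(A))$, and since $g \in G$ was arbitrary, $G \subseteq \mathcal{Z}(Z(A))$.

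The corollary is thus essentially immediate from the lemma, and the only point requiring care is conceptual rather than computational: recognizing that the backward argument of the lemma in fact establishes the stronger conclusion $G \subseteq \mathcal{Z}(Z(A))$, and not merely membership of a single generating set, precisely because every element of $G$ decomposes into a product of the $\mathcal{Z}(A)$-generators $b_i$ and an element of $A$. Beyond this I would only note that overall phase factors ($iI$ in the Pauli group) are inert for commutation questions and so do not affect any step.
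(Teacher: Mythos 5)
Your proposal is correct and follows essentially the same route as the paper: the forward direction is identical (pick any generating set of coset representatives, note it lies in $G \subseteq \mathcal{Z}(Z(A))$, and invoke Lemma~\ref{lem:quotient_group_commutative_relation}), and your backward direction is the paper's argument made explicit — where the paper simply observes that $G = \langle A, B\rangle$ is generated by two sets each contained in the group $\mathcal{Z}(Z(A))$, you rerun the lemma's commutation computation on an arbitrary element of $G$, which amounts to the same thing.
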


\begin{proof}
    Suppose that $G \subseteq \mathcal{Z}(Z(A))$, and let $B' \subseteq \mathcal{P}_{n+e}$ be any set with $\langle[B']_A^G\rangle = G/A$, so that $G = \langle A, B' \rangle$ and $B'$ is also contained in $\mathcal{Z}(Z(A))$. Then Lemma~\ref{lem:quotient_group_commutative_relation} 
    gives the existence of a desired set $B$, proving the forward direction of the statement. 
    Next let $B$ be any set such that $\langle[B]_A^G\rangle = G/A$ and $B \subseteq \mathcal{Z}(A)$. Then $G= \langle A, B\rangle$ is generated by two sets that are both contained in $\mathcal{Z}(Z(A))$, and hence $G$ is as well, proving the backward direction.  
\end{proof}

Now we can prove the full characterization. 

\begin{theorem}\label{eahybridsubspacethm}
An entanglement-assisted hybrid subspace code $C(\mathcal H, \mathcal S, \mathcal G_0 = \emptyset, \mathcal L_0, \mathcal T_0)$ defined by a group $\mathcal{H}$ and a coset transversal subset $\mathcal{T}_0$ is representable as an EACQ code if and only if 
$[\mathcal T_0]_{\mathcal{Z}(\mathcal{S})}^{\mathcal{P}_{n+e}}$ is a subgroup of $\mathcal{P}_{n+e}/\mathcal{Z}(\mathcal{S})$ and the following condition is satisfied:
\begin{equation}\label{eacqsubclass}
\mathcal{H} \subseteq \mathcal{Z}(Z(\mathcal{Z}(\mathcal{T}_0^{(n)}) \cap \mathcal{H})), 
\end{equation} 
where recall $\mathcal{T}_0^{(n)}$ is the restriction of operators in $\mathcal{T}_0$ to the first $n$ qubits of the code, $Z(\cdot)$ is the center of the group, and $\mathcal{Z}(\cdot)$ is the centralizer of the set inside the Pauli group. Further, if Eq.~(\ref{eacqsubclass}) is satisfied for a set of coset representatives $\mathcal T_0$, then it is satisfied for any set of representatives from the same cosets. 
\end{theorem}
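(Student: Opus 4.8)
The plan is to prove both implications, drawing on the structural results established above; the forward direction will follow quickly, while the backward (sufficiency) direction carries the real content. Throughout I set $\mathcal S_Q := \mathcal Z(\mathcal T_0^{(n)}) \cap \mathcal H$, the quantum stabilizer subgroup, noting that $\mathcal Z(\mathcal T_0^{(n)}) = \mathcal Z(\langle \mathcal T_0^{(n)}\rangle)$ so that Eq.~(\ref{eacqsubclass}) coincides with Eq.~(\ref{eacqsubclass1}) of Lemma~\ref{lem:eahybridsubspace}.

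For the forward direction, suppose the code is EACQ representable. Since representability is defined up to unitary conjugation and the canonical EACQ codes define the class, I would invoke Lemma~\ref{lem:EACQ_coset_subgroup} to conclude that $[\mathcal T_0]_{\mathcal Z(\mathcal S)}^{\mathcal P_{n+e}}$ is a subgroup of the quotient $\mathcal P_{n+e}/\mathcal Z(\mathcal S)$, observing that the subgroup property of a set of cosets is preserved under the unitary conjugation relating a general code to its canonical form. The containment Eq.~(\ref{eacqsubclass}) is then exactly the necessary condition already proven in Lemma~\ref{lem:eahybridsubspace}.

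For the backward direction, assume both conditions hold. First, applying Corollary~\ref{coro:quotient_group_generator_group_commutative_relation} with $G = \mathcal H$ and $A = \mathcal S_Q$ (in the $n$-qubit Pauli group), condition Eq.~(\ref{eacqsubclass}) supplies a set $\mathcal S_C \subseteq \mathcal Z(\mathcal S_Q)$ with $\langle [\mathcal S_C]_{\mathcal S_Q}^{\mathcal H}\rangle = \mathcal H/\mathcal S_Q$, so that $\mathcal H = \langle \mathcal S_Q, \mathcal S_C\rangle$ with $\mathcal S_C$ commuting with $\mathcal S_Q$; this is the candidate classical stabilizer. Next I would exploit the commutator as an $\mathbb F_2$-bilinear pairing $(g,T) \mapsto \langle g, T\rangle$, where $gT = (-1)^{\langle g,T\rangle} Tg$, between $\mathcal H$ and the transversal operators $\mathcal T_0^{(n)}$. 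By the definition of $\mathcal S_Q$ the radical of this pairing on the $\mathcal H$-side is exactly $\mathcal S_Q$, while since distinct transversal operators lie in distinct cosets of $\mathcal Z(\mathcal H)$ (inherited from the distinct $\mathcal Z(\mathcal S)$-cosets of the $T_i$) no nontrivial transversal coset commutes with all of $\mathcal H$; using condition~1 to view $[\mathcal T_0]$ as a group, the pairing therefore descends to a nondegenerate pairing between $\mathcal H/\mathcal S_Q$ and $[\mathcal T_0]$. In particular $\dim_{\mathbb F_2}(\mathcal H/\mathcal S_Q) = \dim_{\mathbb F_2}[\mathcal T_0]$, matching the EACQ bookkeeping in which the number of independent classical stabilizer generators equals the number $c$ of encoded classical bits.

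It remains to bring the configuration $(\mathcal S_Q, \mathcal S_C, \mathcal T_0)$ into the canonical EACQ form, and this is the step I expect to be the main obstacle. Performing an isotropic--symplectic decomposition of $\mathcal S_C$ inside $\mathcal Z(\mathcal S_Q)$ splits it into an isotropic part $\mathcal S_{C,I}$ and a symplectic part $\mathcal S_{C,S}$ of anticommuting pairs, and the standard symplectic normal-form argument over $\mathbb F_2$, applied to the nondegenerate commutator pairing above, lets me select generators of $\mathcal S_C$ together with matched representatives of $[\mathcal T_0]$ so that their commutation table reproduces the canonical pattern: $c_1$ isotropic classical generators each detected by a single transversal bit $\mathbf x_a$, and $c_2$ symplectic pairs each detected by the pair of bits $(\mathbf x_{b_1}, \mathbf x_{b_2})$, with $c = c_1 + 2c_2$. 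Having reproduced the abstract commutation data of the canonical EACQ triple $(\mathcal S_Q, \mathcal S_C, \{N_i\})$, I would invoke the fact recorded in the preliminaries that any isomorphism of Pauli subgroups is unitarily implemented up to phases to produce a unitary $U$ carrying $(\mathcal S_Q, \mathcal S_C, \mathcal T_0)$ to the canonical generators; since $\langle \mathcal S_Q, \mathcal S_C\rangle = \mathcal H$ this $U$ preserves the extended stabilizer and hence the codespace $C$, and it carries $\mathcal T_0$ to the canonical transversal, establishing EACQ representability. The delicate points are ensuring the symplectic normal form can be achieved simultaneously for the internal structure of $\mathcal S_C$ and for its pairing with the transversal operators, and verifying that the resulting unitary respects the splitting between Alice's and Bob's qubits so that the transversal operators remain of the form $T^{(n)} \otimes I$. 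Finally, the representative-independence asserted in the last sentence is immediate: the subgroup property is a statement about cosets and so is manifestly independent of the chosen representatives, while the invariance of Eq.~(\ref{eacqsubclass}) under change of representatives was already established in Lemma~\ref{lem:eahybridsubspace}.
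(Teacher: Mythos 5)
Your proposal is correct and, at its core, follows the same architecture as the paper's proof: the forward direction is exactly the paper's (Lemma~\ref{lem:eahybridsubspace} combined with Lemma~\ref{lem:EACQ_coset_subgroup}), and the backward direction turns on the same key ingredient, Corollary~\ref{coro:quotient_group_generator_group_commutative_relation} applied with $G=\mathcal H$ and $A=\mathcal S_Q=\mathcal Z(\langle\mathcal T_0^{(n)}\rangle)\cap\mathcal H$, which supplies a classical stabilizer $\mathcal S_C\subseteq\mathcal Z(\mathcal S_Q)$ with $\mathcal H=\langle\mathcal S_Q,\mathcal S_C\rangle$. Where you genuinely diverge is in how the classical stabilizer generators get matched to the transversal: the paper does this constructively via Algorithm~\ref{alg:SQg_From_H} (sweeping through generators of $\mathcal H$, assigning each anticommuting generator a transversal partner, and multiplying the remaining transversal generators so the commutation table becomes diagonal), and only afterwards applies Corollary~\ref{coro:quotient_group_generator_group_commutative_relation} --- a repair that preserves the matching because the correction factors lie in $\mathcal S_Q$, which commutes with every transversal element; you instead observe that the commutator induces a well-defined nondegenerate $\mathbb F_2$-bilinear pairing on $(\mathcal H/\mathcal S_Q)\times[\mathcal T_0]$ and take dual bases. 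Your route is more conceptual and yields the dimension count $\dim_{\mathbb F_2}(\mathcal H/\mathcal S_Q)=\dim_{\mathbb F_2}[\mathcal T_0]$ for free, while the paper's algorithm doubles as the computational test it needs later (Algorithm~\ref{alg:Check_EAOQEC_Is_EACQ}); both arguments invoke the group hypothesis on $[\mathcal T_0]$ at exactly the same point. The delicate points you flag at the end are real but resolvable, and you should close them rather than leave them open: (i) the internal isotropic--symplectic normal form of $\mathcal S_C$ and the diagonal pairing can be achieved simultaneously, because any change of generating set on the $\mathcal S_C$ side can be compensated by passing to the corresponding dual basis of $[\mathcal T_0]$ --- nondegeneracy is precisely what licenses this; (ii) the implementing unitary can be taken to act on Alice's $n$ qubits only and extended by the identity on Bob's side, since all the groups involved ($\mathcal S_Q$, $\mathcal S_C$, $\mathcal T_0^{(n)}$) live in $\mathcal P_n$, which is exactly the format of the EACQ definition; and (iii) you do not need $U$ to carry your chosen transversal representatives exactly onto the canonical $N_i$: it suffices that $UN_i^{(n)}U^\dagger$ have the same commutation relations with $\mathcal H$ as $T_i^{(n)}$, which forces $UN_i^{(n)}U^\dagger(T_i^{(n)})^{-1}\in\mathcal Z(\mathcal H)$ and hence equality of the corresponding $\mathcal Z(\mathcal S)$-cosets --- and agreement of cosets is all that ``same transversal set'' demands, since representatives in a common coset define the same code sector. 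With these three observations your argument is complete; for calibration, note that the paper itself compresses this final stage into the single assertion that the constructed $S_Q$ and $S_C$ give ``the EACQ representable structure as required.''
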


\begin{proof}
We consider an entanglement-assisted hybrid subspace code defined by $\mathcal{H}$ and a coset transversal subset $\mathcal{T}_0$. Then the forward direction of the proof and the last statement of the theorem follow from Lemma~\ref{lem:eahybridsubspace} and Lemma~\ref{lem:EACQ_coset_subgroup}.

For the backward direction of the proof, we assume the group condition on $\mathcal T_0$ and Eq.~(\ref{eacqsubclass}) are both satisfied. 
Note that all elements in $\mathcal{H}$ that commute with all elements of $\mathcal{T}_0^{(n)}$ forms a subgroup of $\mathcal H$, which we denote by $S_Q = \mathcal{Z}(\mathcal{T}_0^{(n)}) \cap \mathcal{H}$. As above, this will be called the quantum stabilizer group for the code. 

In Algorithm~\ref{alg:SQg_From_H}, starting with a set of independent generators for $\mathcal H$ and the set $\mathcal{T}_0^{(n)}$, we design a procedure to computationally obtain two subsets of the generating set for $\mathcal H$ and an independent subset of the $n$-qubit transversal operators, denoted by $S_{Q_{\text{gen}}}$ (which generate $S_Q$ as a group), $S_{C_{\text{gen}}}$ and $\mathcal T^{(n)}_{C_{\text{gen}}}$ (which generate the group $\langle \mathcal T_0^{(n)} \rangle$) that satisfy the following: 
\begin{enumerate}
    \item $\langle S_{Q_{\text{gen}}} , S_{C_{\text{gen}}} \rangle = \mathcal H$ and $\langle S_{Q_{\text{gen}}} \rangle \cap \langle S_{C_{\text{gen}}} \rangle = \{ I \}$.
    \item Elements of $S_{Q_{\text{gen}}}$ commute with elements of $\mathcal T^{(n)}_{C_{\text{gen}}}$.
    \item Each element of $S_{C_{\text{gen}}}$ anti-commutes with exactly one element of $\mathcal T^{(n)}_{C_{\text{gen}}}$.
\end{enumerate}
We note that the algorithm uses the group structure assumed for the transversal operators in this direction of the proof. 

The one remaining feature we need to identify EACQ structure is to have elements of $S_{Q_{\text{gen}}}$ and $S_{C_{\text{gen}}}$ commute across the sets, which may not be the case for the sets of operators obtained from Algorithm~\ref{alg:SQg_From_H}. This is where Eq.~(\ref{eacqsubclass}) comes in, and allows us to update the set $S_{C_{\text{gen}}}$ while maintaining the above features. Indeed, if we let $S_C$ be the group generated by $S_{C_{\text{gen}}}$ obtained from the algorithm, so $S_C = \langle S_{C_{gen}} \rangle$, then we have $S_C \cap S_Q = S_C \cap (\mathcal{Z}(\mathcal{T}_0^{(n)}) \cap \mathcal{H}) = \{ I\}$. As $\langle S_Q, S_C\rangle = \mathcal{H}$ and $S_C \cap S_Q = \{ I \}$, we can view $S_C$ as representing cosets of $S_Q$ inside $\mathcal{H}$; i.e., $[S_C]_{S_Q}^{\mathcal{H}} = \mathcal{H}/S_Q$. Thus, from Corollary~\ref{coro:quotient_group_generator_group_commutative_relation}, there exists a choice of $S_C$ with $[S_C]_{S_Q}^{\mathcal{H}} = \mathcal{H}/S_Q$ such that $S_C \subset \mathcal{Z}(S_Q)$ if and only if $\mathcal{H} \subset \mathcal{Z}(Z(S_Q)) = \mathcal{Z}(Z(\mathcal{Z}(\mathcal{T}_0^{(n)}) \cap \mathcal{H}))$. Hence, we let $S_C$ be the group generated by this updated set of generators, and call it the classical stabilizer group. 

Now with $S_Q$ and $S_C$ given as above, we can obtain the EACQ representable structure as required. Further, note that an outcome of the algorithm is that $S_{C_{\text{gen}}}$ and $\mathcal T^{(n)}_{C_{\text{gen}}}$ have the same cardinality (the map $A_C$ in the algorithm is a bijection between the sets by construction), and so $|S_{C_{\text{gen}}}| = |T^{(n)}_{C_{\text{gen}}}| = \log | \langle \mathcal{T}_0^{(n)} \rangle |$, which in particular means $|S_C| = |\mathcal T_0^{(n)}|$, and $|S_C|$ describes the classical data that can be encoded. 
\end{proof}

Building on the above proof, we can computationally check whether an EA hybrid subspace code is representable as an EACQ code by performing the following steps: (a) we obtain the quantum and classical stabilizer generators from Algorithm~\ref{alg:SQg_From_H}, (b) we perform the symplectic-isotropic decomposition of the quantum stabilizer generators to obtain the isotropic quantum stabilizer generators $S_{Q_{\text{gen}}}^{(I)}$, and (c) we check whether all classical stabilizer generators commute with all elements in $S_{Q_{\text{gen}}}^{(I)}$ to ensure $\mathcal{H} \subseteq \mathcal{Z}(Z(S_Q)) = \mathcal{Z}(Z(\mathcal{Z}(\mathcal{T}_0^{(n)}) \cap \mathcal{H}))$. Algorithm~\ref{alg:Check_EAOQEC_Is_EACQ} provides an algorithm to check if an EA hybrid subspace code is representable as an EACQ code when $[\mathcal{T}_0]^{\mathcal{P}_{n+e}}_{\mathcal{Z}(\mathcal{S})}$ is known to be a group.

\begin{algorithm}[ht]
\caption{Obtain classical and quantum stabilizer generators}\label{alg:SQg_From_H}
\begin{algorithmic}
\Require $\mathcal{H} = \langle H_1,\dots,H_m\rangle$ and $\mathcal{T}_0^{(n)} = \{ T_1,\dots,T_{c_b} \}$ 
\Ensure Quantum stabilizer generators $S_{Q_{\text{gen}}}$ and classical stabilizer generators $S_{C_{\text{gen}}}$
\Initialize{
$S_{Q_{\text{gen}}} = \emptyset$ \Comment Quantum stabilizer generators\\
$S_{C_{\text{gen}}} = \emptyset$ \Comment Classical stabilizer generators\\
$\mathcal{T}_{C_{\text{gen}}}^{(n)} = \emptyset$\\
$A_C: \mathcal{T}_{C_{\text{gen}}}^{(n)} \rightarrow S_{C_{\text{gen}}}$ \Comment Transversal to classical stabilizer map \\
$\mathcal{T}_{\text{gen}}^{(n)}$ = Generators of $\langle\mathcal{T}_0^{(n)}\rangle$
}
\AlgoProcedure{\For{$i = 1$ to $m$} \Comment Examine if $H_i$ is a quantum or classical stabilizer
    \For{$T_j \in \mathcal{T}_{\text{gen}}^{(n)}$} \Comment Examine if $H_i$ anticommutes with any element in $\mathcal{T}_{\text{gen}}^{(n)}$
        \If{$T_jH_i = -H_iT_j$}
            \State $\mathcal{T}_{C_{\text{gen}}}^{(n)} \rightarrow \mathcal{T}_{C_{\text{gen}}}^{(n)} \cup \{T_j\}$, $\mathcal{T}_{\text{gen}}^{(n)} \rightarrow \mathcal{T}_{\text{gen}}^{(n)} \setminus \{T_j\}$, $A_C(T_j) = H_i$
            \For{$T_p \in \mathcal{T}_{\text{gen}}^{(n)}$}
                    \State $T_p \rightarrow T_pT_j^{\mathbbm{1}_{T_pH_i = -H_iT_p}}$
            \EndFor
            \State $S_{C_{\text{gen}}} \rightarrow S_{C_{\text{gen}}} \cup \{H_i\}$
            \State \textbf{continue loop i}
        \EndIf
    \EndFor
    \For{$T_j \in \mathcal{T}_{C_{\text{gen}}}^{(n)}$} \Comment Examine if $H_i$ anticommutes with any element in $\mathcal{T}_{C_{\text{gen}}}^{(n)}$
        \If{$T_jH_i = -H_iT_j$}
            \State $H_i \rightarrow H_iA_C(T_j)$
        \EndIf
    \EndFor
    \State $S_{Q_{\text{gen}}} \rightarrow S_{Q_{\text{gen}}} \cup \{H_i\}$
\EndFor\\
\Return $S_{Q_{\text{gen}}}$ and $S_{C_{\text{gen}}}$}
\end{algorithmic}
\end{algorithm}

\begin{algorithm}[ht]
\caption{Check if an EA hybrid subspace code is EACQ representable given $[\mathcal T_0]_{\mathcal{Z}(\mathcal{S})}^{\mathcal{P}_{n+e}}$ is a group}\label{alg:Check_EAOQEC_Is_EACQ}
\begin{algorithmic}
\Require $\mathcal{H} = \langle H_1,\dots,H_m\rangle$ and $\mathcal{T}_0^{(n)} = \langle T_1,\dots,T_{c_b}\rangle$ 
\Ensure EA hybrid subspace code is EACQ representable or not
\AlgoProcedure{
\begin{enumerate}
\item Perform Algorithm \ref{alg:SQg_From_H} to obtain the quantum stabilizer generators $S_{Q_{\text{gen}}}$ and the classical stabilizer generators $S_{C_{\text{gen}}}$.
\item Perform symplectic-isotropic decomposition on $S_{Q_{\text{gen}}}$ to obtain $Z(S_Q)$, the isotropic group.
\item If all elements of $S_{C_{\text{gen}}}$ commute with all elements of $Z(S_Q)$, return \textsc{True}; else return \textsc{False}. 
\end{enumerate}}
\end{algorithmic}
\end{algorithm}

\section{\texorpdfstring{Entanglement-Assisted (Hybrid and Non-Hybrid) Subsystem Code\\ Constructions}{Entanglement-Assisted (Hybrid and Non-Hybrid) Subsystem Code Constructions}}\label{sec:eaoaqec_code_construction}

In this section we present a number of EA subsystem code constructions, of both hybrid and non-hybrid varieties. %
The constructions are organized into four subsections. 

Regarding notation, let us briefly recall the parameters of an EAOAQEC code as introduced above: $n$ is the number of physical qubits over which $\mathcal{H}$ is defined; $k$ is the number of logical qubits encoded in the EAOAQEC code; $d$ is the distance of the code; $r$ is the  number of gauge qubits; $e$ is the number of entangled qubits; $m$ is the number of stabilizer generators; $l$ is the total number of symplectic and isotropic qubits; $s$ is the number of isotropic qubits; and $c_b$ is the number of classical bit strings encoded.

Recall for an $n$-qubit subsystem code we have a code subspace $C_{\mathrm{sub}}$ with stabilizer group $\mathcal{S}$, gauge generators $\mathcal{G}_0$, and logical operators $\mathcal{L}_0$. We will denote a minimal set of stabilizer generators of $\mathcal{S}$ by $\{ S_1, S_2, \ldots, S_s \}$.
Let the gauge generators be $\mathcal{G}_0 = \{\overline{G}_{X_1}, \overline{G}_{Z_1}, \cdots, \overline{G}_{X_r}, \overline{G}_{Z_r}\}$, where each $(\overline{G}_{X_i}, \overline{G}_{Z_i})$ is a symplectic pair.
We note that we can always obtain the gauge generators to be in this form \cite{Poulin2005Stabilizer}. 
Further, choose the logical generators to be of the form $\mathcal{L}_0 = \{\overline{L}_{X_1}, \overline{L}_{Z_1}, \cdots, \overline{L}_{X_{n-s-r}}, \overline{L}_{Z_{n-s-r}}\}$, where the number of logical qubits is $n-s-r$. So in our notation, this subsystem code is an $[\![n, n-s-r, d; r]\!]$ code that uses no ebits and encodes only the trivial classical bit string $0$.

The constructions proposed in this section are organized as follows. In \Cref{sec:GaugeToCosetTransversal}, we present the \textit{gauge fixing (GF) construction}, which involves considering some gauge symplectic pairs and converting them to stabilizer-coset transversal pairs, while retaining the rest as gauge generators. In \Cref{sec:isotropictosymplectic}, we provide the \textit{clean qubits (CQ) construction}, where the isotropic stabilizers are converted to extended symplectic operators to obtain an entanglement-assisted code for noiseless ebits. In \Cref{sec:gauge_to_stabilizer}, the \textit{entanglement-assisted gauge fixing (EAGF) construction} presented involves converting pairs of anticommuting gauge generators to stabilizers of the code, which we view as fixing the gauge with entanglement-assistance. The \textit{general gauge fixing (GGF) construction} provided in \Cref{sec:GaugeToCosetTransversal_general} involves a combination of the ideas used in \Cref{sec:GaugeToCosetTransversal} and \Cref{sec:gauge_to_stabilizer}. We present the parameters of the entanglement-assisted hybrid subsystem code obtained from the four proposed constructions in \Cref{tab:construction_summary}.
\begin{table}[ht]
    \centering
\begin{tabular}{|>{\centering\arraybackslash}p{3.7cm}|p{2.4cm}|p{5cm}|p{4.5cm}|}
\hline
  \begin{tabular}{c}
       \textbf{Code}\\\textbf{construction} \\
       \textbf{technique}
  \end{tabular} & \begin{tabular}{c}\textbf{Input}\\\textbf{code}\\ \textbf{parameters}\end{tabular} & ~~~~~~\begin{tabular}{c}\textbf{EAOAQEC}\\ \textbf{code} \\\textbf{parameters}\end{tabular} & ~~~~~~~~\begin{tabular}{c}\textbf{Distance}\\ \textbf{comparison}\end{tabular}\\
\hline
    \begin{tabular}{c}Gauge fixing (GF) \\
    construction\end{tabular} & \!\!\!\!\!\!\begin{tabular}{c} $[\![n,k,d; r,e,c_b]\!]$ \\ with $e\!\geq\! 0, c_b \!\geq\! 1$ \end{tabular} & ~~~\begin{tabular}{c} $[\![n, k, d'; r\!\!-\!y, e,\leq 2^yc_b]\!]$ \\ with $y\leq r$\end{tabular}& \begin{tabular}{c}$d' \geq d$ \\when condition in \\\Cref{prop:dist_gauge_to_cbit} is satisfied.\end{tabular}\\
\hline
  \begin{tabular}{c}
       Clean qubits (CQ)\\
       construction 
  \end{tabular} & \!\!\begin{tabular}{c}
       $[\![n,k,d;r,0,c_b]\!]$\\
       with $c_b \geq 1$
  \end{tabular} &  ~~~~~~~$[\![n-e,k,d';r,e,c_b]\!]$ & ~~~~~~~~~~~$d'\geq d $ \\
  \hline
  \begin{tabular}{c}\!\!\!Entanglement-assisted\\gauge fixing (EAGF)\\ construction\end{tabular} & \!\!\begin{tabular}{c}
       \!\!$[\![n,k,d;r,0,c_b]\!]$\\
       with $c_b \geq 1$
  \end{tabular} &  ~~~~~~~$[\![n,k,d';r-e,e,c_b]\!]$ & ~~~~~~~~~~~$d'\geq d$\\
  \hline\begin{tabular}{c} General gauge\\ fixing (GGF) \\ construction\end{tabular} & \!\!\!\!\!\!\begin{tabular}{c} $[\![n,k,d; r,e,c_b]\!]$ \\ with $e\!\geq\!0, c_b \!\geq\! 1$ \end{tabular} & \!\!\!\!\!\begin{tabular}{c} \!$[\![n, k, d'; r\!\!-\!y_I\!\!-\!y_S\!, \! e\!+\!y_S,\leq 2^{y_I}\!c_b]\!]$ \\ with $y_I + y_S\leq r$\end{tabular}& \begin{tabular}{c}$d' \geq d$ \\ when condition in \\\Cref{coro:dist_gauge_to_cbit_general} is satisfied.\end{tabular}\\
\hline
\end{tabular}
    \caption{Table summarizing the parameters of various EAOAQEC codes constructed from subsystem codes in this section.}
    \label{tab:construction_summary}
\end{table}

We note that some of the examples of EAOAQEC codes in this section are constructed from the $[\![15,1,3;6]\!]$ subsystem color code \cite{Paetznick_Reichardt2013,Bombin2015Gauge} with distance $3$
whose stabilizer and gauge generators are provided in \Cref{app:3D_Subsystem_Color_code} for self-contained reference. For sets of Pauli operators $A$ and $B$, let $AB = \{ab : a\in A, b\in B\}$. In this section, we will denote $\mathcal{A}^* := \mathcal{A}\backslash \{\langle iI\rangle\}$ for any Pauli subgroup $\mathcal{A}$. %

\subsection{Gauge Fixing Construction} \label{sec:GaugeToCosetTransversal}

The first construction of an EAOAQEC code from an entanglement-assisted hybrid subsystem code or its subclasses of codes is based on gauge fixing. Gauge fixing \cite{Poulin2005Stabilizer,Paetznick_Reichardt2013} involves converting a gauge operator to a stabilizer and removing all the gauge operators that anticommute with it from the gauge group. We construct an EAOAQEC code by converting $y$ independent gauge generators to stabilizer generators, where $y \leq r$, and building the coset transversal subset $\mathcal{T}_0$ based on the corresponding anticommuting gauge operators. We note that when $y = r$, the EAOAQEC code obtained would have a trivial gauge group; i.e., it would be a subspace code.  We view the EAOAQEC code to be the initial code whose gauge qubits are fixed based on the encoded classical information. The weight of an operator could be considered as a criteria while choosing operators from a gauge symplectic pair to be added to the stabilizer group or the coset transversal as practically low weight stabilizer generators and high weight coset transversal subset elements are preferred.

We consider $C_{\mathrm{EAOA}} = C(\mathcal{H}, \mathcal{S}, \mathcal{G}_0, \mathcal{L}_0, \mathcal{T}_0)$ to be an $[\![n, k, d;r,e,c_b]\!]$-subsystem code. Without loss of generality, the gauge operators can be written as $\overline{G}_{X_{i}} = \overline{G}_{X_{i}}^{(n)} \otimes I^{\otimes e}$ and $\overline{G}_{Z_{i}} = \overline{G}_{Z_{i}}^{(n)} \otimes I^{\otimes e}$, and we select the $y \leq r$ first symplectic pairs. From these, all the $\overline{G}_{Z_i}$'s are added to the stabilizer generating set, while we define $\mathcal{T}_0'$ to a subset of the group generated by their symplectic partners $\overline{G}_{X_i}$'s to obtain an $[\![n, k, d'; r-y, e, c_b' \leq 2^yc_b]\!]$-hybrid subsystem code $C_{\mathrm{GF}} = C(\mathcal{H}', \mathcal{S}', \mathcal{G}_0', \mathcal{L}_0', \mathcal{T}_0')$ that encodes up to $y$ more classical bits; i.e., it encodes $c_b' \leq 2^yc_b$ classical bit strings. The various sets associated with $C_{\mathrm{GF}}$ are as follows:
 \begin{itemize}
     \item $\mathcal{H}' = \big\langle \mathcal{H}\cup \left\{\overline{G}_{Z_{1}}^{(n)}, \dots, \overline{G}_{Z_{y}}^{(n)}\right\}\big\rangle$,
     \item $\mathcal{S}' = \langle \mathcal{S}\cup \left\{\overline{G}_{Z_{1}}, \dots, \overline{G}_{Z_{y}}\right\}\rangle$,
     \item $\mathcal{L}_0' = \mathcal{L}_0$,
     \item $\mathcal{G}_0' = \{\overline{G}_{X_{y+1}}, \overline{G}_{Z_{y+1}}, \cdots, \overline{G}_{X_{r}},\overline{G}_{Z_{r}}\}$,
     \item $\mathcal{T}_0' \subseteq \mathcal{T}_0G_{X_{\mathcal{T}}}$, where $G_{X_{\mathcal{T}}} := \langle\overline{G}_{X_{1}}, \overline{G}_{X_{2}}, \dots, \overline{G}_{X_{y}}\rangle$.
 \end{itemize}

\begin{lemma}\label{lem:coset_union_eq}
    When $\mathcal{T}_0' = \mathcal{T}_0 G_{X_{\mathcal{T}}}$, the following relation holds:
    \begin{align*}
    \bigcup_{O_i, O_j \in {\mathcal{T}_0'}^{(n)}, O_i \neq O_j}  \!\!\!\!\!\!O_i O_j^{-1} \mathcal Z(\mathcal H')\subseteq \bigcup_{O_i, O_j \in G_{X_{\mathcal{T}}}^{(n)}, O_i \neq O_j}  \!\!\!\!\!O_i O_j^{-1} \mathcal Z(\mathcal H')~~ \bigcup~~\bigcup_{O_i, O_j \in \mathcal{T}_0^{(n)}, O_i\neq O_j}  \!\!\!\!\!O_i O_j^{-1} \mathcal{Z(H)},
    \end{align*}
    with equality satisfied when $[\mathcal{T}_0^{(n)}]_{\mathcal{Z(H)}}^{\mathcal{P}_n}$ is a group.
\end{lemma}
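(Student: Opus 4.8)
The plan is to reduce the identity to a single structural fact about how the centralizer shrinks under gauge fixing, and then to perform coset bookkeeping on the two factorizations of $\mathcal{T}_0'^{(n)}$. First I would record the ambient commutation data. Since each gauge operator commutes with the stabilizer $\mathcal{S}$ and acts as the identity on the $e$ ebits, restricting to the first $n$ qubits gives $\mathcal{G}_0^{(n)} \subseteq \mathcal{Z}(\mathcal{H})$; in particular every $\overline{G}_{X_i}^{(n)}$ and $\overline{G}_{Z_i}^{(n)}$ centralizes $\mathcal{H}$, while within a symplectic pair $\overline{G}_{X_i}^{(n)}$ anticommutes with $\overline{G}_{Z_i}^{(n)}$ and commutes with $\overline{G}_{Z_j}^{(n)}$ for $j\neq i$. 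Because $\mathcal{H}' = \langle \mathcal{H}, \overline{G}_{Z_1}^{(n)},\dots,\overline{G}_{Z_y}^{(n)}\rangle$ adjoins $y$ independent commuting elements of $\mathcal{Z}(\mathcal{H})$, the key lemma I would prove is the disjoint decomposition
\[
\mathcal{Z}(\mathcal{H}) \;=\; \bigsqcup_{G\in G_{X_{\mathcal{T}}}^{(n)}} G\,\mathcal{Z}(\mathcal{H}'),
\]
obtained by matching the commutation pattern of any $E \in \mathcal{Z}(\mathcal{H})$ against $(\overline{G}_{Z_1}^{(n)},\dots,\overline{G}_{Z_y}^{(n)})$ with the unique $G\in G_{X_{\mathcal{T}}}^{(n)}$ having that pattern, so that $G^{-1}E\in \mathcal{Z}(\mathcal{H}')$. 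I would also note that $\mathcal{T}_0'^{(n)} = \mathcal{T}_0^{(n)} G_{X_{\mathcal{T}}}^{(n)}$ and that, since $G_{X_{\mathcal{T}}}^{(n)}\subseteq \mathcal{Z}(\mathcal{H})$ while distinct elements of $\mathcal{T}_0^{(n)}$ lie in distinct $\mathcal{Z}(\mathcal{H})$-cosets, every $O\in \mathcal{T}_0'^{(n)}$ has an essentially unique factorization $O = T G$ with $T\in \mathcal{T}_0^{(n)}$ and $G\in G_{X_{\mathcal{T}}}^{(n)}$.

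For the inclusion I would take a generic term $O_i O_j^{-1}\mathcal{Z}(\mathcal{H}')$ and split on the $\mathcal{T}_0^{(n)}$-parts of the factorizations $O_i = T_a G_p$, $O_j = T_b G_q$. When $T_a = T_b$ the cross terms cancel and $O_i O_j^{-1} = \pm G_p G_q^{-1}$ is a non-identity element of $G_{X_{\mathcal{T}}}^{(n)}$, landing in the first set of the right-hand union. When $T_a \neq T_b$, conjugating $G_p G_q^{-1}$ through $T_b^{-1}$ (a Pauli conjugation, hence $\pm$ the operator) yields $O_i O_j^{-1} = \pm T_a T_b^{-1} (G_p G_q^{-1})$ with $G_p G_q^{-1}\in \mathcal{Z}(\mathcal{H})$, so that $O_i O_j^{-1}\mathcal{Z}(\mathcal{H}')\subseteq T_a T_b^{-1}\mathcal{Z}(\mathcal{H})$, which lies in the second set. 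This establishes the stated containment.

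For the reverse inclusion under the group hypothesis I would cover each of the two right-hand sets. The $G_{X_{\mathcal{T}}}^{(n)}$-set is immediate: taking $T = I\in \mathcal{T}_0^{(n)}$ realizes $G_i G_j^{-1}\mathcal{Z}(\mathcal{H}')$ as $O_a O_b^{-1}\mathcal{Z}(\mathcal{H}')$ with $O_a = G_i$, $O_b = G_j$. For the $\mathcal{T}_0^{(n)}$-set I would invoke the decomposition lemma to write $T_i T_j^{-1}\mathcal{Z}(\mathcal{H}) = \bigcup_{G\in G_{X_{\mathcal{T}}}^{(n)}} T_i T_j^{-1} G\,\mathcal{Z}(\mathcal{H}')$ and realize each piece as $O_a O_b^{-1}\mathcal{Z}(\mathcal{H}')$ with $O_a = T_i G\in \mathcal{T}_0'^{(n)}$ and $O_b = T_j\in \mathcal{T}_0'^{(n)}$; these are distinct because $T_i G = T_j$ would force $T_i T_j^{-1}\in \mathcal{Z}(\mathcal{H})$, contradicting that $T_i,T_j$ represent different cosets. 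This is where the hypothesis that $[\mathcal{T}_0^{(n)}]_{\mathcal{Z}(\mathcal{H})}^{\mathcal{P}_n}$ is a group enters: it guarantees that $[\mathcal{T}_0'^{(n)}]$ is likewise a subgroup of the relevant quotient, so the factorizations $O = TG$ are compatible with the group law and every product $T_i T_j^{-1}$ occurring in the second set corresponds to a coset already carried by $\mathcal{T}_0'$, upgrading the inclusion to an equality.

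The main obstacle I anticipate is bookkeeping the two different centralizers cleanly: the left-hand cosets are taken modulo $\mathcal{Z}(\mathcal{H}')$ while the second right-hand set is taken modulo the larger $\mathcal{Z}(\mathcal{H})$, so the decomposition $\mathcal{Z}(\mathcal{H}) = \bigsqcup_{G} G\,\mathcal{Z}(\mathcal{H}')$ does essentially all of the real work and must be established before either direction is attempted. The only genuinely delicate points are the uniqueness and distinctness of the $TG$ factorization used repeatedly above, both of which rest entirely on $G_{X_{\mathcal{T}}}^{(n)}\subseteq \mathcal{Z}(\mathcal{H})$ together with the transversal property of $\mathcal{T}_0$; once these are in hand, each inclusion is a short conjugation calculation.
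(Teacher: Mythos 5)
Your proof is correct, and its containment direction is essentially the paper's own argument: factor each element of ${\mathcal{T}_0'}^{(n)} = \mathcal{T}_0^{(n)}G_{X_{\mathcal{T}}}^{(n)}$ as $TG$, split on whether the $\mathcal{T}_0^{(n)}$-parts coincide, and absorb signs and $G$-factors via $\mathcal{Z}(\mathcal{H}) = \langle \mathcal{Z}(\mathcal{H}'), G_{X_{\mathcal{T}}}^{(n)}\rangle$; your disjoint decomposition $\mathcal{Z}(\mathcal{H}) = \bigsqcup_{G \in G_{X_{\mathcal{T}}}^{(n)}} G\,\mathcal{Z}(\mathcal{H}')$ is a sharper form of that identity, and your commutation-pattern proof of it is sound. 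Where you genuinely depart from the paper is the equality direction. The paper never actually proves the reverse inclusion; it only asserts that elements of the form $O_1O_2G$, with the coset of $O_1O_2$ unrepresented in $\mathcal{T}_0^{(n)}$, lie in the right-hand set but not in the left-hand one, and on that basis restricts equality to the case where $[\mathcal{T}_0^{(n)}]_{\mathcal{Z}(\mathcal{H})}^{\mathcal{P}_n}$ is a group. You instead prove the reverse inclusion outright, writing $T_iT_j^{-1}\mathcal{Z}(\mathcal{H}) = \bigcup_{G} T_iT_j^{-1}G\,\mathcal{Z}(\mathcal{H}')$ and realizing each piece as $(T_iG)\,T_j^{-1}\,\mathcal{Z}(\mathcal{H}')$ with $T_iG$ and $T_j$ distinct elements of ${\mathcal{T}_0'}^{(n)}$. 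That argument is valid, and it buys something the paper's does not: it never invokes the group hypothesis, so it establishes equality unconditionally. In particular it exposes the paper's parenthetical justification as mistaken: an element $T_iT_j^{-1}G$ whose $\mathcal{T}_0$-product coset is unrepresented is nevertheless captured on the left by the pair $(T_iG, T_j)$, whose quotient is $\pm T_iT_j^{-1}G$. A concrete check: with $\mathcal{H} = \langle Z_1, Z_2\rangle$ on four qubits, gauge pair $(X_3, Z_3)$, $\mathcal{T}_0^{(n)} = \{I, X_1, X_2\}$ and $y=1$, both sides of the lemma equal the union of all seven cosets $X_1^aX_2^bX_3^c\,\mathcal{Z}(\mathcal{H}')$ with $(a,b,c)\neq(0,0,0)$, even though $[\mathcal{T}_0^{(n)}]_{\mathcal{Z}(\mathcal{H})}^{\mathcal{P}_n}$ is not a group.

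The one defect in your write-up is the closing paragraph of the equality argument, where you claim the group hypothesis ``enters'' to make the factorizations compatible with the group law. Nothing in your preceding steps uses that hypothesis, and no step needs it; as written, the appeal to it is vacuous and obscures the fact that you have proved a strictly stronger statement than the lemma. Either delete that paragraph or replace it with the observation that your argument shows the hypothesis is unnecessary; with that repair, the proof stands.
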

\begin{proof}
We note that
\begin{align}
    &\bigcup_{O_i, O_j \in \mathcal{T}_0^{(n)}G_{X_{\mathcal{T}}}^{(n)}, O_iO_j^{-1} \notin G_{X_{\mathcal{T}}}^{(n)}}  O_i O_j^{-1} \mathcal Z(\mathcal H') \nonumber\\
    &\subseteq \bigcup_{O_i, O_j \in \mathcal{T}_0^{(n)}, O_i\neq O_j}  O_i O_j^{-1} \mathcal \langle \mathcal{Z(H')}, G_{X_{\mathcal{T}}}^{(n)} \rangle = \bigcup_{O_i, O_j \in \mathcal{T}_0^{(n)}, O_i\neq O_j}  O_i O_j^{-1} \mathcal{Z(H)},\text{ as } \mathcal{Z(H)} = \langle \mathcal{Z(H')}, G_{X_{\mathcal{T}}}^{(n)}\rangle,\label{eqn:dist_proof_2}
\end{align}
 where the subset relation follows from the fact that every element in the left hand side set is contained in the right hand side set and, for $O_1,O_2 \in \mathcal{T}_0^{(n)}$ and $G \in G_{X_{\mathcal{T}}}^{(n)}$, the elements of the form $O_1O_2G$ belong to the right hand side set but not to the left hand side set when the representative of the coset corresponding to $O_1O_2$ does not belong to $\mathcal{T}_0^{(n)}$. Thus, the sets are equal when the $[\mathcal{T}_0^{(n)}]_{\mathcal{Z(H)}}^{\mathcal{P}_n}$ is a group.
As $\mathcal{T}_0' = \mathcal{T}_0 G_{X_{\mathcal{T}}}$, we have
    \begin{align}
    &\bigcup_{O_i, O_j \in \mathcal{T}_0^{(n)}G_{X_{\mathcal{T}}}^{(n)}, O_i \neq O_j}  O_i O_j^{-1} \mathcal Z(\mathcal H')\nonumber\\
    &= \bigcup_{O_i, O_j \in G_{X_{\mathcal{T}}}^{(n)}, O_i \neq O_j}  O_i O_j^{-1} \mathcal Z(\mathcal H')~~ \bigcup~~\bigcup_{O_i, O_j \in \mathcal{T}_0^{(n)}G_{X_{\mathcal{T}}}^{(n)}, O_iO_j^{-1} \notin G_{X_{\mathcal{T}}}^{(n)}}  O_i O_j^{-1} \mathcal Z(\mathcal H')\nonumber\\
    &\subseteq \bigcup_{O_i, O_j \in G_{X_{\mathcal{T}}}^{(n)}, O_i \neq O_j}  O_i O_j^{-1} \mathcal Z(\mathcal H')~~ \bigcup~~\bigcup_{O_i, O_j \in \mathcal{T}_0^{(n)}, O_i\neq O_j}  O_i O_j^{-1} \mathcal{Z(H)},\text{ as } \mathcal{Z(H)} = \langle \mathcal{Z(H')}, G_{X_{\mathcal{T}}}^{(n)}\rangle.\label{eqn:dist_proof_2_a}
\end{align}
\end{proof}

 \begin{proposition}\label{prop:dist_gauge_to_cbit}
For a code $C_{\mathrm{GF}}$ constructed from an EAOAQEC code $C_{\mathrm{EAOA}}$ using the gauge fixing construction, when $\mathrm{min~wt}(\mathcal{G'}^{(n)}{G_{X_{\mathcal{T}}}^{(n)}}^*)\geq d(C_{\mathrm{EAOA}})$, then $d(C_{\mathrm{GF}}) \geq d(C_{\mathrm{EAOA}})$, with equality satisfied when ${\mathcal{T}_0'}^{(n)} = \mathcal{T}_0^{(n)}G_{X_{\mathcal{T}}}^{(n)}$ and $[\mathcal{T}_0^{(n)}]_{\mathcal{Z(H)}}^{\mathcal{P}_n}$ is a group, where $\mathcal{G'}^{(n)} =\langle \mathcal{H}_I', \mathcal{G'}_0^{(n)}, iI\rangle$. 
\end{proposition}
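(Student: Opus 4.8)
The plan is to unfold the distance definition \eqref{noiselessBobeqn_dist} for both codes and compare the two sets over which we minimize weight, using the explicit descriptions of $\mathcal{H}'$, $\mathcal{G}_0'$, and $\mathcal{T}_0'$ from the gauge fixing construction together with Lemma~\ref{lem:coset_union_eq}. Recall that for $C_{\mathrm{GF}}$ the distance is the minimum weight over
\[
\mathcal{A}' := \left(\mathcal{Z}(\mathcal{H}')\setminus \langle \mathcal{H}_I', \mathcal{G}_0'^{(n)}, iI\rangle\right)\bigcup\Big(\bigcup_{O_i,O_j\in {\mathcal{T}_0'}^{(n)},\, O_i\neq O_j} O_iO_j^{-1}\mathcal{Z}(\mathcal{H}')\Big),
\]
while $d(C_{\mathrm{EAOA}})$ is the minimum weight over the analogous set $\mathcal{A}$ built from $\mathcal{H}$, $\mathcal{G}_0^{(n)}$, and $\mathcal{T}_0^{(n)}$. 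The goal is to show $d(C_{\mathrm{GF}})\geq d(C_{\mathrm{EAOA}})$ under the stated weight hypothesis, which amounts to verifying that every operator in $\mathcal{A}'$ has weight at least $d(C_{\mathrm{EAOA}})$.

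First I would handle the coset-transversal part. Since $\mathcal{T}_0'\subseteq \mathcal{T}_0 G_{X_{\mathcal{T}}}$, Lemma~\ref{lem:coset_union_eq} gives the containment of the second union of $\mathcal{A}'$ inside the union of two pieces: the ``pure gauge'' piece $\bigcup_{O_i,O_j\in G_{X_{\mathcal{T}}}^{(n)},\,O_i\neq O_j} O_iO_j^{-1}\mathcal{Z}(\mathcal{H}')$, and the original transversal piece $\bigcup_{O_i,O_j\in\mathcal{T}_0^{(n)},\,O_i\neq O_j} O_iO_j^{-1}\mathcal{Z}(\mathcal{H})$. The second of these is contained in $\mathcal{A}$ and so is already controlled: every element has weight $\geq d(C_{\mathrm{EAOA}})$. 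For the pure gauge piece, note that $O_iO_j^{-1}$ is (up to phase) a nontrivial element of $G_{X_{\mathcal{T}}}^{(n)}$, so that $O_iO_j^{-1}\mathcal{Z}(\mathcal{H}')\subseteq \mathcal{G}'^{(n)}{G_{X_{\mathcal{T}}}^{(n)}}^{*}$ after absorbing the centralizer into $\mathcal{G}'^{(n)}=\langle \mathcal{H}_I',\mathcal{G}_0'^{(n)},iI\rangle$; this is exactly where the hypothesis $\mathrm{min\,wt}(\mathcal{G}'^{(n)}{G_{X_{\mathcal{T}}}^{(n)}}^{*})\geq d(C_{\mathrm{EAOA}})$ is invoked to bound these weights from below.

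Next I would treat the normalizer-difference part $\mathcal{Z}(\mathcal{H}')\setminus \langle \mathcal{H}_I',\mathcal{G}_0'^{(n)},iI\rangle$. Here the key algebraic relation is $\mathcal{Z}(\mathcal{H})=\langle\mathcal{Z}(\mathcal{H}'),G_{X_{\mathcal{T}}}^{(n)}\rangle$ from \eqref{eqn:dist_proof_2}, together with the fact that adjoining the $\overline{G}_{Z_i}^{(n)}$ to the stabilizer shrinks the centralizer, so $\mathcal{Z}(\mathcal{H}')\subseteq\mathcal{Z}(\mathcal{H})$. The idea is that any element $E$ of $\mathcal{Z}(\mathcal{H}')\setminus\langle\mathcal{H}_I',\mathcal{G}_0'^{(n)},iI\rangle$ either already lies in $\mathcal{Z}(\mathcal{H})\setminus\langle\mathcal{H}_I,\mathcal{G}_0^{(n)},iI\rangle\subseteq\mathcal{A}$ (hence has weight $\geq d(C_{\mathrm{EAOA}})$), or, if it has been ``demoted'' out of the new dressed logical group, it must pick up a factor from $G_{X_{\mathcal{T}}}^{(n)}$ relative to the old structure and thereby again falls under the $\mathcal{G}'^{(n)}{G_{X_{\mathcal{T}}}^{(n)}}^{*}$ weight bound. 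Carefully tracking which operators move between the gauge/logical/stabilizer roles when the $\overline{G}_{Z_i}$ are promoted to stabilizers is the step I expect to be the main obstacle, since the bookkeeping of $\mathcal{H}_I'$ versus $\mathcal{H}_I$ and of $\mathcal{G}_0'^{(n)}$ versus $\mathcal{G}_0^{(n)}$ must be done precisely to be sure nothing of weight below $d(C_{\mathrm{EAOA}})$ slips into $\mathcal{A}'$.

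Finally, for the equality claim, I would specialize to ${\mathcal{T}_0'}^{(n)}=\mathcal{T}_0^{(n)}G_{X_{\mathcal{T}}}^{(n)}$ with $[\mathcal{T}_0^{(n)}]_{\mathcal{Z}(\mathcal{H})}^{\mathcal{P}_n}$ a group, in which case Lemma~\ref{lem:coset_union_eq} holds with equality and the two unions coincide; combined with the reverse containment $\mathcal{A}\subseteq\mathcal{A}'$ (obtained by reading the same relations in the opposite direction, since promoting gauge operators to stabilizers cannot increase the minimum weight beyond what the hypothesis already forces), this pins down $d(C_{\mathrm{GF}})=d(C_{\mathrm{EAOA}})$. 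The whole argument is essentially a set-theoretic comparison driven by \eqref{eqn:dist_proof_2}, Lemma~\ref{lem:coset_union_eq}, and the single weight hypothesis, so beyond the role-tracking bookkeeping no heavy computation should be required.
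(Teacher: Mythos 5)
Your overall strategy (unfold \Cref{def:distance} for both codes, push the transversal union through \Cref{lem:coset_union_eq}, and reserve the weight hypothesis for whatever is genuinely new) is the same as the paper's, and your treatment of the old-transversal piece and of the final equality claim is sound in outline. However, there is a genuine error in your handling of the ``pure gauge'' piece. You assert that $O_iO_j^{-1}\mathcal{Z}(\mathcal{H}')\subseteq \mathcal{G}'^{(n)}{G_{X_{\mathcal{T}}}^{(n)}}^{*}$ ``after absorbing the centralizer into $\mathcal{G}'^{(n)}$.'' That containment is false: $\mathcal{Z}(\mathcal{H}')=\langle \mathcal{H}_I',\mathcal{G}_0'^{(n)},\mathcal{L}_0^{(n)},iI\rangle=\mathcal{G}'^{(n)}\langle\mathcal{L}_0^{(n)}\rangle$ strictly contains $\mathcal{G}'^{(n)}$, so the pure gauge piece also contains every element $g\,L\,h$ with $g\in\mathcal{G}'^{(n)}$, $L\in\langle\mathcal{L}_0^{(n)}\rangle^{*}$ a nontrivial logical, and $h\in {G_{X_{\mathcal{T}}}^{(n)}}^{*}$. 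The hypothesis $\mathrm{min~wt}\left(\mathcal{G}'^{(n)}{G_{X_{\mathcal{T}}}^{(n)}}^{*}\right)\geq d(C_{\mathrm{EAOA}})$ says nothing about such elements, so your argument as written leaves their weights unbounded and the inequality $d(C_{\mathrm{GF}})\geq d(C_{\mathrm{EAOA}})$ unproven.

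The missing step is the decomposition the paper records in \eqref{eq_ref2},
\begin{equation*}
\bigcup_{O_i, O_j \in G_{X_{\mathcal{T}}}^{(n)},\, O_i \neq O_j}  O_i O_j^{-1} \mathcal{Z}(\mathcal{H}')
=\left(\mathcal{G}'^{(n)}{G_{X_{\mathcal{T}}}^{(n)}}^{*}\right) \cup \left(\mathcal{G}'^{(n)}\langle\mathcal{L}_0^{(n)} \rangle^{*}{G_{X_{\mathcal{T}}}^{(n)}}^{*}\right),
\end{equation*}
where only the first factor is covered by the hypothesis; the second factor must be merged with the normalizer-difference part $\mathcal{G}'^{(n)}\langle\mathcal{L}_0^{(n)}\rangle^{*}$ via the set identity \eqref{eqn:set_relation} to give $\mathcal{G}^{(n)}\langle\mathcal{L}_0^{(n)}\rangle^{*}=\mathcal{Z}(\mathcal{H})\setminus\langle\mathcal{H}_I,\mathcal{G}_0^{(n)},iI\rangle$, which already lies in the set defining $d(C_{\mathrm{EAOA}})$ (this is \eqref{eqn:dist_proof_11}). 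Notably, this is precisely the ``picks up a $G_{X_{\mathcal{T}}}$ factor and lands back in the original code's set'' argument you sketch — but you deploy it on the wrong piece. The normalizer-difference part needs no case analysis at all, since $\mathcal{Z}(\mathcal{H}')\setminus\langle\mathcal{H}_I',\mathcal{G}_0'^{(n)},iI\rangle=\mathcal{G}'^{(n)}\langle\mathcal{L}_0^{(n)}\rangle^{*}\subseteq\mathcal{G}^{(n)}\langle\mathcal{L}_0^{(n)}\rangle^{*}$ holds outright (no element there is ``demoted''; those with nontrivial logical action can never lie in a gauge group), whereas the coset piece is where the logical-carrying elements hide. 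Relocating that argument to the coset piece closes the gap and recovers the paper's proof.
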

\begin{proof}
 By definition, the distance of the codes $C_{\mathrm{EAOA}}$ and $C_{\mathrm{GF}}$ are respectively $d(C_{\mathrm{EAOA}})$ and  $d(C_{\mathrm{GF}})$ as given below:
 \begin{align}
    d(C_{\mathrm{EAOA}}) &= \min\mbox{wt} \left(\left(\mathcal Z(\mathcal H)\backslash \langle \mathcal H_I, \mathcal{G}_0^{(n)}, iI \rangle \right)\bigcup \Bigg( \bigcup_{O_i, O_j \in \mathcal{T}_0^{(n)}, O_i \neq O_j}  O_i O_j^{-1} \mathcal Z(\mathcal H)\Bigg)\right),\label{eqn:dist_c_eaoa}\\
    d(C_{\mathrm{GF}}) &= \min\mbox{wt} \left(\left(\mathcal Z(\mathcal H')\backslash \langle \mathcal H_I', {\mathcal{G}_0'}^{(n)}, iI \rangle \right)\bigcup \Bigg( \bigcup_{O_i, O_j \in {\mathcal{T}_0'}^{(n)}, O_i \neq O_j}  O_i O_j^{-1} \mathcal Z(\mathcal H')\Bigg)\right).\label{eqn:dist_c_eaoa_prime}
 \end{align}  

We focus on the case ${\mathcal{T}'_0}^{(n)} = \mathcal{T}_0^{(n)} G_{X_{\mathcal{T}}}^{(n)}$ as $d(C_{\mathrm{GF}})$ with ${\mathcal{T}'_0}^{(n)} = \mathcal{T}_0^{(n)} G_{X_{\mathcal{T}}}^{(n)}$ is lower bounded by $d(C_{\mathrm{GF}})$ with ${\mathcal{T}'_0}^{(n)} \subseteq \mathcal{T}_0^{(n)} G_{X_{\mathcal{T}}}^{(n)}$. As $\mathcal{Z(H)} = \langle\mathcal{G}^{(n)},\mathcal{L}_0^{(n)}\rangle$ and $\mathcal{G}^{(n)} = \langle\mathcal{H}_I,\mathcal{G}_0^{(n)}, iI \rangle$, we obtain
    \begin{align}
    &\mathcal Z(\mathcal H)\backslash \langle \mathcal{H}_I, \mathcal{G}_0^{(n)}, iI \rangle = \langle \mathcal{H}_{I}, \mathcal{G}_0^{(n)}, \mathcal{L}_0^{(n)}, iI \rangle \setminus \langle \mathcal{H}_{I}, \mathcal{G}_0^{(n)}, iI \rangle = \mathcal{G}^{(n)}\langle\mathcal{L}_0^{(n)}\rangle^*\label{eq_ref1}.
\end{align} 
Similarly,
$\mathcal Z(\mathcal H')\backslash \langle \mathcal{H}_I', {\mathcal{G}_0'}^{(n)}, iI \rangle = \mathcal{G'}^{(n)}\langle\mathcal{L}_0^{(n)}\rangle^*.\nonumber $

For sets $A$, $B$, and $C$ whose elements commute across the sets, when $A$ and $C$ are groups, we obtain 
\begin{align}
    (A B) \cup ( A B C^*) = (A B) \cup ( A C^* B) = (A \cup AC^*) B = \langle A,C\rangle B \label{eqn:set_relation}
\end{align}. 
From \Cref{eqn:set_relation}, we obtain
\begin{align}
    \bigcup_{O_i, O_j \in G_{X_{\mathcal{T}}}^{(n)}, O_i \neq O_j}  O_i O_j^{-1} \mathcal Z(\mathcal H') &= \langle\mathcal{G'}^{(n)},\mathcal{L}_0^{(n)} \rangle{G_{X_{\mathcal{T}}}^{(n)}}^* = \left(\mathcal{G'}^{(n)}{G_{X_{\mathcal{T}}}^{(n)}}^*\right) \cup \left(\mathcal{G'}^{(n)}\langle\mathcal{L}_0^{(n)} \rangle^*{G_{X_{\mathcal{T}}}^{(n)}}^*\right). \label{eq_ref2}
\end{align}
Computing the union of the sets given by \Cref{eq_ref1} and \Cref{eq_ref2}, considering \Cref{eqn:set_relation}, and noting that $\mathcal{G}^{(n)} = \langle\mathcal{G'}^{(n)}, G_{X_{\mathcal{T}}}^{(n)}\rangle$ and $\mathcal{G'}^{(n)}$ and $G_{X_{\mathcal{T}}}^{(n)}$ are groups, we obtain
\begin{align}
    \left(\mathcal{G'}^{(n)}\langle\mathcal{L}_0^{(n)} \rangle^*\right) \cup \left(\mathcal{G'}^{(n)}{G_{X_{\mathcal{T}}}^{(n)}}^*\right)\cup \left(\mathcal{G'}^{(n)}\langle\mathcal{L}_0^{(n)} \rangle^*{G_{X_{\mathcal{T}}}^{(n)}}^*\right) = \left(\mathcal{G}^{(n)}\langle\mathcal{L}_0^{(n)} \rangle^*\right) \cup \left(\mathcal{G'}^{(n)} {G_{X_{\mathcal{T}}}^{(n)}}^*\right),\nonumber\\
      \Rightarrow \mathcal Z(\mathcal H')\backslash \langle \mathcal{H}_I', {\mathcal{G}_0'}^{(n)}, iI \rangle\bigcup\bigcup_{O_i, O_j \in G_{X_{\mathcal{T}}}^{(n)}, O_i \neq O_j}  O_i O_j^{-1} \mathcal Z(\mathcal H') = \left(\mathcal{G}^{(n)}\langle\mathcal{L}_0^{(n)} \rangle^*\right)\cup \left({\mathcal{G}'}^{(n)}{G_{X_{\mathcal{T}}}^{(n)}}^*\right),\label{eqn:dist_proof_11}
\end{align}
which is part of the set for computing $d(C_{\mathrm{GF}})$ in \Cref{eqn:dist_c_eaoa_prime} based on the result in \Cref{lem:coset_union_eq}.

By using \Cref{eqn:dist_proof_11} and \Cref{lem:coset_union_eq}, we obtain
\begin{align}
    &\mathcal Z(\mathcal H')\backslash \langle \mathcal{H}_I', {\mathcal{G}_0'}^{(n)}, iI \rangle ~~\bigcup ~~\bigcup_{O_i, O_j \in \mathcal{T}_0^{(n)} G_{X_{\mathcal{T}}}^{(n)}, O_i \neq O_j}  O_i O_j^{-1} \mathcal Z(\mathcal H')\nonumber\\
    &\subseteq \mathcal Z(\mathcal{H})\backslash \langle \mathcal{H}_I, \mathcal{G}_0^{(n)}, iI \rangle ~~~\bigcup~~ {\mathcal{G}'}^{(n)}{G_{X_{\mathcal{T}}}^{(n)}}^*~~~\bigcup~~\bigcup_{O_i, O_j \in \mathcal{T}_0^{(n)}, O_i\neq O_j}  O_i O_j^{-1} \mathcal{Z(H)},\nonumber\\
    &= \mathcal Z(\mathcal{H})\backslash \langle \mathcal{H}_I, \mathcal{G}_0^{(n)}, iI \rangle ~~~\bigcup~~\bigcup_{O_i, O_j \in \mathcal{T}_0^{(n)}, O_i\neq O_j}  O_i O_j^{-1} \mathcal{Z(H)}~~~\bigcup~~ {\mathcal{G}'}^{(n)}{G_{X_{\mathcal{T}}}^{(n)}}^*.
\end{align}
Therefore, when $\mathrm{min~wt}\left({\mathcal{G}'}^{(n)}{G_{X_{\mathcal{T}}}^{(n)}}^*\right) \geq d(C_{\mathrm{EAOA}})$, we obtain
\begin{align}
     d(C_{\mathrm{GF}}) 
    &\geq  \min\mbox{wt} \left(\mathcal Z(\mathcal{H})\backslash \langle \mathcal{H}_I, \mathcal{G}_0^{(n)}, iI \rangle ~\bigcup~\bigcup_{O_i, O_j \in \mathcal{T}_0^{(n)}, O_i\neq O_j}  O_i O_j^{-1} \mathcal{Z(H)}~\bigcup~ {\mathcal{G}'}^{(n)}{G_{X_{\mathcal{T}}}^{(n)}}^*\right),\nonumber\\
    &= d(C_{\mathrm{EAOA}}).\nonumber
\end{align}
 
\end{proof}
\begin{restatable}{corollary}{CoroGeneral}\label{coro:dist_gauge_to_cbit}
Based on an EAOAQEC code $C_{\mathrm{EAOA}} = (\mathcal{H}, \mathcal{S}, \mathcal{G}_0, \mathcal{L}_0, \mathcal{T}_0)$, let $C_{\mathrm{EAh}}^{(\mathcal{T}_0)} = (\mathcal{H}, \mathcal{S}, \emptyset, \mathcal{G}_0 \cup \mathcal{L}_0, \mathcal{T}_0)$ be the EA hybrid subspace stabilizer code. If $d(C_{\mathrm{EAh}}^{(\mathcal{T}_0)}) = d(C_{\mathrm{EAOA}})$, then $d(C_{\mathrm{GF}}) \geq d(C_{\mathrm{EAOA}})$, with equality satisfied when ${\mathcal{T}_0'}^{(n)} = \mathcal{T}_0^{(n)}G_{X_{\mathcal{T}}}^{(n)}$ and $[\mathcal{T}_0^{(n)}]_{\mathcal{Z(H)}}^{\mathcal{P}_n}$ is a group, where $C_{\mathrm{GF}}$ is obtained from $C_{\mathrm{EAOA}}$ by converting $y$ gauge symplectic pairs to stabilizer-coset transversal pairs.
\end{restatable}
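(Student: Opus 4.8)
The plan is to deduce the corollary directly from \Cref{prop:dist_gauge_to_cbit}. The only hypothesis of that proposition is the weight bound $\mathrm{min~wt}(\mathcal{G'}^{(n)}{G_{X_{\mathcal{T}}}^{(n)}}^*) \geq d(C_{\mathrm{EAOA}})$, and once this bound holds the inequality $d(C_{\mathrm{GF}}) \geq d(C_{\mathrm{EAOA}})$ together with the stated equality conditions follow verbatim. So the whole task reduces to showing that the assumption $d(C_{\mathrm{EAh}}^{(\mathcal{T}_0)}) = d(C_{\mathrm{EAOA}})$ forces this weight bound.

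First I would write out the distance of the auxiliary code $C_{\mathrm{EAh}}^{(\mathcal{T}_0)}$ using \Cref{def:distance}. Since this code has an empty gauge group, the term $\mathcal{G}_0^{(n)}$ drops out of the excluded subgroup and
\[
d(C_{\mathrm{EAh}}^{(\mathcal{T}_0)}) = \min\mathrm{wt}\left(\left(\mathcal{Z}(\mathcal{H})\setminus \langle \mathcal{H}_I, iI\rangle\right)\bigcup \Big(\bigcup_{i\neq j} T_i^{(n)}(T_j^{(n)})^{-1}\mathcal{Z}(\mathcal{H})\Big)\right).
\]
Note in passing that shrinking the excluded subgroup can only enlarge this set, so $d(C_{\mathrm{EAh}}^{(\mathcal{T}_0)}) \leq d(C_{\mathrm{EAOA}})$ holds unconditionally, which is what makes the equality hypothesis a natural testable condition. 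The core of the argument is then the containment
\[
\mathcal{G'}^{(n)}{G_{X_{\mathcal{T}}}^{(n)}}^* \subseteq \mathcal{Z}(\mathcal{H})\setminus \langle \mathcal{H}_I, iI\rangle,
\]
which exhibits the extra set appearing in \Cref{prop:dist_gauge_to_cbit} as a subset of the set defining $d(C_{\mathrm{EAh}}^{(\mathcal{T}_0)})$.

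To establish this containment, membership in $\mathcal{Z}(\mathcal{H})$ is immediate, since both $\mathcal{G'}^{(n)}$ and $G_{X_{\mathcal{T}}}^{(n)}$ are contained in $\mathcal{G}^{(n)} = \langle \mathcal{H}_I, \mathcal{G}_0^{(n)}, iI\rangle \subseteq \mathcal{Z}(\mathcal{H})$. The substantive point, and what I expect to be the main obstacle, is showing that no element $gh$ with $g \in \mathcal{G'}^{(n)}$ and $h \in {G_{X_{\mathcal{T}}}^{(n)}}^*$ can lie in $\langle \mathcal{H}_I, iI\rangle$. Here I would exploit the symplectic structure of the converted gauge pairs: any nontrivial $h$ contains a factor $\overline{G}_{X_{i_0}}^{(n)}$ with $i_0 \leq y$, so $h$ anticommutes with the converted stabilizer $\overline{G}_{Z_{i_0}}^{(n)}$, whereas each generator of $\mathcal{G'}^{(n)}$ — namely $\mathcal{H}_I$, the operators $\overline{G}_{Z_j}^{(n)}$ for $j \leq y$, and the residual gauge pairs with $j > y$ — commutes with $\overline{G}_{Z_{i_0}}^{(n)}$ by the defining commutation relations of the symplectic gauge pairs. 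Hence $gh$ anticommutes with $\overline{G}_{Z_{i_0}}^{(n)}$, while every element of $\langle \mathcal{H}_I, iI\rangle$ commutes with it, so $gh \notin \langle \mathcal{H}_I, iI\rangle$.

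Combining these observations, $\mathcal{G'}^{(n)}{G_{X_{\mathcal{T}}}^{(n)}}^*$ is contained in the set whose minimum weight is $d(C_{\mathrm{EAh}}^{(\mathcal{T}_0)})$, so $\mathrm{min~wt}(\mathcal{G'}^{(n)}{G_{X_{\mathcal{T}}}^{(n)}}^*) \geq d(C_{\mathrm{EAh}}^{(\mathcal{T}_0)}) = d(C_{\mathrm{EAOA}})$ by hypothesis. This is precisely the weight bound required by \Cref{prop:dist_gauge_to_cbit}, and invoking that proposition delivers both $d(C_{\mathrm{GF}}) \geq d(C_{\mathrm{EAOA}})$ and the equality statement under the conditions ${\mathcal{T}_0'}^{(n)} = \mathcal{T}_0^{(n)}G_{X_{\mathcal{T}}}^{(n)}$ and $[\mathcal{T}_0^{(n)}]_{\mathcal{Z(H)}}^{\mathcal{P}_n}$ being a group.
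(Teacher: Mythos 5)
Your proposal is correct and takes essentially the same route as the paper's proof: write out $d(C_{\mathrm{EAh}}^{(\mathcal{T}_0)})$ via \Cref{def:distance} (no gauge term since the gauge group is trivial), establish the containment $\mathcal{G'}^{(n)}{G_{X_{\mathcal{T}}}^{(n)}}^* \subseteq \mathcal Z(\mathcal H)\backslash \langle \mathcal{H}_I, iI \rangle$, conclude $\mathrm{min~wt}(\mathcal{G'}^{(n)}{G_{X_{\mathcal{T}}}^{(n)}}^*) \geq d(C_{\mathrm{EAh}}^{(\mathcal{T}_0)}) = d(C_{\mathrm{EAOA}})$, and invoke \Cref{prop:dist_gauge_to_cbit}. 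The only difference is that the paper asserts the containment without argument, whereas you justify it explicitly via the anticommutation of any nontrivial element of ${G_{X_{\mathcal{T}}}^{(n)}}^*$ with a converted stabilizer $\overline{G}_{Z_{i_0}}^{(n)}$, which is a correct (and welcome) filling-in of that step.
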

\begin{proof}
 Let $C_{\mathrm{EAh}}^{(\mathcal{T}_0)}$ be the EA hybrid subspace stabilizer code based on the stabilizers of $C_{\mathrm{EAOA}}$ and $\mathcal{T}_0$. The distance of $C_{\mathrm{EAh}}^{(\mathcal{T}_0)}$ is
 \begin{align}
     d(C_{\mathrm{EAh}}^{(\mathcal{T}_0)}) = \min\mbox{wt} \Bigg(\left(\mathcal Z(\mathcal H)\backslash \langle \mathcal{H}_I, iI \rangle \right) \bigcup \Bigg( \bigcup_{O_i, O_j \in \mathcal{T}_0^{(n)}, O_i \neq O_j}  O_i O_j^{-1} \mathcal Z(\mathcal H)\Bigg)\Bigg)
 \end{align}
 as $C_{\mathrm{EAh}}^{(\mathcal{T}_0)}$ has no non-trivial gauge operators. As $\mathcal{G'}^{(n)}{G_{X_{\mathcal{T}}}^{(n)}}^* \subseteq \mathcal Z(\mathcal H)\backslash \langle \mathcal{H}_I, iI \rangle$ and $d(C_{\mathrm{EAOA}}) = d(C_{\mathrm{EAh}}^{(\mathcal{T}_0)})$, we have  \text{min wt}($\mathcal{G'}^{(n)}{G_{X_{\mathcal{T}}}^{(n)}}^*) \geq d(C_{\mathrm{EAOA}})$. 
 Therefore, from \Cref{prop:dist_gauge_to_cbit}, the result follows.
\end{proof}

\begin{example}[15-qubit subsystem color code]
For the $[\![15,1,3;6]\!]$-subsystem color code, when 
we convert the gauge generators $\overline{G}_{X_1}$ and $\overline{G}_{X_2}$ to stabilizers in $\mathcal{S}$ and form $\mathcal{T}_0 = \{I^{\otimes 15}, \overline{G}_{Z_1}, \overline{G}_{Z_2}, \overline{G}_{Z_1}\overline{G}_{Z_2}\}$, the hybrid subsystem code $C_{\mathrm{GF}}$ obtained is an $[\![15,1,3;4,0,4]\!]$ code that encodes $2$ classical bits. We note that $C_{\mathrm{EAOA}}'$ contains $2$ less gauge qubits and encodes $2$ classical bits. 
\end{example}

\begin{example}[15-qubit subsystem hybrid color code]
Let $C_{\mathrm{EAOA}}^{(\mathcal{T}_0)}$ be the $[\![15, 1, 2; 6,0,3]\!]$-hybrid subsystem code obtained from the subsystem code with $\mathcal{T}_0 = \{I^{\otimes 15}, X_5Z_6, X_9Z_{11}\}$ as given in the table below: 
\[
\begin{array}{c|ccccccccccccccc}
\hhline{================}
T_0 & I & I & I & I & I & I & I & I & I & I & I & I & I & I & I\\
T_1 & I & I & I & I & X & Z & I & I & I & I & I & I & I & I & I\\
T_2 & I & I & I & I & I & I & I & I & X & I & Z & I & I & I & I\\
\hhline{================}
\end{array}
\]
 By converting the gauge generators $\overline{G}_{X_1}$ and $\overline{G}_{X_2}$ of $C_{\mathrm{EAOA}}^{(\mathcal{T}_0)}$ to generators of $\mathcal{S}$ and considering $\mathcal{T}_0 = \{I^{\otimes 15}, \overline{G}_{Z_1}, \overline{G}_{Z_2}, \overline{G}_{Z_1}\overline{G}_{Z_2}\} \{I^{\otimes 15}, X_5Z_6, X_9Z_{11}\}$, we obtain a $[\![15, 1, 2; 4,0,12]\!]$-hybrid subsystem code. 
The code encodes $12$ classical bit strings, which we view as encoding $2$ more classical bits as $\mathrm{log}_2(12) = \mathrm{log}_2(3) + \mathrm{log}_2(4) $ $= \mathrm{log}_2(3)$ + 2.
\end{example}

\subsection{Clean Qubits Construction}\label{sec:isotropictosymplectic}

The second construction of an EAOAQEC code from a subsystem code involves first constructing a hybrid subsystem code from a subsystem code using the procedure in \Cref{sec:GaugeToCosetTransversal} and ideas similar to those in \cite{lai2012entanglement} to construct an EAQEC code with imperfect ebits from a stabilizer code. The construction we propose encodes both classical and quantum information and also assumes Bob's qubits to be error-free, unlike \cite{lai2012entanglement}.

Let $C_{\mathrm{hsub}} = \mathcal{C} (\mathcal{S}, \mathcal{G}_0, \mathcal{L}_0, \mathcal{T}_0)$ be an $[\![n, n-s-r, d; r,0,c_b]\!]$ hybrid subsystem code that encodes $c_b = |\mathcal{T}_0|$ classical bit strings and $s$ is the number of stabilizer generators, where $C_{\mathrm{hsub}}$ could be obtained from \Cref{sec:GaugeToCosetTransversal}. 
We obtain the new EAOAQEC code by carefully selecting some qubits as ebits and considering some isotropic operators of $C_{\mathrm{hsub}}$ as the extended symplectic operators and the rest as extended isotropic operators of the code. The main idea is to obtain a stabilizer generating set for $\mathcal{S}$ and a set of qubits $E_Q$ such that for each qubit in $E_Q$ within the generating set there exists exactly one stabilizer generator performing $X$ and exactly one stabilizer generator performing $Z$ on the qubit, with these two stabilizer generators being different. %

Without loss of generality, if we consider $E_Q$ to be the first $|E_Q|$ qubits, then the stabilizer generating set obtained has the following form:
\begin{align} 
    S_1 &= X \otimes I^{\otimes (|E_Q|-1)} \otimes O_1,\nonumber\\
    S_2 &= Z \otimes I^{\otimes (|E_Q|-1)} \otimes O_2,\nonumber\\
    S_3 &= I \otimes X \otimes I^{\otimes (|E_Q|-2)} \otimes O_3,\nonumber\\
    S_4 &= I \otimes Z \otimes I^{\otimes (|E_Q|-2)} \otimes O_4,\nonumber\\
    \vdots\nonumber\\
    S_{2|E_Q|-1} &= I^{\otimes (|E_Q|-1)} \otimes X \otimes O_{2|E_Q|-1},\nonumber\\        S_{2|E_Q|} &= I^{\otimes (|E_Q|-1)} \otimes Z \otimes O_{2|E_Q|},\nonumber\\
    S_i &= I^{\otimes |E_Q|} \otimes O_{i},\text{ for }2|E_Q| < i\leq s. \label{eqn:EAQEC_Stab_form}
\end{align}
From the above form, the restriction of the operators over the last $(n-|E_Q|)$ qubits form symplectic pairs $(O_1,O_2)$, $\cdots$, $(O_{2|E_Q|-1},O_{2|E_Q|})$ and isotropic operators $O_i$, with $2|E_Q|<i\leq s$, and denote the group they generate by $\mathcal{H}$. Thus, we view the stabilizer generators $\{S_j\}_j$ to be the extended operators obtained from $\mathcal{H}$ whose generators are $\{O_j\}_j$. We consider the first $|E_Q|$ qubits to be ebits and the rest to be Alice's qubits. 

We next provide a mathematical condition for the existence of stabilizer generators $S_j$ in the form in \Cref{eqn:EAQEC_Stab_form}.
We also provide a procedure to obtain the stabilizer generating set $\{S_j\}_j$ when $|E_Q| \leq \mathrm{min~wt}(\mathcal{Z}(\mathcal{S})^*)$. 
We define $\mathrm{supp}(O)$ to be the support of an operator $O$ in the Pauli group; i.e., the qubits over which $O$ acts non-trivially. %

\begin{lemma} \label{lem:col_ind_subset_centralizer}
    Suppose there exists a set of stabilizer generators for $\mathcal{S}$ with the form in \Cref{eqn:EAQEC_Stab_form} up to qubit permutations. For any Pauli operator $O \notin <iI>$, if $\mathrm{supp}(O)\subseteq E_Q$, then $O \notin \mathcal{Z(S)}$. 
\end{lemma}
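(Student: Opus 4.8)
The plan is to prove the contrapositive-flavored statement directly: given any nontrivial $O$ supported on $E_Q$, I would exhibit an \emph{explicit} generator of $\mathcal{S}$ that anticommutes with $O$, which immediately yields $O \notin \mathcal{Z}(\mathcal{S})$. First I would write $O = P_1 \otimes \cdots \otimes P_{|E_Q|} \otimes I^{\otimes(n-|E_Q|)}$ with each $P_j$ a single-qubit Pauli (up to phase). Since $\mathrm{supp}(O) \subseteq E_Q$ and $O \notin \langle iI \rangle$, there is an index $j$ with $1 \le j \le |E_Q|$ and $P_j \neq I$, so $P_j \in \{X, Y, Z\}$ up to phase.

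Next I would focus on the two generators $S_{2j-1}$ and $S_{2j}$ from the normal form in \Cref{eqn:EAQEC_Stab_form}: $S_{2j-1}$ acts as $X$ on qubit $j$ and as the identity on every other qubit of $E_Q$, while $S_{2j}$ acts as $Z$ on qubit $j$ and the identity on the rest of $E_Q$. The crux is that the commutation relation between $O$ and each of these two generators collapses to a single-qubit comparison on qubit $j$. Indeed, on the qubits of $E_Q$ other than $j$ these generators are trivial, so they commute with $O$ there regardless of how $O$ acts; and on the remaining $n-|E_Q|$ qubits (carrying the tails $O_{2j-1}, O_{2j}$) the operator $O$ is the identity, so those factors contribute no phase either. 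Hence $O$ commutes with $S_{2j-1}$ (respectively $S_{2j}$) exactly when $P_j$ commutes with $X$ (respectively $Z$) as single-qubit Paulis. To conclude, I would use that any nontrivial single-qubit Pauli anticommutes with at least one of $X$ and $Z$ (namely $Y$ anticommutes with both, $X$ anticommutes with $Z$, and $Z$ anticommutes with $X$). Therefore $O$ anticommutes with at least one of $S_{2j-1}, S_{2j} \in \mathcal{S}$, so $O \notin \mathcal{Z}(\mathcal{S})$.

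There is no genuinely hard step here; the entire content is the structural reduction to the single qubit $j$, which follows directly from the form in \Cref{eqn:EAQEC_Stab_form}. The one point requiring care is the case where $O$ acts nontrivially on several qubits of $E_Q$ at once, not only on qubit $j$: I would emphasize that the detecting generators $S_{2j-1}, S_{2j}$ are identity on all of $E_Q$ except qubit $j$, so those additional nontrivial tensor factors of $O$ cannot conspire to cancel the anticommutation produced at qubit $j$. Once this observation is in place the argument is immediate, and it explains precisely why the special shape imposed in \Cref{eqn:EAQEC_Stab_form}---one dedicated $X$-type and one dedicated $Z$-type generator per qubit of $E_Q$, trivial elsewhere on $E_Q$---is exactly what makes the clean qubits usable as ebits.
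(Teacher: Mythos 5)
Your proof is correct and follows essentially the same route as the paper's: both reduce to the observation that any nontrivial Pauli supported inside $E_Q$ must act nontrivially on some qubit $j \in E_Q$, and then use the dedicated generators $S_{2j-1}$ (with $X$ on qubit $j$ and identity elsewhere in $E_Q$) and $S_{2j}$ (with $Z$ on qubit $j$) from \Cref{eqn:EAQEC_Stab_form}, whose commutation with $O$ collapses to a single-qubit check since $O$ is the identity outside $E_Q$. The paper states this in one compressed sentence; your write-up merely makes the factor-by-factor reduction and the case analysis ($X$, $Y$, $Z$ each anticommute with at least one of $X$, $Z$) explicit.
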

\begin{proof}
    We consider a Pauli operator $O \notin <iI>$ whose support $\mathrm{supp}(O)$ is a subset of $E_Q$. Without loss of generality, let us consider $E_Q$ to be the first $|E_Q|$ qubit indices. Then, $O = A \otimes I^{\otimes (n-|E_Q|)}$ for some non-zero weight $|E_Q|$-qubit operator $A$ and $O$ anticommutes with at least one of the stabilizers as each index in $E_Q$ has a stabilizer with either $X$ and $Z$ on that index and $I$ on rest of the indices of $E_Q$. Thus, $O \notin \mathcal{Z(S)}$. 
\end{proof}
 Each Pauli operator $O$ can be written as $\prod_{i=1}^n X_i^{a_i} Z_i^{b_i}$ upto an overall unimportant phase, and can be represented by its \textit{symplectic representation} $[\overline{a}|\overline{b}]$, where $\overline{a} = [a_i]$ and $\overline{b} = [b_i]$. 
Let $H = [H_1|H_2]$ be the check matrix of a code, which is obtained by stacking the symplectic representation of the stabilizer generators. The syndrome of an operator $O$ with symplectic representation $[o_x|o_z]$ with respect to the code is $H_1o_z^T + H_2o_x^T$. %

\begin{lemma} \label{lem:lin_ind_col_E_Q}
    A set of qubit indices $E_Q$ satisfy $\{Z \mid Z \in \mathcal{Z(S)}^*, \mathrm{supp}(Z) \, \subseteq E_Q\} = \emptyset$ if and only if the columns of the check matrix $H$ corresponding to $E_Q$ are linearly independent. %
\end{lemma}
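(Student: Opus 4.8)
The plan is to pass everything through the binary symplectic representation and observe that the two conditions are merely two phrasings of a single $\mathbb{Z}_2$-linear fact. Recall that modulo the phase group $\langle iI\rangle$ each Pauli operator $O$ is uniquely encoded by its symplectic vector $[o_x|o_z]\in(\mathbb{Z}_2)^{2n}$, with $O\in\langle iI\rangle$ precisely when $[o_x|o_z]=0$. Hence an element $Z\in\mathcal{Z(S)}^*$ corresponds to a \emph{nonzero} symplectic vector, and the support condition $\mathrm{supp}(Z)\subseteq E_Q$ translates to requiring both $o_x$ and $o_z$ to vanish on the qubit indices outside $E_Q$. At the outset I would fix the convention that "the columns of $H$ corresponding to $E_Q$" means, for each $j\in E_Q$, both column $j$ of $H_1$ and column $j$ of $H_2$, for a total of $2|E_Q|$ columns.

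First I would record the commutation criterion in matrix form: since the rows of $H=[H_1|H_2]$ are the symplectic vectors of the stabilizer generators, $O$ commutes with every generator, equivalently $O\in\mathcal{Z(S)}$, if and only if its syndrome $H_1o_z^T+H_2o_x^T$ vanishes. Combining this with the previous paragraph, the left-hand set being empty is equivalent to the statement that the only symplectic vector $[o_x|o_z]$ supported on $E_Q$ with $H_1o_z^T+H_2o_x^T=0$ is the zero vector.

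Next I would identify this kernel condition with linear independence of the relevant columns. Because $o_x,o_z$ are supported on $E_Q$, the expression $H_1o_z^T+H_2o_x^T$ is exactly a $\mathbb{Z}_2$-linear combination of the $2|E_Q|$ columns indexed by $E_Q$, namely column $j$ of $H_1$ taken with coefficient $(o_z)_j$ and column $j$ of $H_2$ taken with coefficient $(o_x)_j$ for $j\in E_Q$; moreover this combination is trivial exactly when $[o_x|o_z]=0$. Therefore "no nonzero $E_Q$-supported vector lies in the kernel of the syndrome map" is literally "the columns of $H$ corresponding to $E_Q$ admit no nontrivial dependence," i.e.\ they are linearly independent. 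Reading this equivalence in both directions yields the two implications of the biconditional.

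The argument is essentially a bookkeeping translation, so I expect no genuine obstacle; the single point demanding care is the symplectic bookkeeping, namely that the $H_1$-column for qubit $j$ pairs with the $Z$-coordinate $(o_z)_j$ while the $H_2$-column pairs with the $X$-coordinate $(o_x)_j$ (the coordinate swap intrinsic to the symplectic form), together with the convention that $E_Q$ contributes columns from both blocks. I would state these conventions explicitly so that the counting of the $2|E_Q|$ columns and the matching of coefficients to columns is unambiguous.
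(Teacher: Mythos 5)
Your proposal is correct and follows essentially the same route as the paper's proof: both translate to the binary symplectic representation, use the fact that membership in $\mathcal{Z}(\mathcal{S})$ is equivalent to vanishing syndrome $H_1 o_z^T + H_2 o_x^T$, and identify the existence of a nonzero $E_Q$-supported vector in the kernel of the syndrome map with a linear dependence among the columns of $H$ indexed by $E_Q$. The only differences are presentational — you organize it as a single chain of equivalences and state the column convention explicitly, whereas the paper argues the two directions separately (the converse by contradiction).
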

\begin{proof}
    Let the columns of the check matrix $H$ corresponding to some qubit indices $E_Q$ be linearly independent. Then, there does not exist any non-zero vector $[o_x|o_z]$ with support only on the columns corresponding to qubits in $E_Q$ such that $H_1o_z^T + H_2o_x^T =0$; i.e., has zero syndrome. Thus, there is no operator $O \notin \langle iI\rangle$ with symplectic representation $[o_x|o_z]$ whose support is a subset of $E_Q$ and has zero syndrome. As every element in $\mathcal{Z(S)}$ has zero syndrome, $\{Z\mid Z \in \mathcal{Z(S)}^*,\, \mathrm{supp}(Z) \subseteq E_Q\} = \emptyset$.
    
    We prove the converse by contradiction. Let $\{Z\mid Z \in \mathcal{Z(S)}^*,\, \mathrm{supp}(Z) \subseteq E_Q\} = \emptyset$. We assume that a subset of columns that correspond to qubits in $E_Q$ are linearly dependent. Let $O\notin \langle iI\rangle$ be an operator with a symplectic representation $[o_x|o_z]$ with the elements of $o_x$ and $o_z$ corresponding to the qubits in $E_Q$ being the coefficients in the linear dependency relation between the columns in $E_Q$, and the rest being $0$. Then, $H_1o_z^T + H_2o_x^T = 0$; hence, $O$ has zero syndrome and belongs to $\mathcal{Z(S)}^*$. This is a contradiction as $\{Z\mid Z \in \mathcal{Z(S)}^*, \, \mathrm{supp}(Z) \subseteq E_Q\} = \emptyset$.
\end{proof}

\begin{theorem} \label{prop:EAOAQEC_code_condition_const2}
    Let $C_{\mathrm{hsub}}\left(\mathcal{S}, \mathcal{G}_0, \mathcal{L}_0,\mathcal{T}_0 \right)$ be an $[\![n, k, d; r,0,|\mathcal{T}_0|]\!]$ hybrid subsystem code. An $[\![n-e, k, d'\geq d; r,e,|\mathcal{T}_0|]\!]$ EAOAQEC code $C_{\mathrm{CQ}}\left(\mathcal{H}, \mathcal{S}, \mathcal{G}_0', \mathcal{L}_0',\mathcal{T}_0' \right)$ with stabilizer generators of the form in \Cref{eqn:EAQEC_Stab_form} can be constructed if and only if there exists a subset $E_Q$ of qubits of size $e$ such that no element of $\mathcal Z(\mathcal S)^*$ has its support contained exclusively in $E_Q$.
\end{theorem}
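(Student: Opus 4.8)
The plan is to prove the two directions of the equivalence separately, and then to confirm the claimed parameters, leaving the distance bound for last. The forward direction is essentially free: if a generating set of the form in \Cref{eqn:EAQEC_Stab_form} exists for some size-$e$ subset $E_Q$, then \Cref{lem:col_ind_subset_centralizer} applies verbatim, since every $O\notin\langle iI\rangle$ with $\mathrm{supp}(O)\subseteq E_Q$ anticommutes with one of the single-qubit $X$/$Z$ entanglement generators sitting on $E_Q$, hence $O\notin\mathcal{Z}(\mathcal{S})$. This gives $\{Z\in\mathcal{Z}(\mathcal{S})^*:\mathrm{supp}(Z)\subseteq E_Q\}=\emptyset$, which is exactly the asserted condition.

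The backward direction is where the real work lies, and I would split it in two stages. First, given a witness $E_Q$ of size $e$, \Cref{lem:lin_ind_col_E_Q} translates the support condition into linear independence of the $2e$ columns of the check matrix indexed by $E_Q$ (an $X$-column and a $Z$-column per qubit). Since these columns have full column rank, I would run Gaussian elimination by row operations — which correspond to multiplying stabilizer generators, a legitimate group-preserving move once the $\pm1$ phases are fixed so that $-I\notin\mathcal{S}$ is maintained — to bring the $E_Q$-block into the form $\left(\begin{smallmatrix} I_{2e}\\ 0\end{smallmatrix}\right)$ after ordering the $E_Q$-columns as $(X_1,Z_1,\dots,X_e,Z_e)$. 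This produces exactly $2e$ generators each acting as a single $X$ or $Z$ on one $E_Q$ qubit and as the identity on the rest of $E_Q$, together with $s-2e$ generators trivial on all of $E_Q$, which is precisely \Cref{eqn:EAQEC_Stab_form}; the restrictions $O_j$ to the complementary $n-e$ qubits then generate $\mathcal{H}$.

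Second, to realize a bona fide EAOAQEC code I would move the gauge, logical, and transversal data off $E_Q$. Because the entanglement generators act as a single $X$ or $Z$ on a given $E_Q$ qubit and trivially on the others, left-multiplying by the appropriate such generators removes any $E_Q$ support one qubit at a time without disturbing the already-cleaned qubits; gauge and logical operators stay in their stabilizer cosets (they commute with $\mathcal{S}$), and transversal operators stay in their $\mathcal{N}(\mathcal{S})$-cosets (stabilizers lie in $\mathcal{N}(\mathcal{S})$), so $|\mathcal{T}_0'|=|\mathcal{T}_0|$ and the classical data is unchanged. After this cleaning $\mathcal{G}_0'$, $\mathcal{L}_0'$, $\mathcal{T}_0'$ are all of the form $(\cdot)\otimes I$ on $E_Q$. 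For the bound $d'\geq d$ I would then use the weight-preserving identity extension $O\mapsto \tilde O:=I^{\otimes e}\otimes O$: the special form makes $\tilde O\in\mathcal{Z}(\mathcal{S})$ iff $O\in\mathcal{Z}(\mathcal{H})$, and triviality of $\tilde O$ on $E_Q$ forces any expansion of $\tilde O$ in $\langle\mathcal{S},\mathcal{G}_0,iI\rangle$ to omit the entanglement generators, so $\tilde O\in\langle\mathcal{S},\mathcal{G}_0,iI\rangle$ implies $O\in\langle\mathcal{H}_I,{\mathcal{G}_0'}^{(n-e)},iI\rangle$, with the transversal-coset union transferring the same way. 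Thus the image of the distance-defining set of $C_{\mathrm{CQ}}$ under $O\mapsto\tilde O$ is contained in that of $C_{\mathrm{hsub}}$ from \Cref{def:distance}, and since the extension preserves weight, $d=\min\mathrm{wt}$ over the larger set is at most $\min\mathrm{wt}$ over the image $=d'$, giving $d'\geq d$.

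The main obstacle I expect is not the linear algebra of the elimination but the bookkeeping that guarantees it delivers the exact block structure of \Cref{eqn:EAQEC_Stab_form} (and not merely some row-reduced form) while simultaneously keeping the gauge, logical, and transversal operators cleanable on $E_Q$; and then, in the distance step, verifying carefully that the identity-extension map respects membership in all three pieces — the stabilizer group, the distinguished subgroup $\langle\mathcal{H}_I,{\mathcal{G}_0'}^{(n-e)},iI\rangle$, and the transversal-coset union — that appear in \Cref{def:distance}.
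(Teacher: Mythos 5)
Your proof is correct and takes essentially the same route as the paper's: the forward direction via \Cref{lem:col_ind_subset_centralizer}, the backward direction via \Cref{lem:lin_ind_col_E_Q} followed by Gaussian elimination with the $E_Q$ columns as pivots to reach the form of \Cref{eqn:EAQEC_Stab_form}, and the bound $d'\geq d$ by comparing errors supported only on Alice's qubits against errors on all qubits. Your explicit cleaning of the gauge, logical, and transversal operators and the weight-preserving identity-extension argument simply make rigorous what the paper handles informally (the paper notes the codespaces coincide and that the error set defining $d'$ is smaller, and relegates the cleaning to the surrounding discussion and algorithms).
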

\begin{proof}
    From \Cref{lem:col_ind_subset_centralizer}, we note that if we choose a set $E_Q$ of qubit indices and construct an EAOAQEC code by obtaining a new set of stabilizer generators $S_j$ with the form in \Cref{eqn:EAQEC_Stab_form}, then $\{Z\mid Z \in \mathcal{Z(S)}^*,\,\mathrm{supp}(Z)\subseteq E_Q\} = \emptyset$ as $\mathrm{supp}(Z) \nsubseteq E_Q$ when $Z \in \mathcal{Z(S)}^*$, proving the forward direction.

    We next prove the converse. Suppose there exists a set $E_Q$ of qubit indices such that $\{Z\mid Z \in \mathcal{Z(S)}^*,\, \mathrm{supp}(Z)$ $ \subseteq E_Q\} = \emptyset$. From \Cref{lem:lin_ind_col_E_Q}, the columns of the check matrix $H$ corresponding to the qubits in $E_Q$ are linearly independent. Let $e = |E_Q|$. We next show that we can obtain an $[\![n-e, k, d' \geq d; r,e,|\mathcal{T}_0|]\!]$ EAOAQEC code $C_{\mathrm{CQ}}\left(\mathcal{H},\mathcal{S}, \mathcal{G}_0', \mathcal{L}_0',\mathcal{T}_0'\right)$ with stabilizer generators of the form in \Cref{eqn:EAQEC_Stab_form} from an $[\![n, k, d' \geq d; r,0,|\mathcal{T}_0|]\!]$ hybrid subsystem code $C_{\mathrm{hsub}}\left(\mathcal{H},\mathcal{S}, \mathcal{G}_0, \mathcal{L}_0,\mathcal{T}_0 \right)$. As the columns corresponding to $E_Q$ are linearly independent, we perform Gaussian elimination on $H$ considering $H$ to be an augmented matrix with the columns of the check matrix corresponding to indices in $E_Q$ to be the pivot columns. Thus, only some rows that correspond to the pivot rows based on the pivot columns are row reduced. The stabilizer generators obtained after the Gaussian elimination procedure has the form in \Cref{eqn:EAQEC_Stab_form}, up to qubit permutations. Thus, $2e$ isotropic operators are converted to extended symplectic operators and the rest of the isotropic operators are extended isotropic operators of the EAOAQEC code. The $e$ qubits in $E_Q$ correspond to the ebits; hence, we obtain an $[[n-e, k, d'; r,e,|\mathcal{T}_0|]]$ EAOAQEC code.

    We note that the codespace of the EAOAQEC code and the hybrid subsystem code are the same. The set of errors considered for the two codes vary. The errors on the EAOAQEC code operate only on Alice's qubits, with the assumption that Bob's qubits; i.e., qubits corresponding to indices in $E_Q$, are noise free. For the hybrid subsystem code, errors can have support on all qubits of the code. Thus, the set of errors considered to compute $d'$ compared to $d$ is smaller; hence, $d' \geq d$.
\end{proof}

\begin{corollary}
    From an $[[n, k, d; r,0,|\mathcal{T}_0|]]$ hybrid subsystem code $C_{\mathrm{hsub}}\left(\mathcal{S}, \mathcal{G}_0, \mathcal{L}_0,\mathcal{T}_0 \right)$ with $\text{min~wt}(\mathcal{Z}(\mathcal{S})^*)>1$, for every $e < \text{min~wt}(\mathcal{Z}(\mathcal{S})^*)$, a total of $\binom{n}{e}$ $[\![n-e, k, d'\geq d; r,e,|\mathcal{T}_0|]\!]$ EAOAQEC codes can be constructed.
    \end{corollary}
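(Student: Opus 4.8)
The plan is to deduce the count directly from Theorem~\ref{prop:EAOAQEC_code_condition_const2}, which reduces the existence of a suitable EAOAQEC code to a purely combinatorial condition on the chosen ebit set $E_Q$. Recall that the theorem guarantees an $[\![n-e,k,d'\geq d;r,e,|\mathcal{T}_0|]\!]$ code precisely when one can pick a size-$e$ subset $E_Q$ of the $n$ physical qubits for which no nontrivial centralizer element $Z\in\mathcal{Z}(\mathcal{S})^*$ has $\mathrm{supp}(Z)\subseteq E_Q$. So the entire argument reduces to counting how many size-$e$ subsets $E_Q$ satisfy this support condition.

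First I would observe that the hypothesis $e<\mathrm{min~wt}(\mathcal{Z}(\mathcal{S})^*)$ forces \emph{every} such subset to be admissible. Indeed, for any $Z\in\mathcal{Z}(\mathcal{S})^*$ the size of $\mathrm{supp}(Z)$ equals its weight, which is at least $\mathrm{min~wt}(\mathcal{Z}(\mathcal{S})^*)>e=|E_Q|$. A subset of support strictly larger than $|E_Q|$ cannot be contained in $E_Q$, so $\mathrm{supp}(Z)\not\subseteq E_Q$ for every $Z\in\mathcal{Z}(\mathcal{S})^*$ and every choice of $E_Q$ of size $e$. Hence $\{Z\mid Z\in\mathcal{Z}(\mathcal{S})^*,\ \mathrm{supp}(Z)\subseteq E_Q\}=\emptyset$ for all such $E_Q$.

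Next I would invoke Theorem~\ref{prop:EAOAQEC_code_condition_const2} once for each such $E_Q$: since the support condition holds, each size-$e$ subset yields a valid $[\![n-e,k,d'\geq d;r,e,|\mathcal{T}_0|]\!]$ EAOAQEC code via the Gaussian-elimination procedure described there, designating the qubits of $E_Q$ as Bob's error-free ebits. As there are exactly $\binom{n}{e}$ subsets of size $e$ among the $n$ physical qubits, we obtain $\binom{n}{e}$ such codes, completing the count.

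There is no substantive obstacle here: the statement is an immediate corollary, its only content being the elementary bound that a set cannot contain a support strictly larger than itself. The one point worth flagging is interpretive rather than technical—``a total of $\binom{n}{e}$ codes'' should be read as counting the admissible choices of $E_Q$, each producing a code through a distinct assignment of ebits, and one may note that distinct choices of $E_Q$ generically give genuinely different codes since they fix different physical qubits as the noiseless half of the entangled pairs.
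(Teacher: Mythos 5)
Your proposal is correct and matches the paper's own proof essentially step for step: both arguments observe that $e < \mathrm{min~wt}(\mathcal{Z}(\mathcal{S})^*)$ makes every size-$e$ subset $E_Q$ satisfy the support condition, invoke Theorem~\ref{prop:EAOAQEC_code_condition_const2} for each such subset, and count the $\binom{n}{e}$ choices. Your closing interpretive caveat even mirrors the paper's remark that the count treats each choice of $E_Q$ as a distinct code (neglecting qubit permutations).
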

    
\begin{proof}
    We consider $E_Q$ to be a subset of qubits of $C_{\mathrm{hsub}}$ of size $e$ less than $\mathrm{min~wt}(\mathcal{Z}(\mathcal{S})^*)$. We note that there are $\binom{n}{e}$ ways in which we can obtain $E_Q$. As $e < \mathrm{min~wt}(\mathcal{Z}(\mathcal{S})^*)$ and $e = |E_Q|$, we obtain $\{Z\mid Z \in \mathcal{Z(S)^*},\, \mathrm{supp}(Z) \subseteq E_Q\} = \emptyset$. From \Cref{prop:EAOAQEC_code_condition_const2}, we can construct an $[\![n-e, k, d'\geq d; r,e,|\mathcal{T}_0|]\!]$ EAOAQEC code based on the $[\![n, k, d'\geq d; r,0,|\mathcal{T}_0|]\!]$ hybrid subsystem code. We note that we neglect the permutation of qubits while counting the number of codes that can be constructed.
\end{proof}

We provide the procedure to construct an EAOAQEC code from a hybrid subsystem code using Gaussian elimination in Algorithm \ref{alg:EAOAQEC_HybridSubsystemCode} which is presented in Appendix~\ref{isotosymalgs}.

Calderbank-Shor-Steane (CSS) codes are a class of stabilizer codes for which stabilizer generators can be obtained whose check matrices have the following form: 
\begin{align} \label{eqn:CSS_check_mat}
H = \left[\begin{array}{c|c} H_X & 0\\0 & H_Z\end{array}\right] ;   
\end{align}
i.e., the stabilizer generators can be decomposed into a set of $X$ operators and a set of $Z$ operators. Using the structure of the CSS code's check matrix, we provide a mathematical condition and procedure to construct EAOAQEC code from a hybrid subsystem code when $H_X=H_Z$.

\begin{proposition} \label{prop:EAOA_CSS_code_condition_const2}
    From an $[\![n, k, d; r,0,|\mathcal{T}_0|]\!]$ hybrid subsystem CSS code $C_{\mathrm{hsub}}\left(\mathcal{S}, \mathcal{G}_0, \mathcal{L}_0,\mathcal{T}_0 \right)$ based on a dual-containing classical code; i.e., $H_X=H_Z=H$, an $[\![n-e, k, d'\geq d; r,e,|\mathcal{T}_0|]\!]$ EA operator algebra CSS code $C_{\mathrm{CQ}}\left(\mathcal{H}, \mathcal{S}, \mathcal{G}_0', \mathcal{L}_0',\mathcal{T}_0' \right)$ with stabilizer generators of the form in \Cref{eqn:EAQEC_Stab_form} can be constructed if and only if $e \leq \mathrm{rank}(H) = (n-k)/2$. %
\end{proposition}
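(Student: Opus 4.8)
The plan is to reduce the statement to the two general results already available — the constructibility criterion of \Cref{prop:EAOAQEC_code_condition_const2} and its linear-algebraic reformulation in \Cref{lem:lin_ind_col_E_Q} — and then to specialize the resulting condition to the CSS setting using $H_X = H_Z = H$. Concretely, \Cref{prop:EAOAQEC_code_condition_const2} tells us that the desired $[\![n-e,k,d';r,e,|\mathcal T_0|]\!]$ code exists if and only if there is a qubit subset $E_Q$ with $|E_Q| = e$ such that no element of $\mathcal Z(\mathcal S)^*$ is supported exclusively on $E_Q$, and \Cref{lem:lin_ind_col_E_Q} recasts this as the columns of the stabilizer check matrix indexed by $E_Q$ being linearly independent. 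Thus everything comes down to determining, in the CSS case, the largest $E_Q$ admitting independent columns.

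First I would write the stabilizer check matrix in the CSS block form of \Cref{eqn:CSS_check_mat}, so that $H_1$ carries $H$ in its $X$-type rows and $H_2$ carries $H$ in its $Z$-type rows. For an operator supported on $E_Q$ with symplectic representation $[o_x \mid o_z]$ (so $o_x$ and $o_z$ vanish off $E_Q$), the syndrome $H_1 o_z^T + H_2 o_x^T$ has its $X$-block equal to $H o_z^T$ and its $Z$-block equal to $H o_x^T$, and therefore vanishes exactly when $H o_x^T = 0$ and $H o_z^T = 0$. Hence a nonzero operator supported on $E_Q$ lies in $\mathcal Z(\mathcal S)^*$ precisely when $\ker H$ contains a nonzero vector supported on $E_Q$, i.e.\ when the $|E_Q|$ columns of the classical matrix $H$ indexed by $E_Q$ are linearly dependent. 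The key structural point here, and the only place the hypothesis $H_X = H_Z$ is used, is that the $X$- and $Z$-parts decouple and are controlled by the \emph{same} matrix $H$, so the independence of the $2|E_Q|$ symplectic columns collapses to independence of just the $|E_Q|$ columns of $H$.

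With this reduction, the existence of an $E_Q$ of size $e$ with independent columns of $H$ holds if and only if $e \leq \operatorname{rank}(H)$, since the maximal number of linearly independent columns of a matrix equals its rank, and any $e$ columns drawn from a maximal independent set are themselves independent. Chaining this with the two cited results yields the equivalence ``code constructible $\iff e \leq \operatorname{rank}(H)$.'' The final equality $\operatorname{rank}(H) = (n-k)/2$ I would obtain from the standard dimension count for a dual-containing CSS code: the $X$-type and $Z$-type stabilizer generators each contribute $\operatorname{rank}(H)$ independent generators, so the number of encoded qubits satisfies $n - k = 2\operatorname{rank}(H)$.

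The hard part will be the bookkeeping in the CSS reduction of the second paragraph: one must check both directions carefully — that independence of the classical columns really does rule out every nonzero $\mathcal Z(\mathcal S)^*$ element supported on $E_Q$ (this already follows from \Cref{lem:col_ind_subset_centralizer} and \Cref{lem:lin_ind_col_E_Q}), and, for the converse, that the chosen $E_Q$ can actually be turned into a generating set of the form in \Cref{eqn:EAQEC_Stab_form} via the Gaussian-elimination construction, so that the full strength of the converse in \Cref{prop:EAOAQEC_code_condition_const2} applies. Once the reduction to classical column independence is established, the rank and dimension counting is routine.
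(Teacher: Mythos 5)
Your proposal is correct and follows essentially the same route as the paper's proof: reduce to the equivalence in \Cref{prop:EAOAQEC_code_condition_const2} together with \Cref{lem:lin_ind_col_E_Q}, use the block structure of the CSS check matrix with $H_X=H_Z=H$ to collapse the independence of the $2|E_Q|$ symplectic columns to independence of the $|E_Q|$ columns of $H$, and conclude with the observation that such an $E_Q$ of size $e$ exists if and only if $e \leq \operatorname{rank}(H)$. Your explicit syndrome computation is just a more detailed rendering of the paper's remark that the $H_X$-block and $H_Z$-block columns cannot be linearly dependent on each other, and your choice of $E_Q$ from a maximal independent column set matches the paper's Gaussian-elimination pivot argument.
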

\begin{proof}
 From Eq.~(\ref{eqn:CSS_check_mat}), the columns based on $H_X$ are not linearly dependent on the columns based on $H_Z$. Thus, it is sufficient to consider linear independence within $H_X = H_Z = H$. As $H_X$ has at least one set of $\mathrm{rank}(H_X)$ linearly independent columns, for any $e \leq \mathrm{rank}(H_X)$, we can find a set containing $e$ linearly independent columns through Gaussian elimination. We obtain $E_Q$ to be a subset of the pivotal column indices of the Gaussian elimination procedure, with $|E_Q|=e \leq \mathrm{rank}(H_X)$. As the columns of $H_X$ corresponding to the qubits in $E_Q$ are linearly independent, from \Cref{lem:lin_ind_col_E_Q}, we obtain that $\{Z\mid Z \in \mathcal{Z(S)}^*,\, \mathrm{supp}(Z) \subseteq E_Q\} = \emptyset$. From \Cref{prop:EAOAQEC_code_condition_const2}, we can construct an $[\![n-e, k, d'\geq d; r,e,|\mathcal{T}_0|]\!]$ EA operator algebra CSS code $C_{\mathrm{CQ}}\left(\mathcal{H},\mathcal{S}, \mathcal{G}_0', \mathcal{L}_0',\mathcal{T}_0' \right)$ with stabilizer generators of the form in Eq.~(\ref{eqn:EAQEC_Stab_form}) from an $[\![n, k, d'\geq d; r,0,|\mathcal{T}_0|]\!]$ hybrid subsystem CSS code $C_{\mathrm{hsub}}\left(\mathcal{H},\mathcal{S}, \mathcal{G}_0', \mathcal{L}_0',\mathcal{T}_0' \right)$. As there are at most $rank(H)$ columns of $H$ that are linearly independent, from \Cref{lem:col_ind_subset_centralizer} and \Cref{lem:lin_ind_col_E_Q}, there cannot exist $E_Q$ of size $e > rank(H)$ with stabilizer generators of the form in \Cref{eqn:EAQEC_Stab_form}. Hence, when the code is constructed, $e$ should be at most $rank(H)$.
\end{proof}

Using the structure of the CSS code, we modify the code construction procedure provided in \Cref{alg:EAOAQEC_HybridSubsystemCode} to provide a more efficient procedure tailored to CSS codes in \Cref{alg:EAOAQEC_HybridSubsystemCSSCode}, which is also presented in Appendix~\ref{isotosymalgs}.

We next summarize the various sets of EAOAQEC codes constructed here in terms of those of the hybrid subsystem code. Let $O^{(E_Q)}$ and $O^{(E_Q^{\complement})}$ be the restriction of the operator $O$ on the qubits in $E_Q$ and $E_Q^{\complement}$, respectively, where $E_Q^{\complement}$ denotes the complement of set $E_Q$. %
The EAOAQEC code $C_{\mathrm{EAOA}} = C(\mathcal H, \mathcal S, \mathcal G_0', \mathcal L_0', \mathcal T_0')$ is obtained from a hybrid subsystem code $C_{\mathrm{hsub}} = C( \mathcal S, \mathcal G_0, \mathcal L_0, \mathcal T_0)$ with the relevant sets being as follows:
\begin{itemize}
\item $\mathcal H = \langle \{O^{(E_Q^{\complement})} \mid O \in \mathcal S \}\rangle$ %
\item The stabilizer group for both the codes is $\mathcal S$,
\item $\mathcal G_0' =  \{O^{(E_Q^{\complement})} \mid G \in \mathcal{G}_0, O \in G\mathcal{S}, O^{(E_Q)} = I \}$, %
\item $\mathcal L_0' = \{O^{(E_Q^{\complement})} \mid L \in \mathcal{L}_0,O\ \in L\mathcal{S}, O^{(E_Q)} = I \}$,
\item $\mathcal T_0' = \{O^{(E_Q^{\complement})}\mid T \in \mathcal{T}_0, O \in T\mathcal{Z(S)}, O^{(E_Q)}=I \}$. %
\end{itemize}

\begin{example}[15-qubit subsystem color code]
 For the subsystem color code provided in \Cref{app:3D_Subsystem_Color_code}, each stabilizer is of weight at least $8$ and the distance is $3$. Thus, the weight of non-identity elements in $\langle \mathcal{S}, \mathcal{G}_0, \mathcal{L}_0 \rangle^*$ is at least $3$. We choose the first $2$ qubits to convert them to ebits. We consider the check matrix of the subsystem code:
 
$$\left[
 \begin{array}{c|c}
    \begin{array}{ccccccccccccccc}
    \mathbf{1} & \mathbf{0} & 1 & 0 & 1 & 0 & 1 & 0 & 1 & 0 & 1 & 0 & 1 & 0 & 1\\
    \mathbf{1} & \mathbf{1} & 0 & 0 & 1 & 1 & 0 & 0 & 1 & 1 & 0 & 0 & 1 & 1 & 0\\
    \mathbf{0} & \mathbf{0} & 0 & 1 & 1 & 1 & 1 & 0 & 0 & 0 & 0 & 1 & 1 & 1 & 1\\
    \mathbf{0} & \mathbf{0} & 0 & 0 & 0 & 0 & 0 & 1 & 1 & 1 & 1 & 1 & 1 & 1 & 1        
    \end{array} & \bm{0} \\ 
    \bm{0} &\begin{array}{ccccccccccccccc}
    \mathbf{1} & \mathbf{0} & 1 & 0 & 1 & 0 & 1 & 0 & 1 & 0 & 1 & 0 & 1 & 0 & 1\\
    \mathbf{1} & \mathbf{1} & 0 & 0 & 1 & 1 & 0 & 0 & 1 & 1 & 0 & 0 & 1 & 1 & 0\\
    \mathbf{0} & \mathbf{0} & 0 & 1 & 1 & 1 & 1 & 0 & 0 & 0 & 0 & 1 & 1 & 1 & 1\\
    \mathbf{0} & \mathbf{0} & 0 & 0 & 0 & 0 & 0 & 1 & 1 & 1 & 1 & 1 & 1 & 1 & 1        
    \end{array}   
\end{array}
\right]$$

 We note that the 4 columns of the check matrix corresponding to the first two qubits of the code, marked in bold font, is not in reduced row echelon form. We perform row reduction by adding the $1^{\mathrm{st}}$ and the $5^{\mathrm{th}}$ rows to the $2^{\mathrm{nd}}$ and the $6^{\mathrm{th}}$ rows, respectively. This corresponds to multiplying $S_1$ and $S_5$ to $S_2$ and $S_6$, respectively. We provide the modified stabilizer generators in the table below:
 \[
\begin{array}{c|ccccccccccccccc}
\hhline{================}
S_1 & X & I & X & I & X & I & X & I & X & I & X & I & X & I & X\\
S_2' & I & X & X & I & I & X & X & I & I & X & X & I & I & X & X\\
S_3 & I & I & I & X & X & X & X & I & I & I & I & X & X & X & X\\
S_4 & I & I & I & I & I & I & I & X & X & X & X & X & X & X & X\\
S_5 & Z & I & Z & I & Z & I & Z & I & Z & I & Z & I & Z & I & Z\\
S_6' & I & Z & Z & I & I & Z & Z & I & I & Z & Z & I & I & Z & Z\\
S_7 & I & I & I & Z & Z & Z & Z & I & I & I & I & Z & Z & Z & Z\\
S_8 & I & I & I & I & I & I & I & Z & Z & Z & Z & Z & Z & Z & Z\\
\hhline{================}
\end{array}
\]

By considering the first two qubits that are in bold font as the ebits, we obtain an EAOAQEC code. We note that the gauge generators already have identity operators on the first two qubits. When the gauge generators do not have identity operators on the ebits, they can be transformed to identity operators by multiplying them with the corresponding stabilizer generators containing the same operator on the ebit.

Let the coset transversal subset be given by $\mathcal{T}_0 = \{I^{\otimes 15}, X_3Z_3Z_4Z_5, X_3X_4Z_4X_4Z_5\}$, as given in the below table:
\[
\begin{array}{c|ccccccccccccccc}
\hhline{================}
T_0 & I & I & I & I & I & I & I & I & I & I & I & I & I & I & I\\
T_1 & I & I & XZ & Z & Z & I & I & I & I & I & I & I & I & I & I\\
T_2 & I & I & X & XZ & XZ & I & I & I & I & I & I & I & I & I & I\\
\hhline{================}
\end{array}
\]
As a weight 2 operator $Z_1Y_3$ is in the same coset as $Y_3Z_4Z_5$, neglecting the phase, the distance of the hybrid subsystem code is $2$. For the EAOAQEC code, the distance is $2$ and $3$ when ebits are noisy and noiseless, respectively. Thus, the EAOAQEC code is an $[\![13, 1, 3; 6, 2,3]\!]$ code that encodes $3$ classical bit strings. As the subsystem code is based on a classical code with parity check matrix
\[H_C = \left[\begin{array}{ccccccccccccccc}
    1 & 0 & 1 & 0 & 1 & 0 & 1 & 0 & 1 & 0 & 1 & 0 & 1 & 0 & 1\\
    1 & 1 & 0 & 0 & 1 & 1 & 0 & 0 & 1 & 1 & 0 & 0 & 1 & 1 & 0\\
    0 & 0 & 0 & 1 & 1 & 1 & 1 & 0 & 0 & 0 & 0 & 1 & 1 & 1 & 1\\
    0 & 0 & 0 & 0 & 0 & 0 & 0 & 1 & 1 & 1 & 1 & 1 & 1 & 1 & 1        
    \end{array}\right]
    \]
and $\mathrm{rank}(H_C) = 4$, we can also obtain an EAOAQEC code with $4$ ebits.
\end{example}

\subsection{Entanglement-assisted Gauge Fixing Construction}\label{sec:gauge_to_stabilizer}

We next propose a method for constructing entanglement-assisted codes from subsystem codes.
Consider the code $C_{\mathrm{hsub}} = C(\mathcal S, \mathcal G_0, \mathcal L_0, \mathcal T_0)$ with parameters $[\![n,k,d;r,0,c_b]\!]$.
Suppose the operators in $\mathcal G_0$ are in symplectic form.
Construct a subset of $\mathcal G_0^{(\mathcal{H_S})} \subseteq \mathcal G_0$ formed by taking $e \leq r$ of the symplectic pairs. 
We can convert each of these generators into stabilizers by extending them using ebits in exactly the same way as described in \Cref{sec:EAOAQEC}.
This gives a new code $C_{\mathrm{EAGF}} = C(\mathcal H', \mathcal S', \mathcal G_0', \mathcal L_0', \mathcal T_0')$, where we have: 
\begin{itemize}
\item $\mathcal H' $ is the group generated by $\mathcal G_0^{(\mathcal{H_S})}$ and the generators of $\mathcal S$,
\item $\mathcal S'$ is the Abelian extension of $\mathcal H'$,
\item $\mathcal G_0' = \{G_i \otimes I \mid G_i \in \mathcal G_0 \setminus \mathcal G_0^{(\mathcal{H_S})} \}$,
\item $\mathcal{L}_0' = \{L_i \otimes I | L_i \in \mathcal{L}_0\}$,
\item $\mathcal T_0' = \{ T_i \otimes I | T_i \in \mathcal T_0\}$, where $T_i$ are the generators of $\mathcal T_0$,
\end{itemize}
where $I$ acts on the $e$ ebits introduced in the extension. 

Let $\mathcal{G}_0^{(\mathcal{H_S})}$ contain a gauge symplectic pair $(\overline{G}_{X_1},\overline{G}_{Z_1})$ whose elements are converted to stabilizers by extending using 1 ebit, namely $\overline{G}_{X_1} \otimes X$ and $\overline{G}_{Z_1} \otimes Z$. The states stabilized by $\mathcal{S}$, $\overline{G}_{X_1} \otimes X$ and $\overline{G}_{Z_1} \otimes Z$ have the form $(\ket{\psi_0}\ket{0} + \ket{\psi_1}\ket{1})/\sqrt{2}$, where $\ket{\psi_i}$ is a codeword of $C$ and is a $(-1)^i$-eigenstate of $\overline{G}_{Z_1}$,  such that $\overline{G}_{X_1}\ket{\psi_0} = \ket{\psi_1}$. Thus, $\ket{\psi_i}$ is a codeword of $C$ with its first gauge qubit fixed to $\ket{i}$ and the gauge qubit along with the ebit are in the Bell state $(\ket{00}+\ket{11})/\sqrt{2}$. Thus, the construction proposed corresponds to entanglement-assisted gauge fixing as the entanglement with the ebit assists the gauge qubit to be fixed to the mixed state corresponding to one half of the Bell state.

It is clear that the new code has $n' = n$, $r' = r - e$, $e$ ebits, and  $k' = k$. The latter holds since the $e$ symplectic pairs of gauge operators have become symplectic pairs in $\mathcal{H}'$, subsequently $k' = n - e -s- (r-e) = n -s - r=k$. The code distance of the new code is not so obvious as it is not always possible to extend a minimum weight logical operator to act trivially on the new qubits.
\begin{proposition}\label{prop:distance_gauge_to_stab}
   An entanglement-assisted code $C (\mathcal H', \mathcal S', \mathcal G_0', \mathcal L_0', \mathcal T_0')$, constructed as above from a hybrid subsystem code $C(\mathcal S, \mathcal G_0, \mathcal L_0, \mathcal T_0)$, has code distance $d' \geq d$.
\end{proposition}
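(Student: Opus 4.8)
The plan is to show directly that the set of $n$-qubit operators whose minimum weight defines $d' = d(C_{\mathrm{EAGF}})$ through \Cref{def:distance} is \emph{contained} in the corresponding set for $d = d(C_{\mathrm{hsub}})$; since $\min\mbox{wt}$ over a subset can only be larger, this yields $d' \ge d$. First I would set both sets up on Alice's $n$ qubits, using that the construction extends the transversal and logical operators by identities on the ebits, so $\mathcal{T}_0'^{(n)} = \mathcal{T}_0^{(n)}$ and $\mathcal{L}_0'^{(n)} = \mathcal{L}_0^{(n)}$. The input code has no ebits, so we view it as an EAOAQEC code with $\mathcal{H} = \mathcal{S}$ (hence $\mathcal{H}_I = \mathcal{S}$ and $\mathcal{Z}(\mathcal{H}) = \mathcal{Z}(\mathcal{S})$), while the new code has $\mathcal{H}' = \langle \mathcal{S}, \mathcal{G}_0^{(\mathcal{H_S})}\rangle$. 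I would check that the $e$ converted symplectic pairs form the symplectic part of $\mathcal{H}'$, that $\mathcal{S}$ is central in $\mathcal{H}'$, and therefore that the isotropic subgroup is unchanged, $\mathcal{H}_I' = \mathcal{S} = \mathcal{H}_I$. Writing $\mathcal{G} = \langle \mathcal{S}, \mathcal{G}_0^{(n)}, iI\rangle$ and $\mathcal{G}' = \langle \mathcal{S}, \mathcal{G}_0'^{(n)}, iI\rangle$ for the two gauge groups, the construction gives $\mathcal{G}_0'^{(n)} = \mathcal{G}_0^{(n)} \setminus \mathcal{G}_0^{(\mathcal{H_S})}$, so $\mathcal{G}' \subseteq \mathcal{G}$.

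The transversal part of \eqref{noiselessBobeqn_dist} is immediate: since $\mathcal{H} = \mathcal{S} \subseteq \mathcal{H}'$, monotonicity of the centralizer gives $\mathcal{Z}(\mathcal{H}') \subseteq \mathcal{Z}(\mathcal{H})$, whence $\bigcup_{i\ne j} T_i^{(n)}(T_j^{(n)})^{-1}\mathcal{Z}(\mathcal{H}') \subseteq \bigcup_{i\ne j} T_i^{(n)}(T_j^{(n)})^{-1}\mathcal{Z}(\mathcal{H})$. The step requiring care is the first set, $\mathcal{Z}(\mathcal{H}') \setminus \langle \mathcal{H}_I', \mathcal{G}_0'^{(n)}, iI\rangle = \mathcal{Z}(\mathcal{H}') \setminus \mathcal{G}'$: naively, shrinking the gauge group from $\mathcal{G}$ to $\mathcal{G}'$ enlarges its complement, so the inclusion is not automatic. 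I would argue at the level of individual operators. Take $P \in \mathcal{Z}(\mathcal{H}') \setminus \mathcal{G}'$. Because $\mathcal{S} \subseteq \mathcal{H}'$ we have $P \in \mathcal{Z}(\mathcal{S}) = \mathcal{Z}(\mathcal{H})$, so it remains only to rule out $P \in \mathcal{G}$. If $P$ did lie in $\mathcal{G} = \langle \mathcal{S}, \mathcal{G}_0^{(n)}, iI\rangle$, expand it in the gauge generators; since $P \in \mathcal{Z}(\mathcal{H}')$ it must commute with each converted pair $(\overline{G}_{X_i}, \overline{G}_{Z_i})$, $i \le e$, which forces the components of $P$ along those pairs to vanish, whence $P \in \langle \mathcal{S}, \mathcal{G}_0'^{(n)}, iI\rangle = \mathcal{G}'$, contradicting $P \notin \mathcal{G}'$. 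Hence $P \notin \mathcal{G}$, i.e., $P \in \mathcal{Z}(\mathcal{H}) \setminus \mathcal{G}$, giving the inclusion of first sets.

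Combining the two inclusions shows that the entire set defining $d'$ in \eqref{noiselessBobeqn_dist} is contained in the one defining $d$, and taking minimum weights yields $d' \ge d$. The only genuinely non-routine content is the coordinated shrinking isolated in the second paragraph: converting gauge symplectic pairs into ebit-extended stabilizers removes them from the gauge group but simultaneously removes their symplectic partners from the centralizer, so no new low-weight dressed representative can appear. Equivalently, in dressed-logical language one can rewrite the two first sets as $\mathcal{G}' \langle \mathcal{L}_0^{(n)}\rangle^*$ and $\mathcal{G} \langle \mathcal{L}_0^{(n)}\rangle^*$, and the inclusion $\mathcal{G}' \subseteq \mathcal{G}$ then makes the containment manifest, matching the intuition that fewer available gauge operators can only make logical representatives heavier.
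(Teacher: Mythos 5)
Your proof is correct and follows essentially the same route as the paper's: the paper establishes the inclusion of correctable-product sets $\mathbf{E} \subseteq \mathbf{E}'$ built from \Cref{noiselesserrorcorrecthm}, which is precisely the complement of your inclusion of the distance sets in \Cref{def:distance}, and it rests on the same two facts you use, namely $\mathcal{Z}(\mathcal{H}') \subseteq \mathcal{Z}(\mathcal{S})$ for the transversal part and the observation that any element of $\mathcal{G}$ with a nontrivial component along a converted symplectic pair anticommutes with that pair's partner in $\mathcal{H}'$. Your second paragraph (and the dressed-logical rewriting as $\mathcal{G}'\langle\mathcal{L}_0^{(n)}\rangle^*$ versus $\mathcal{G}\langle\mathcal{L}_0^{(n)}\rangle^*$) simply makes explicit the step the paper compresses into ``proceeding as in the proof of Theorem~\ref{coro:distance_EAHybridSubspace_S_Q}.''
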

\begin{proof}
The correctable error set for code $C_{\mathrm{EAGF}}$ is $\{ F_a \}$ such that
\begin{equation}
    F_a^\dag F_b \in {\bf E'} = \Big( \langle \mathcal H_I', \mathcal G_0', iI \rangle \bigcup \big( \mathcal P_n \setminus \mathcal Z(\mathcal H')\big) \Big) \bigcap \Big( \mathcal P_n \setminus \big( \bigcup_{i \neq  j}  T_i T_j^{-1} \mathcal Z(\mathcal H')  \big)  \Big).
\end{equation}
For code $C$, we have $\mathcal{H}_I = \mathcal{H} = \mathcal{S}$ as $\mathcal S$ is Abelian, and so the correctable error set is $\{ E_a \}$ such that 
\begin{equation}
    E_a^\dag E_b \in {\bf E} = \Big( \langle \mathcal S, \mathcal G_0, iI \rangle \bigcup \big( \mathcal P_n \setminus \mathcal Z(\mathcal S)\big) \Big) \bigcap \Big( \mathcal P_n \setminus \big( \bigcup_{i \neq  j}  T_iT_j^{-1} \mathcal Z(\mathcal S)  \big)  \Big).
\end{equation}
To prove the proposition it suffices to show that ${\bf E}$ is a subset of ${\bf E'}$. Proceeding as we did in the proof of Theorem~\ref{coro:distance_EAHybridSubspace_S_Q} and considering $\mathcal{H}_I' = \mathcal{S}$, we have
\begin{equation}
    \Big( \langle \mathcal S, \mathcal G_0, iI \rangle \bigcup \big( \mathcal P_n \setminus \mathcal Z(\mathcal S)\big) \Big)\subseteq \Big( \langle \mathcal H_I', \mathcal G_0', iI \rangle \bigcup \big( \mathcal P_n \setminus \mathcal Z(\mathcal H')\big) \Big)
\end{equation}
and 
\begin{equation}
    \big(\bigcup_{i \neq  j}  T_i T_j^{-1} \mathcal Z(\mathcal H')  \big)\subseteq \big( \bigcup_{i \neq  j}  T_iT_j^{-1} \mathcal Z(\mathcal S)  \big).
\end{equation}
Hence, it becomes clear that ${\bf E}\subseteq {\bf E}'$, and the result follows. 
\end{proof}

\begin{example}[15-qubit subsystem color code]

Let us return to our example of the 15-qubit subsystem color code.
We consider the operators in the table below:
\[
\begin{array}{c|ccccccccccccccc}
\hhline{================}
I & I & I & I & I & I & I & I & I & I & I & I & I & I & I & I\\
T_1 & I & I & I & I & I & I & I & I & I & I & I & I & Z & Z & Z\\
\hhline{================}
\end{array}
\]
Choosing the transversal subset $\mathcal{T}_0 = \{I, T_1\}$, gives us a $[\![15,1,1;6,0,2]\!]$-hybrid subsystem code that encodes two bit strings. 
One can confirm that there is only one uncorrectable error with weight smaller than three, namely 
\begin{equation}
    I T_1 \overline{G}_{Z_1} = I^{\otimes 11} Z I^{\otimes 3}.
\label{eq:t1t2n}
\end{equation}
We next convert the gauge generators $\overline{G}_{X_1}$ and  $\overline{G}_{Z_1}$ to stabilizers in $\mathcal{H}$ and consider the coset transversal subset $\mathcal{T}'_0 = \{I, T_1 \otimes I\}$. This gives us a $[\![15,1,3; 5,1,2]\!]$-hybrid EAOAQEC code requiring one ebit that encodes two bit strings. 
The distance has increased because the error in Eq.~\eqref{eq:t1t2n} is correctable for the transformed code (as $\overline{G}_{Z_1} \notin \mathcal Z(\mathcal H')$).

\end{example}

\subsection{General Gauge Fixing Construction}\label{sec:GaugeToCosetTransversal_general}

We next provide the construction of an EAOAQEC code from an entanglement-assisted hybrid subsystem code or its subclasses combining ideas from \Cref{sec:GaugeToCosetTransversal} and \Cref{sec:gauge_to_stabilizer}. We call this construction general gauge fixing as it is a combination of gauge fixing and entanglement-assisted gauge fixing. This construction involves converting operators from some gauge symplectic pairs to either stabilizer generators and coset transversal operators. When non-commuting gauge operators are added to the stabilizer generating set, additional entanglement-assistance is required. When a particular set of gauge symplectic pairs are provided, the construction proposed in this section is a non-trivial amalgamation of the constructions in \Cref{sec:GaugeToCosetTransversal} and \Cref{sec:gauge_to_stabilizer}.

We consider an $[[n,k,d;r,e,c_b]]$-EAOAQEC code $C_
{\mathrm{EAOA}} = C(\mathcal{H}, \mathcal{S}, \mathcal{G}_0, \mathcal{L}_0, \mathcal{T}_0)$ with gauge symplectic pairs $(\overline{G}_{X_i},\overline{G}_{Z_i})$. Let $y = y_I + y_S \leq r$. We define the following sets:
\begin{align}
    \mathcal{G}_0^{(\mathcal{H}_I)} \coloneqq \{\overline{G}_{Z_1},\cdots,\overline{G}_{Z_{y_I}}\}\text{ and }\mathcal{G}_0^{(\mathcal{H}_S)} \coloneqq \{\overline{G}_{X_{y_I+1}},\overline{G}_{Z_{y_I+1}},\cdots, \overline{G}_{X_{y}},\overline{G}_{Z_{y}}\}.
\end{align} 

We begin by using the construction in \Cref{sec:GaugeToCosetTransversal} based on the first $y_I$ gauge symplectic pairs and further use the construction in \Cref{sec:gauge_to_stabilizer}
on the next $y_S$ pairs to obtain the $[[n,k,d';r-y_I-y_S,e+y_S,\leq 2^{y_I}c_b]]$-EAOAQEC code $C_{\mathrm{GGF}} = C(\mathcal{H'}, \mathcal{S'}, \mathcal{G'}_0, \mathcal{L'}_0, \mathcal{T'}_0)$ that is based on the following sets:
\begin{itemize}
    \item $\mathcal{H}' = \langle \mathcal{H}, \mathcal{G}_0^{(\mathcal{H}_I)}, \mathcal{G}_0^{(\mathcal{H}_S)}\rangle$
    \item $\mathcal{L}_0' = \mathcal{L}_0$
    \item $\mathcal{G}_0' = \{\overline{G}_{X_{y+1}},\overline{G}_{Z_{y+1}},\cdots, \overline{G}_{X_{r}},\overline{G}_{Z_{r}}\}$,
    \item $\mathcal{T}_0' \subset \mathcal{T}_0G_{X_{\mathcal{T}}}$, where $G_{X_{\mathcal{T}}}= \langle\overline{G}_{X_{1}},\cdots, \overline{G}_{X_{y_I}}\rangle$.
\end{itemize}

\begin{corollary}\label{coro:dist_gauge_to_cbit_general}
If $\mathrm{min~wt}(\mathcal{G'}^{(n)}{G_{X_{\mathcal{T}}}^{(n)}}^*) \geq d(C_{\mathrm{EAOA}})$, then $d(C_{\mathrm{GGF}}) \geq d(C_{\mathrm{EAOA}})$ with equality being satisfied when $\mathcal{G}_0^{(\mathcal{H}_S)} = \emptyset$, $\mathcal{T}_0' = \mathcal{T}_0 \langle\mathcal{G}_0^{(\mathcal{T})}\rangle$, and $[\mathcal{T}_0^{(n)}]_{\mathcal{Z(H)}}^{\mathcal{P}_n}$ is a group. %
\end{corollary}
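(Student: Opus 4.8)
The plan is to realize the general gauge fixing construction as a composition of the entanglement-assisted gauge fixing construction of \Cref{sec:gauge_to_stabilizer} with the gauge fixing construction of \Cref{sec:GaugeToCosetTransversal}, and then to chain the two distance results \Cref{prop:distance_gauge_to_stab} and \Cref{prop:dist_gauge_to_cbit}. The first observation is that the two sub-constructions act on disjoint sets of gauge symplectic pairs --- the entanglement-assisted step on the $y_S$ pairs of $\mathcal{G}_0^{(\mathcal{H}_S)}$, and the ordinary step on the $y_I$ pairs whose $Z$-operators form $\mathcal{G}_0^{(\mathcal{H}_I)}$ --- so they commute and yield the same code $C_{\mathrm{GGF}}$ independent of order. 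I would deliberately apply the entanglement-assisted step \emph{first}: let $C_1$ be the intermediate $[\![n,k;\,r-y_S,\,e+y_S,\,c_b]\!]$ code obtained by converting the $y_S$ pairs of $\mathcal{G}_0^{(\mathcal{H}_S)}$ to stabilizers, so that $C_{\mathrm{GGF}}$ is exactly the gauge-fixed code built from $C_1$ by converting the remaining $y_I$ pairs into stabilizer--coset-transversal pairs. The reason for this ordering is that the output gauge group of the gauge fixing step applied to $C_1$ is precisely $\mathcal{G'}$, matching the group in the hypothesis $\mathrm{min~wt}(\mathcal{G'}^{(n)}{G_{X_{\mathcal{T}}}^{(n)}}^*)\geq d(C_{\mathrm{EAOA}})$; had I fixed the $y_I$ pairs first, the relevant gauge group would have been the larger $\langle \mathcal{G'}^{(n)}, \mathcal{G}_0^{(\mathcal{H}_S),(n)}\rangle$ and the hypothesis would not suffice.

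First I would establish that $d(C_1)\geq d(C_{\mathrm{EAOA}})$. This is the content of \Cref{prop:distance_gauge_to_stab}; although that proposition is stated for an input with no ebits and no hybrid component, its proof is purely a containment of correctable-error sets, ${\bf E}\subseteq {\bf E'}$, driven by $\mathcal{H}\subseteq \mathcal{H'}$, and so it carries over to a general EAOAQEC input. The only point to record is the mild generalization: the converted pairs $\mathcal{G}_0^{(\mathcal{H}_S),(n)}$, which anticommute with their now-stabilizer partners, lie in $\mathcal{P}_n\setminus \mathcal{Z}(\mathcal{H'})$ and are therefore absorbed into the second set of the union, while the isotropic part and the transversal cosets behave exactly as in the original argument.

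Next I would run the gauge fixing step on $C_1$ and reuse the set-level computation \emph{inside} the proof of \Cref{prop:dist_gauge_to_cbit} (together with \Cref{lem:coset_union_eq}), rather than its headline inequality. That computation shows the defining set for $d(C_{\mathrm{GGF}})$ is contained in the union of the defining set for $d(C_1)$ with $\mathcal{G'}^{(n)}{G_{X_{\mathcal{T}}}^{(n)}}^*$. Taking minimum weights then gives $d(C_{\mathrm{GGF}})\geq \min\!\big(d(C_1),\,\mathrm{min~wt}(\mathcal{G'}^{(n)}{G_{X_{\mathcal{T}}}^{(n)}}^*)\big)$, and since the first term is $\geq d(C_{\mathrm{EAOA}})$ by the previous step and the second is $\geq d(C_{\mathrm{EAOA}})$ by hypothesis, the bound $d(C_{\mathrm{GGF}})\geq d(C_{\mathrm{EAOA}})$ follows. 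For the equality claim, the stated conditions force $\mathcal{G}_0^{(\mathcal{H}_S)}=\emptyset$, i.e. $y_S=0$ and no entanglement-assisted step, so $C_{\mathrm{GGF}}$ is literally the gauge-fixed code of \Cref{sec:GaugeToCosetTransversal}; the equality half of \Cref{prop:dist_gauge_to_cbit}, which requires exactly ${\mathcal{T}_0'}^{(n)}=\mathcal{T}_0^{(n)}G_{X_{\mathcal{T}}}^{(n)}$ and $[\mathcal{T}_0^{(n)}]_{\mathcal{Z(H)}}^{\mathcal{P}_n}$ a group, then delivers $d(C_{\mathrm{GGF}})=d(C_{\mathrm{EAOA}})$.

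The main obstacle is the bookkeeping around the minimum-weight hypothesis: it is stated in terms of the final, smaller gauge group $\mathcal{G'}$, whereas a careless composition --- gauge-fixing the $y_I$ pairs before introducing the ebits --- would instead demand a minimum-weight bound over the larger intermediate gauge group $\langle \mathcal{G'}^{(n)}, \mathcal{G}_0^{(\mathcal{H}_S),(n)}\rangle$, which the hypothesis does not provide. Choosing the correct order, and invoking the containment form of \Cref{prop:dist_gauge_to_cbit} so that the distance is bounded below by a minimum of two separately controlled quantities, is precisely what makes the weaker hypothesis sufficient; a secondary but routine point is confirming the ebit and hybrid generalization of \Cref{prop:distance_gauge_to_stab} flagged above.
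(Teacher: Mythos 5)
Your proposal is correct and follows the same route as the paper: realize the general gauge fixing construction as the composition of the entanglement-assisted gauge fixing step (\Cref{sec:gauge_to_stabilizer}) with the gauge fixing step (\Cref{sec:GaugeToCosetTransversal}), then chain \Cref{prop:distance_gauge_to_stab} and \Cref{prop:dist_gauge_to_cbit}. The difference is rigor, not strategy: the paper's own proof is three sentences that simply cite the two propositions, silently passing over the point you isolate --- namely that applying the headline inequality of \Cref{prop:dist_gauge_to_cbit} to the intermediate code $C_1$ would require $\mathrm{min~wt}(\mathcal{G}'^{(n)}{G_{X_{\mathcal{T}}}^{(n)}}^*)\geq d(C_1)$, while the hypothesis only bounds this quantity below by $d(C_{\mathrm{EAOA}})$, which may be strictly smaller than $d(C_1)$. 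Your two fixes --- performing the entanglement-assisted step first so that the gauge group appearing in the hypothesis is the final $\mathcal{G}'$, and using the set-containment established inside the proof of \Cref{prop:dist_gauge_to_cbit} to obtain $d(C_{\mathrm{GGF}})\geq \min\big(d(C_1),\,\mathrm{min~wt}(\mathcal{G}'^{(n)}{G_{X_{\mathcal{T}}}^{(n)}}^*)\big)$ --- close this gap, and your remark that \Cref{prop:distance_gauge_to_stab} extends to inputs with ebits and a hybrid component (its proof being a pure containment of correctable-error sets driven by $\mathcal{H}\subseteq\mathcal{H}'$) is also correct. So your argument is a tightened version of the paper's proof rather than a different one, and the tightening is genuinely needed for the corollary to follow as stated.
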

\begin{proof}
    From \Cref{prop:distance_gauge_to_stab}, the construction in \Cref{sec:gauge_to_stabilizer} does not reduce the distance. The construction in \Cref{sec:GaugeToCosetTransversal} does not reduce the distance when $\mathrm{min~wt}(\mathcal{G'}^{(n)}{G_{X_{\mathcal{T}}}^{(n)}}^*) \geq d(C_{\mathrm{EAOA}})$. As the generators of $G_{X_{\mathcal{T}}}$ can be viewed as $\overline{G}_{X_i}$ in the modified gauge symplectic pairs, using \Cref{prop:dist_gauge_to_cbit} the result follows.
\end{proof}

\begin{restatable}{corollary}{}\label{coro:dist_gauge_to_cbit_stabilizer_general}
Based on an EAOAQEC code $C_{\mathrm{EAOA}} = (\mathcal{H}, \mathcal{S}, \mathcal{G}_0, \mathcal{L}_0, \mathcal{T}_0)$, let $C_{\mathrm{EAh}}^{(\mathcal{T}_0)} = (\mathcal{H}, \mathcal{S}, \emptyset, \mathcal{G}_0 \cup \mathcal{L}_0, \mathcal{T}_0)$ be the EA hybrid subspace stabilizer code. If $d(C_{\mathrm{EAh}}^{(\mathcal{T}_0)}) = d(C_{\mathrm{EAOA}})$, then $d(C_{\mathrm{GGF}}) \geq d(C_{\mathrm{EAOA}})$, with equality being satisfied when $\mathcal{G}_0^{(\mathcal{H}_S)} = \emptyset$, $\mathcal{T}_0' = \mathcal{T}_0G_{X_{\mathcal{T}}}$, and $[\mathcal{T}_0^{(n)}]_{\mathcal{Z(H)}}^{\mathcal{P}_n}$ is a group.
\end{restatable}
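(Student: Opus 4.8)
The plan is to reduce this statement to \Cref{coro:dist_gauge_to_cbit_general} in exactly the way that \Cref{coro:dist_gauge_to_cbit} was reduced to \Cref{prop:dist_gauge_to_cbit}: I will show that the hypothesis $d(C_{\mathrm{EAh}}^{(\mathcal{T}_0)}) = d(C_{\mathrm{EAOA}})$ forces the minimum-weight condition $\mathrm{min~wt}(\mathcal{G'}^{(n)}{G_{X_{\mathcal{T}}}^{(n)}}^*) \geq d(C_{\mathrm{EAOA}})$ demanded there. Since $C_{\mathrm{EAh}}^{(\mathcal{T}_0)}$ carries no nontrivial gauge operators, its distance is
\[
 d(C_{\mathrm{EAh}}^{(\mathcal{T}_0)}) = \min\mbox{wt}\Bigg(\big(\mathcal{Z}(\mathcal{H})\setminus\langle\mathcal{H}_I, iI\rangle\big)\bigcup\Big(\bigcup_{O_i, O_j\in\mathcal{T}_0^{(n)},\, O_i\neq O_j} O_iO_j^{-1}\mathcal{Z}(\mathcal{H})\Big)\Bigg),
\]
so it suffices to exhibit $\mathcal{G'}^{(n)}{G_{X_{\mathcal{T}}}^{(n)}}^*$ as a subset of the first set appearing here.

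The key step is therefore to verify $\mathcal{G'}^{(n)}{G_{X_{\mathcal{T}}}^{(n)}}^* \subseteq \mathcal{Z}(\mathcal{H})\setminus\langle\mathcal{H}_I, iI\rangle$, where in the general gauge fixing setting $\mathcal{G'}^{(n)} = \langle \mathcal{H}_I, \overline{G}_{Z_1}^{(n)}, \dots, \overline{G}_{Z_{y_I}}^{(n)}, {\mathcal{G}_0'}^{(n)}, iI\rangle$ and $G_{X_{\mathcal{T}}}^{(n)} = \langle \overline{G}_{X_1}^{(n)}, \dots, \overline{G}_{X_{y_I}}^{(n)}\rangle$. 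First, each listed generator is an isotropic or gauge operator of the original code $C_{\mathrm{EAOA}}$, hence lies in $\mathcal{Z}(\mathcal{H}) = \langle \mathcal{H}_I, \mathcal{G}_0^{(n)}, \mathcal{L}_0^{(n)}, iI\rangle$, so the whole product set is contained in $\mathcal{Z}(\mathcal{H})$. Second, for disjointness from $\langle\mathcal{H}_I, iI\rangle$ I would use the independence of the chosen symplectic generators: every element of $\mathcal{G'}^{(n)}{G_{X_{\mathcal{T}}}^{(n)}}^*$ has a nontrivial $\overline{G}_{X_i}$ component for some $i \leq y_I$, because the $G_{X_{\mathcal{T}}}^{(n)}$ factor is nontrivial and $\mathcal{G'}^{(n)}$ involves only the partners $\overline{G}_{Z_i}$ (not $\overline{G}_{X_i}$) among the first $y_I$ pairs; since $\langle\mathcal{H}_I, iI\rangle$ contains no such component, the two sets cannot meet.

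Combining the two containments, $\mathcal{G'}^{(n)}{G_{X_{\mathcal{T}}}^{(n)}}^*$ is a subset of the set whose minimum weight equals $d(C_{\mathrm{EAh}}^{(\mathcal{T}_0)})$, so $\mathrm{min~wt}(\mathcal{G'}^{(n)}{G_{X_{\mathcal{T}}}^{(n)}}^*) \geq d(C_{\mathrm{EAh}}^{(\mathcal{T}_0)}) = d(C_{\mathrm{EAOA}})$, which is precisely the hypothesis of \Cref{coro:dist_gauge_to_cbit_general}; invoking it gives $d(C_{\mathrm{GGF}}) \geq d(C_{\mathrm{EAOA}})$. For the equality claim, the stated conditions $\mathcal{G}_0^{(\mathcal{H}_S)} = \emptyset$, $\mathcal{T}_0' = \mathcal{T}_0 G_{X_{\mathcal{T}}}$ and $[\mathcal{T}_0^{(n)}]_{\mathcal{Z(H)}}^{\mathcal{P}_n}$ a group coincide with the equality conditions of \Cref{coro:dist_gauge_to_cbit_general} (noting $\langle\mathcal{G}_0^{(\mathcal{T})}\rangle = G_{X_{\mathcal{T}}}$ and that $y_S = 0$ collapses the general construction to the gauge fixing construction), so equality transfers directly. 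The main obstacle is the disjointness bookkeeping in the key step: one must carefully separate the GF-type pairs (indices $i \leq y_I$, whose $\overline{G}_{Z_i}$ enter $\mathcal{H}_I'$ while $\overline{G}_{X_i}$ populate $G_{X_{\mathcal{T}}}$) from the EAGF-type pairs (indices $y_I < i \leq y$, which enter $\mathcal{H}'$ as symplectic stabilizers and therefore appear in neither $\mathcal{G'}^{(n)}$ nor $G_{X_{\mathcal{T}}}^{(n)}$), so as to ensure the $\overline{G}_{X_i}$-component argument is valid.
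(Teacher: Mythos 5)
Your proposal is correct and follows essentially the same route as the paper: the paper's own (one-line) proof reduces \Cref{coro:dist_gauge_to_cbit_stabilizer_general} to the argument of \Cref{coro:dist_gauge_to_cbit}, i.e., derive $\mathrm{min~wt}(\mathcal{G'}^{(n)}{G_{X_{\mathcal{T}}}^{(n)}}^*) \geq d(C_{\mathrm{EAOA}})$ from the hypothesis $d(C_{\mathrm{EAh}}^{(\mathcal{T}_0)}) = d(C_{\mathrm{EAOA}})$ via the containment $\mathcal{G'}^{(n)}{G_{X_{\mathcal{T}}}^{(n)}}^* \subseteq \mathcal Z(\mathcal H)\setminus \langle \mathcal{H}_I, iI \rangle$, and then apply \Cref{coro:dist_gauge_to_cbit_general}. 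In fact you supply details the paper leaves implicit — the identification of $\mathcal{G}'^{(n)}$ and $G_{X_{\mathcal{T}}}^{(n)}$ in the GGF setting and the independence argument for disjointness from $\langle \mathcal{H}_I, iI\rangle$ — and these are handled correctly.
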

The proof of \Cref{coro:dist_gauge_to_cbit_stabilizer_general} follows from \Cref{coro:dist_gauge_to_cbit}. 

\begin{remark}
    Using the proposed construction, although codes can be constructed without the need for additional entanglement-assistance, using additional ebits increases the number of gauge qubits and the code could have improved code properties such as better distance, lower weight stabilizer generators (retains the LDPC nature of the code), etc., which could be useful for practical applications.
\end{remark}

We next provide an example of the code constructed using the procedure described. 

\begin{example}[EAOAQEC code based on shortened Hamming code]\label{ex:EAOAQEC_ex}

We start with an EA subsystem code $C_{\mathrm{EAOA}}$, with the following stabilizer, gauge, and logical generators obtained from the classical shortened Hamming code ($n=15$ before shortening), and whose detailed construction is provided in Appendix \ref{app:example_construction}:

\[
\begin{array}{c|cccccccccc|cc}
\hhline{=============}
S_1^{\mathrm{init}} & Z & I & I & Z & Z & I & I & Z & Z & I & Z & I\\ 
S_2^{\mathrm{init}} & I & Z & I & Z & I & Z & I & Z & I & Z & I & Z\\ 
S_3^{\mathrm{init}} & I & I & Z & Z & Z & Z & I & I & I & I & I  & I\\ 
S_4^{\mathrm{init}} & I & I & I & I & I & I & Z & Z & Z & Z & I  & I\\ 
S_5^{\mathrm{init}} & X & I & I & X & X & I & I & X & X & I & X  & I\\ 
S_6^{\mathrm{init}} & I & X & I & X & I & X & I & X & I & X & I  & X\\ 
S_7^{\mathrm{init}} & I & I & I & I & I & I & X & X & X & X & I  & I\\ 
S_8^{\mathrm{init}} & I & I & X & X & X & X & I & I & I & I & I  & I\\ 
\hhline{-------------}
\overline{L}_X & I & X & X & I & X & I & X & X & I & I & I & I\\ 
\overline{L}_Z & I & Z & Z & I & Z & I & Z & Z & I & I & I & I\\ 
\hhline{-------------}
 
\overline{G}_{1} & I & X & I & I & I & I & X & I & I & X & I & I\\ 
\overline{G}_{2} & Z & Z & Z & Z & I & I & I & I & I & I & I & I\\
\overline{G}_{3} & X & I & X & I & X & I & I & I & I & I & I & I\\ 
\overline{G}_{4} & I & I & I & Z & Z & I & I & Z & Z & I & I & I\\ 
\overline{G}_{5} & I & X & I & X & X & I & I & I & I & I & I & I\\ 
\overline{G}_{6} & Z & I & Z & Z & I & I & I & Z & Z & I & I & I\\
\hhline{=============}
\end{array}
\]

We note that the gauge generators form symplectic pairs, namely, $(\overline{G}_{2i-1}, \overline{G}_{2i})$, for $i\in\{1,2,3\}$. Let $\mathcal{S}$ be the stabilizer group generated by $S_j^{\mathrm{init}}$, for $j\in\{1, \ldots , 8\}$. We note that the gauge symplectic pairs $(\overline{G}_3,\overline{G}_4)$ and $(\overline{G}_5,\overline{G}_6)$ can be modified to symplectic pairs $(\overline{G}_3,\overline{G}_4\overline{G}_6)$ and $(\overline{G}_3\overline{G}_5,\overline{G}_6)$. From the symplectic pair $(\overline{G}_1,\overline{G}_2)$, we add $\overline{G}_2$ to the stabilizer generating set and $\overline{G}_1$ becomes a coset transversal element. We add both the elements of the symplectic pair $(\overline{G}_3,\overline{G}_4\overline{G}_6)$ to the stabilizer generating set $\mathcal{S}$ to form $\mathcal{H'}$. Let $\mathcal{S'}$ be the stabilizer generating set of the new code based on $\mathcal{H'}$. We note that $\overline{G}_3$ and $\overline{G}_4\overline{G}_6$ would be extended with non-identity operators to form stabilizers in $\mathcal{S}'$; hence, $\overline{G}_3$ and $\overline{G}_4\overline{G}_6$ extended with identity operators become coset transversal elements. The operators in the gauge symplectic pair $(\overline{G}_3\overline{G}_5,\overline{G}_6)$ remain gauge generators in the new code.

We obtain coset representative set $\mathcal{T}_0'$ by adding 
$T_1^{(12)}=\overline{G}_{1}$, $T_2^{(12)}=S_7^{\mathrm{init}}\overline{L}_X\overline{G}_1\overline{G}_3$ (same coset as $\overline{G}_1$), and
$T_3^{(12)}=\overline{L}_Z\overline{G}_2\overline{G}_4$ (same coset as $\overline{G}_4\overline{G}_6$) to $\mathcal{T}_0 = \{T_0^{(12)}=I\}$. We recall that $T_j^{(12)}$ is the restriction of $T_j$ to the first 12 qubits with $T_j$ acting trivially on the rest of the qubits. We use the notation $T_j^{(12)}$ here as these operators will be extended below.

The EAOAQEC code $C_{\mathrm{GGF}}$ defined by $\mathcal{H}'$ whose symplectic subgroup $\mathcal{H}'_S = \langle h_1, h_2, h_3, h_4, h_5, h_6\rangle$ with symplectic pairs $(h_1, h_4)$, $(h_2, h_5)$, and $(h_3, h_6)$,  and isotropic subgroup $\mathcal{H}'_I = \langle h_7, h_8, h_9, h_{10}, h_{11}\rangle$, has group generators as defined in the table below:

\[
\begin{array}{c|cccccccccc}
\hhline{===========}
h_1 & Z & I & I & Z & Z & I & I & Z & Z & I \\ 
h_2 & I & Z & I & Z & I & Z & I & Z & I & Z \\
h_3 & Z & I & Z & I & Z & I & I & I & I & I \\ 
h_4 & X & I & I & X & X & I & I & X & X & I \\ 
h_5 & I & X & I & X & I & X & I & X & I & X \\
h_6 & X & I & X & I & X & I & I & I & I & I \\ 
h_7 & I & I & I & I & I & I & X & X & X & X \\ 
h_8 & I & I & X & X & X & X & I & I & I & I \\ 
h_9 & I & I & I & I & I & I & Z & Z & Z & Z \\ 
h_{10} & I & I & Z & Z & Z & Z & I & I & I & I \\ 
h_{11} & Z & Z & Z & Z & I & I & I & I & I & I \\ 
\hhline{===========}
\end{array},
\]

In the table below, we provide the $S_j^{\mathrm{init}}$ and $\overline{G}_j$ from which $h_i$ are obtained:
\[
\begin{array}{|c|c|c|c|c|c|c|c|c|c|c|c|}
\hline
&&&&&&&&&&&\\[-1em]
\text{Generator of }\mathcal{H}' & h_1 & h_2 & h_3 & h_4 & h_5 & h_6 & h_7 & h_8 & h_9 & h_{10} & h_{11} \\[1pt]
\hline
&&&&&&&&&&&\\[-1em]
\text{Generator of }\mathcal{G} & S_1^{\mathrm{init}} & S_2^{\mathrm{init}} & \overline{G}_4\overline{G}_6 & S_5^{\mathrm{init}} & S_6^{\mathrm{init}} & \overline{G}_3 & S_7^{\mathrm{init}} & S_8^{\mathrm{init}} & S_4^{\mathrm{init}} & S_3^{\mathrm{init}} & \overline{G}_2 \\[1pt]
\hline
\end{array}
\]

For the (hybrid) EAOAQEC code, we can take a subset of coset representatives $\mathcal{T}_0' = \{ T_0=I, T_1, T_2, T_3\}$, where the elements of $\mathcal{T}_0'$ are defined in the table below:
\[
\begin{array}{c|ccccccccccccc}
\hhline{==============}
T_0 & I & I & I & I & I & I & I & I & I & I & I & I & I\\ 
T_1 & I & X & I & I & I & I & X & I & I & X & I & I & I\\ 
T_2 & X & I & I & I & I & I & X & I & X & I & I & I & I\\ 
T_3 & Z & I & I & I & I & I & Z & I & Z & I & I & I & I\\ 
\hhline{==============}
\end{array},
\]

The generators of $\mathcal{H}'_S$ and $\mathcal{H}'_I$ are extended using $3$ ebits to generate the stabilizer group $\mathcal{S}'$. The stabilizer generators $\{S_i\}_{i=1}^{11}$ of $\mathcal{S}'$, logical operators $\mathcal{L}_0 = \{ \overline{L}_X, \overline{L}_Z \}$, and gauge generators $\mathcal{G}_0 = \{ \overline{G}_X = \overline{G}_3\overline{G}_5, \overline{G}_Z = \overline{G}_6 \}$ of $C_{\mathrm{GGF}}$ are defined in the table below: 

\[
\begin{array}{c|cccccccccc|ccc}
\hhline{==============}
S_1 & Z & I & I & Z & Z & I & I & Z & Z & I & Z & I & I \\ 
S_2 & I & Z & I & Z & I & Z & I & Z & I & Z & I & Z & I \\ 
S_3 & Z & I & Z & I & Z & I & I & I & I & I & I & I & Z \\ 
S_4 & X & I & I & X & X & I & I & X & X & I & X & I & I\\ 
S_5 & I & X & I & X & I & X & I & X & I & X & I & X & I \\ 
S_6 & X & I & X & I & X & I & I & I & I & I & I & I & X \\ 
S_7 & I & I & I & I & I & I & X & X & X & X & I & I & I \\ 
S_8 & I & I & X & X & X & X & I & I & I & I & I & I & I \\ 
S_9 & I & I & I & I & I & I & Z & Z & Z & Z & I & I & I \\ 
S_{10} & I & I & Z & Z & Z & Z & I & I & I & I & I & I & I \\ 
S_{11} & Z & Z & Z & Z & I & I & I & I & I & I & I & I & I \\ 
\hhline{--------------}
\overline{L}_X & I & X & X & I & X & I & X & X & I & I & I & I & I \\ 
\overline{L}_Z & I & Z & Z & I & Z & I & Z & Z & I & I & I & I & I \\ 
\hhline{--------------}
\overline{G}_{X} & X & X & X & X & I & I & I & I & I & I & I & I & I \\ 
\overline{G}_{Z} & Z & I & Z & Z & I & I & I & Z & Z & I & I & I & I \\
\hhline{==============}
\end{array}
\]

The EAOAQEC code $C_{\mathrm{GGF}}$ obtained from $\mathcal{H}'$ and $\mathcal{T}_0'$ is a $[\![n=10, k=1, d=3; r=1, e=3, |\mathcal{T}_0'|=4]\!]$ code with $m = 11$, $l=8$, and $s = 5$ that encodes one logical qubit and 2 classical bits; i.e., 4 classical bit strings. %
In the table below, we provide the syndromes of the elements of $\mathcal{T}_0'$ with the rows and columns corresponding to elements of $\mathcal{T}_0'$ and the generators of $\mathcal{S}'$, respectively:

\[
    \begin{array}{c||ccccccccccc}
    & S_1 & S_2 & S_3 & S_4 & S_5 & S_6 & S_7 & S_8 & S_9 & S_{10} & S_{11} \\
    \hhline{============}
    T_0 & 0 & 0 & 0 & 0 & 0 & 0 & 0 & 0 & 0 & 0 & 0\\
    \hhline{------------}
    T_1 & 0 & 0 & 0 & 0 & 0 & 0 & 0 & 0 & 0 & 0 & 1\\
    \hhline{------------}
    T_2 & 0 & 0 & 1 & 0 & 0 & 0 & 0 & 0 & 0 & 0 & 1\\
    \hhline{------------}
    T_3 & 0 & 0 & 0 & 0 & 0 & 1 & 0 & 0 & 0 & 0 & 0\\
    \hhline{============}
    \end{array}
\]

We note that changing $\mathcal{T}_0'$ could change the number of classical bit strings encoded and the distance of the code, while the other parameters remain the same. For instance, we construct two EAOAQEC codes with the same stabilizer generators, logical operators, and gauge generators as above but add either $T_4$ or $T_5$ to $\mathcal{T}_0'$, where $T_4$ and $T_5$ are from the below table:
\[
\begin{array}{c|ccccccccccccc}
\hhline{==============}
T_4 & XZ & X & I & I & I & I & Z & I & XZ & X & I & I & I\\ 
T_5 & X & I & I & I & I & I & X & X & I & I & I & I & I\\ 
\hhline{==============}
\end{array}
\]
Observe that $T_4 = T_1T_2T_3$, while $T_5$ is a coset transversal element of the EA subsystem code based on $\mathcal{S}$.
The syndromes of $T_4$ and $T_5$ are given in the below table:
\[
    \begin{array}{c||ccccccccccc}
    & S_1 & S_2 & S_3 & S_4 & S_5 & S_6 & S_7 & S_8 & S_9 & S_{10} & S_{11} \\
\hhline{============}
    T_4 & 0 & 0 & 1 & 0 & 0 & 1 & 0 & 0 & 0 & 0 & 0\\
\hhline{------------}
    T_5 & 0 & 1 & 1 & 0 & 0 & 0 & 0 & 0 & 0 & 0 & 1\\
\hhline{============}
    \end{array}
\]
Further note that adding $T_4$ to $\mathcal T_0'$ yields a code with identical parameters to $\mathcal C_{\mathrm{GGF}}$ except that the new code encodes an additional classical bit string.
On the other hand, adding $T_5$ to $\mathcal T_0'$ instead of $T_4$ gives a code with an additional encoded bit string but with  a reduced distance of 2.

\end{example}

We next enumerate a few special cases of the construction described above: 
\begin{itemize}
    \item[(a)] From subsystem codes: We choose the code $C_{\mathrm{EAOA}}$ to be an $[[n,k,d;r]]$ subsystem code to obtain $[[n,k,d';r-y+e,e,c_b]]$-EAOAQEC codes. An example of an $[[n,k,d';r-y,0,2^y]]$-OAQEC code obtained from $C_{\mathrm{EAOA}}$ involves choosing $\mathcal{T}_0' = \langle\{\overline{G}_{X_i}\}_{i=1}^y\rangle$ and adding $\{\overline{G}_{Z_i}\}_{i=1}^y$ to $\mathcal{S}$ to obtain $\mathcal{H}' = \langle\mathcal{S},\{\overline{G}_{Z_i}\}_{i=1}^y\rangle$. We note that, for this case, the OAQEC code has the same distance as $C_{\mathrm{EAOA}}$ when $\mathrm{min~wt}({\mathcal{G}'}^{(n)}{G_{X_{\mathcal{T}}}^{(n)}}^*) \geq d(C_{EAOA})$. An $[[n,k,d'';r-y+e,e,2^y]]$-EAOAQEC code can be obtained by adding some $\overline{G}_{X_i}\overline{G}_{Z_j}$s to $\mathcal{H}'$ instead of $\overline{G}_{Z_j}$, where $i,j \in\{1,\dots,y\}$ and $e = \mathrm{log}_2(|\mathcal{H}'|/|Z(\mathcal{H}')|)$. Further note that $\mathcal{T}_0'$ can be chosen to be a subset of $\langle\{\overline{G}_{X_i}\}_{i=1}^y\rangle$, for example, $\{\overline{G}_{X_i}\}_{i=1}^y$. 
    \item[b)] From EA subsystem codes: We choose the code $C_{\mathrm{EAOA}}$ to be an $[[n,k,d;r,e]]$ EA subsystem code from which an EAOAQEC code can be constructed. \Cref{ex:EAOAQEC_ex} constructs an EAOAQEC code from an EA subsystem code.
    \item[c)] From hybrid subsystem codes: We choose the code $C_{\mathrm{EAOA}}$ to be an $[[n,k,d;r,0,c_b]]$ hybrid subsystem code from which an EAOAQEC code can be constructed. Examples of such constructions are similar to those provided for constructions from subsystem codes with considering $\mathcal{T}_0'$ to be a subset of $\mathcal{T}_0G_{X_{\mathcal{T}}}$.
    \item[d)] From EA hybrid subsystem codes: We choose the code $C_{\mathrm{EAOA}}$ to be an $[[n,k,d;r,e,c_b]]$ EA hybrid subsystem code; i.e., an EAOAQEC code, from which another EAOAQEC code is constructed. This construction could be used to improve the properties of the code such as distance and code rate.
\end{itemize}

We finish by applying the construction to the subsystem color code. 

\begin{example}[15-qubit subsystem color code]
For the $[\![15,1,3;6]\!]$-subsystem color code, let us consider $y=2$, $\mathcal{G}_0^{[\mathcal{H}_S]} = \{\overline{G}_{X_1}\overline{G}_{Z_2},\overline{G}_{X_2}\}$, $\mathcal{G}_0^{[\mathcal{H}_I]} = \emptyset$, and $G_{X_{\mathcal{T}}} = \langle\overline{G}_{Z_1},\overline{G}_{Z_2}\rangle$. Let the coset transversal subset be $\mathcal{T}_0' = \{I^{\otimes 15}, \overline{G}_{Z_1}, \overline{G}_{Z_2}, \overline{G}_{Z_1}\overline{G}_{Z_2}\}$. The EAOAQEC code $C_{\mathrm{GGF}}$ obtained is an $[\![15,1,3;4,1,4]\!]$ code that encodes $2$ classical bits.

We next consider the $[[15,1,2;6,0,3]]$-hybrid subsystem code  based on the 15-qubit subsystem color code with $\mathcal{T}_0 = \{I^{\otimes 15}, X_5Z_6, X_9Z_{11}\}$. 
\[
\begin{array}{c|ccccccccccccccc}
\hhline{================}
T_0 & I & I & I & I & I & I & I & I & I & I & I & I & I & I & I\\
T_1 & I & I & I & I & X & Z & I & I & I & I & I & I & I & I & I\\
T_2 & I & I & I & I & I & I & I & I & X & I & Z & I & I & I & I\\
\hhline{================}
\end{array}
\]
The EAOAQEC code obtained considering $\mathcal{G}_0^{[\mathcal{H}_S]} = \{\overline{G}_{X_1}\overline{G}_{Z_2},\overline{G}_{X_2}\}$, $\mathcal{G}_0^{[\mathcal{H}_I]} = \emptyset$, $G_{X_{\mathcal{T}}} = \langle\overline{G}_{Z_1},\overline{G}_{Z_2}\rangle$, and $\mathcal{T}_0' = \{I^{\otimes 15}, \overline{G}_{Z_1}, \overline{G}_{Z_2}, \overline{G}_{Z_1}\overline{G}_{Z_2}\} \{I^{\otimes 15}, X_5Z_6, X_9Z_{11}\}$ is a $[\![15, 2, 2; 4,1,12]\!]$-EAOAQEC code that encodes $12$ classical bit strings.
\end{example}

\section{Conclusion} \label{sec:conc}

The EAOAQEC framework for entanglement-assisted quantum codes unifies the original frameworks of EAQEC \cite{brun2006correcting}, EAOQEC \cite{hsieh2007general}, and EACQ \cite{kremsky2008classical} by viewing them through an algebraic approach enabled by the recently introduced stabilizer formalism for OAQEC \cite{kribs2023stabilizer}. We established conditions that say when a set of errors is correctable for a given code, generalizing the previous EA error correction theorems. We then used the theorem to define distances for these codes, and we proved bounds for various subclasses of them. The EAOAQEC approach also evidently leads to new classes of hybrid EA subspace codes, we showed exactly how the EACQ codes sit as a subclass of such codes, and EA subsystem codes, where we gave examples and constructions in both the hybrid and non-hybrid cases. 

This work generates several questions and potentially new lines of inquiry. Indeed, in principle one could consider corresponding extensions of any scenario in which EA codes have found application, of which there appear to be several as a simple literature search confirms. To name a few, first note that one could consider more general errors beyond those which are noiseless for Bob, coming from different physically motivated scenarios, with logical encoded $e$bits to deal with errors of specific types (an example is given in \cite{lai2012entanglement} with errors considered there as `Bob storage errors'). Here we have focused on systems of qubits, but we fully expect that the whole framework readily extends to qudits (the work \cite{nadkarni2021non} and the qudit extension in \cite{kribs2023stabilizer} can be used as guidance in that respect). One could also consider an extension of the framework to infinite-dimensional Hilbert spaces and the von Neumann algebra commuting operator framework, for both mathematical motivations \cite{beny2009quantum,crann2016private,crann2020state,conlon2023quantum} and the potential applicability to settings of relevance in error correction for photonic quantum computing \cite{gottesman2001encoding,Bourassa2021blueprintscalable} and the entanglement structure of black holes~\cite{hayden_preskill2007,hayden_penintgon2020,hayden_penington2019,yoshida2021recovery}.
Further, we wonder what the approaches of catalytic quantum codes \cite{brun2014catalytic} and entanglement-assisted LDPC codes \cite{hsieh2011high}  applied to the new codes introduced here could yield, as well as possible connections with recent hybrid subsystem code applications and constructions such as \cite{TanNemHybrid}.   
And of course the generalization of the design of encoding circuits, syndrome computation circuits, and decoding techniques \cite{EASC_Encoding,OSC_Encoding,NBEASC_Encoding,NBEASC_SyndromeComputation,EASC_DecodingComplexity} tailored to this new general entanglement-assisted framework is important for the practical implementation of the codes designed using the framework. Finally, one could explore the notion of punctured quantum codes~\cite{Rains1999, gundersen2024puncturingquantumstabilizercodes} in the EAOAQEC framework.

We plan to undertake some of these investigations elsewhere, and we invite other interested researchers to do so as well. 

\strut 

{\noindent}{\it Acknowledgements.}
S.A. was partly supported by a Mitacs Accelerate grant. D.W.K. was partly supported by NSERC Discovery Grant 400160. Research at Perimeter Institute is supported in part by the Government of Canada through the Department of Innovation, Science and Economic Development Canada and by the Province of Ontario through the Ministry of Colleges and Universities. We thank Tarik El-Khateeb for helping us with \Cref{fig:EAOAQEC Diagram}.

\bibliographystyle{plainurl}

\bibliography{refs}

\clearpage
\appendix

\section{15-qubit subsystem color code}\label{app:3D_Subsystem_Color_code}
We consider a $[\![15,1,3;6, 0, 1]\!]$ subsystem color code \cite{Paetznick_Reichardt2013,Bombin2015Gauge} with distance $3$
whose stabilizer and gauge generators are provided in the table below:

\[
\begin{array}{c|ccccccccccccccc}
\hhline{================}
S_1 & X & I & X & I & X & I & X & I & X & I & X & I & X & I & X\\
S_2 & X & X & I & I & X & X & I & I & X & X & I & I & X & X & I\\
S_3 & I & I & I & X & X & X & X & I & I & I & I & X & X & X & X\\
S_4 & I & I & I & I & I & I & I & X & X & X & X & X & X & X & X\\
S_5 & Z & I & Z & I & Z & I & Z & I & Z & I & Z & I & Z & I & Z\\
S_6 & Z & Z & I & I & Z & Z & I & I & Z & Z & I & I & Z & Z & I\\
S_7 & I & I & I & Z & Z & Z & Z & I & I & I & I & Z & Z & Z & Z\\
S_8 & I & I & I & I & I & I & I & Z & Z & Z & Z & Z & Z & Z & Z\\
\hhline{----------------}
\overline{G}_{X_1} & I & I & X & I & I & I & X & I & I & I & X & I & I & I & X\\
\overline{G}_{Z_1} & I & I & I & I & I & I & I & I & I & I & I & Z & Z & Z & Z\\
\overline{G}_{X_2} & I & I & I & I & I & I & I & I & I & I & I & X & X & X & X\\
\overline{G}_{Z_2} & I & I & Z & I & I & I & Z & I & I & I & Z & I & I & I & Z\\
\overline{G}_{X_3} & I & I & I & I & X & I & X & I & I & I & I & I & X & I & X\\
\overline{G}_{Z_3} & I & I & I & I & I & I & I & I & I & Z & Z & I & I & Z & Z\\
\overline{G}_{X_4} & I & I & I & I & I & I & I & I & I & X & X & I & I & X & X\\
\overline{G}_{Z_4} & I & I & I & I & Z & I & Z & I & I & I & I & I & Z & I & Z\\
\overline{G}_{X_5} & I & I & I & I & I & X & X & I & I & I & I & I & I & X & X\\
\overline{G}_{Z_5} & I & I & I & I & I & I & I & I & Z & I & Z & I & Z & I & Z\\
\overline{G}_{X_6} & I & I & I & I & I & I & I & I & X & I & X & I & X & I & X\\
\overline{G}_{Z_6} & I & I & I & I & I & Z & Z & I & I & I & I & I & I & Z & Z\\
\hhline{================}
\end{array}
\]

\section{Code Construction Algorithms from \texorpdfstring{\Cref{sec:isotropictosymplectic}}{Section 7.2}}\label{isotosymalgs}

\begin{algorithm}[ht]
\caption{Obtain EAOAQEC code from a hybrid subsystem code by converting isotropic operators to extended symplectic operators}\label{alg:EAOAQEC_HybridSubsystemCode}
\begin{algorithmic}
\Require Generators of $\mathcal{S}$ and $\mathcal{T}_0$ 
\Ensure $\mathcal{H} = \langle H_1,\dots,H_m\rangle$ and $\mathcal{T}_0'^{(n-f)}$\vspace{-0.4cm}
\AlgoProcedure{\vspace{-0.4cm}
\begin{itemize}
    \item[1)] Let $E_Q$ denote the set of qubit indices such that no Pauli operator whose support is a subset of $E_Q$ belongs to $\mathcal{Z(S)}$. Let $f = |E_Q|$. For obtaining an EAOAQEC code with f ebits with $1\leq f < \mathrm{min~wt}(\mathcal{Z}(\mathcal{S}))$, $E_Q$ can be chosen to be any $f$ qubit indices. 
    \item[2)] Obtain the check matrix $H = [H_1|H_2]$ of the stabilizer code by stacking the binary representation of stabilizer generators. Construct $H'$ from $H$ by permuting the columns of $H_1$ and $H_2$ corresponding to the qubit indices in $E_Q$ to the first $2f$ columns. Perform partial Gaussian elimination on $H'$ by considering the first $2f$ columns to be the pivotal columns. Permute back the columns of the modified $H'$ to that of the order of qubits in $H$ to obtain $H^{(G)} = [H_1^{(G)}|H_2^{(G)}]$.
    \end{itemize}}
    \algstore{code_const_algo}
\end{algorithmic}
\end{algorithm}

\begin{algorithm}                     
\begin{algorithmic}[1]                   %
\algrestore{code_const_algo}
\AlgoProcedureNoTitle{
    \begin{itemize}
    \item[3)] Consider the $2f$ stabilizer generators $S_j$ corresponding to the pivotal rows in the Gaussian elimination procedure to be the extended symplectic pairs and the rest stabilizer generators to correspond to the extended isotropic operators. The qubits with indices in $E_Q$ correspond to the ebits and the rest $(n-f)$ qubits correspond to Alice's qubits. $\mathcal{H}$ is obtained from the restriction of $S_j$ to the qubit indices not in $E_Q$.
    \item[4)] Multiply each element $T$ of $\mathcal{T}_0$ with the corresponding stabilizer generators $S_j$ that have non-identity operators on the ebits to obtain the product $T'$ to have identity on the ebits. $\mathcal{T}_0^{(n-f)}$' is obtained to be the restriction of $T'$ to the qubit indices not in $E_Q$.
\end{itemize}
\Return $\mathcal{H}$ and $\mathcal{T}_0'^{(n-f)}$}
\end{algorithmic}
\end{algorithm}

\begin{algorithm}[H]
\caption{Obtain EA operator algebra CSS code from a hybrid subsystem CSS code based on a dual-containing classical code}\label{alg:EAOAQEC_HybridSubsystemCSSCode}
\begin{algorithmic}
\Require Generators of $\mathcal{S}$ and $\mathcal{T}_0$ 
\Ensure $\mathcal{H}$ and $\mathcal{T}_0'^{(n-f)}$ 
\AlgoProcedure{
\begin{itemize}
    \item[1)] Obtain the check matrix $H = \left[\begin{array}{c|c}
        H & 0\\ 0 & H
    \end{array}\right]$ of the CSS code.
    \item[2)] Using Gaussian elimination, convert the $H$ into row reduced echelon form. Let $H^{(G)}$ be the matrix obtained after Gaussian elimination procedure. Let $S_j$s be the stabilizer generators corresponding to the check matrix $H = \left[\begin{array}{c|c}
        H^{(G)} & 0\\ 0 & H^{(G)}
    \end{array}\right]$. Let $f = \mathrm{rank}(H)$. %
    \item[3)] The qubits corresponding to the pivotal columns in the Gaussian elimination procedure are the ebits while the rest are Alice's qubits. %
    $\mathcal{H}$ is obtained from the restriction of elements of $S_j$ to Alice's qubit indices.
    \item[4)] Multiply each element $T$ of $\mathcal{T}_0$ with the corresponding stabilizer generators $S_j$ that have non-identity operators on the ebits to obtain the product $T'$ to have identity on the ebits. $\mathcal{T}_0'^{(n-f)}$ is obtained to be the restriction of $T'$ to Alice's qubit indices.
\end{itemize}
\Return $\mathcal{H}$ and $\mathcal{T}_0'^{(n-f)}$}
\end{algorithmic}
\end{algorithm}

\section{Construction of the EAOAQEC code in \texorpdfstring{\Cref{sec:GaugeToCosetTransversal_general}}{Section 7.4}}\label{app:example_construction}

The EAOAQEC code in \Cref{sec:GaugeToCosetTransversal_general} is constructed from the $[15,11,3]$ Hamming code $C_H$ defined by the following parity check matrix:
\begin{align}
    H_{H} = \left[\begin{array}{cccccccccccccccc}
        0 & 0 & 0 & 0 & 0 & 0 & 0 & 1 & 1 & 1 & 1 & 1 & 1 & 1 & 1\\ 
        0 & 0 & 0 & 1 & 1 & 1 & 1 & 0 & 0 & 0 & 0 & 1 & 1 & 1 & 1\\
        0 & 1 & 1 & 0 & 0 & 1 & 1 & 0 & 0 & 1 & 1 & 0 & 0 & 1 & 1\\
        1 & 0 & 1 & 0 & 1 & 0 & 1 & 0 & 1 & 0 & 1 & 0 & 1 & 0 & 1 
    \end{array}\right]
\end{align}
The Hamming code $C_H$ is first shortened by removing bits at locations 1, 12, 13, 14, and 15 to obtain a $[10,6,3]$ code $C_{HS}$ with the following parity check matrix:
\begin{align}
    H_{HS} = \left[\begin{array}{ccccccccccc}
        0 & 0 & 0 & 0 & 0 & 0 & 1 & 1 & 1 & 1\\ 
        0 & 0 & 1 & 1 & 1 & 1 & 0 & 0 & 0 & 0\\
        1 & 1 & 0 & 0 & 1 & 1 & 0 & 0 & 1 & 1\\
        0 & 1 & 0 & 1 & 0 & 1 & 0 & 1 & 0 & 1 
    \end{array}\right]
\end{align}
The last row of $H_{HS}$ is added to the third row of $H_{HS}$ to obtain the following parity check matrix $H_{HS}^{(f)}$ of $C_{HS}$:
\begin{align}
    H_{HS}^{(f)} = \left[\begin{array}{ccccccccccc}
        0 & 0 & 0 & 0 & 0 & 0 & 1 & 1 & 1 & 1\\ 
        0 & 0 & 1 & 1 & 1 & 1 & 0 & 0 & 0 & 0\\
        1 & 0 & 0 & 1 & 1 & 0 & 0 & 1 & 1 & 0\\
        0 & 1 & 0 & 1 & 0 & 1 & 0 & 1 & 0 & 1 
    \end{array}\right]
\end{align}

While the last two rows of $H_{HS}$ are not self-orthogonal, the first two rows are self-orthogonal and each pair of rows are orthogonal to each other. Thus, using two entangled bits, an EA stabilizer code can be constructed with the following isotropic subgroup and symplectic subgroup:
\begin{align*}
    \mathcal{H}_I^{(1)} = \langle \bar{Z}_4,\bar{Z}_5,\bar{Z}_6,\bar{Z}_7 \rangle,\\
        \mathcal{H}_S^{(1)} = \langle \bar{Z}_1,\bar{Z}_2,\bar{X}_1,\bar{X}_2 \rangle,
\end{align*}
where the generators of $\mathcal{H}_I^{(1)}$ and $\mathcal{H}_S^{(1)}$ are given in the below table:
\[
\begin{array}{c|ccccccccccccccc}
\hhline{================}
\bar{Z}_1 & Z & I & I & Z & Z & I & I & Z & Z & I\\
\bar{Z}_2 & I & Z & I & Z & I & Z & I & Z & I & Z \\
\bar{X}_1 & X & I & I & X & X & I & I & X & X & I\\
\bar{X}_2 & I & X & I & X & I & X & I & X & I & X\\
\bar{Z}_4 & I & I & I & I & I & I & X & X & X & X\\
\bar{Z}_5 & I & I & X & X & X & X & I & I & I & I\\
\bar{Z}_6 & I & I & I & I & I & I & Z & Z & Z & Z\\
\bar{Z}_7 & I & I & Z & Z & Z & Z & I & I & I & I\\
\hhline{================}
\end{array}
\]

Let $C(\mathcal{H}_I^{(1)}, \mathcal{H}_S^{(1)})$ be the EA quantum code obtained from $\mathcal{H}_I^{(1)}$ and $\mathcal{H}_S^{(1)}$.
We next choose four symplectic pairs of gauge and logical operators of $C(\mathcal{H}_I^{(1)}, \mathcal{H}_S^{(1)})$ of the form $(\bar{X}_i^{(\mathrm{init})}, \bar{Z}_i^{(\mathrm{init})})$, namely
\begin{align}
    (\bar{X}_1^{(\mathrm{init})}, \bar{Z}_1^{(\mathrm{init})}) &= (X_2X_7X_{10}, Z_1Z_2Z_3Z_4),\nonumber\\
    (\bar{X}_2^{(\mathrm{init})}, \bar{Z}_2^{(\mathrm{init})}) &= (X_1X_3X_5,Z_4Z_5Z_8Z_9),\nonumber\\
    (\bar{X}_3^{(\mathrm{init})}, \bar{Z}_3^{(\mathrm{init})}) &= (X_2X_4X_5,Z_1Z_3Z_4Z_8Z_9),\nonumber\\
    (\bar{X}_4^{(\mathrm{init})}, \bar{Z}_4^{(\mathrm{init})}) &= (X_2X_3X_5X_7X_8,Z_2Z_3Z_5Z_7Z_8).\nonumber
\end{align}
We note that the pairs $(\bar{X}_i^{(\mathrm{init})}, \bar{Z}_i^{(\mathrm{init})})$ of gauge and logical operators of $C(\mathcal{H}_I^{(1)},\mathcal{H}_S^{(1)})$ are based on the dual of the classical shortened Hamming code $C_{HS}$.

    For the EA subsystem code, we initially choose $(\bar{X}_1^{(\mathrm{init})}, \bar{Z}_1^{(\mathrm{init})})$, $(\bar{X}_2^{(\mathrm{init})}, \bar{Z}_2^{(\mathrm{init})})$, and $(\bar{X}_3^{(\mathrm{init})}, \bar{Z}_3^{(\mathrm{init})})$ as gauge operators and $(\bar{X}_4^{(\mathrm{init})}, \bar{Z}_4^{(\mathrm{init})})$ as the logical operator of the code.
    To construct the EAOAQEC code, we add $\bar{Z}_2^{(\mathrm{init})}\bar{Z}_3^{(\mathrm{init})} = Z_1Z_3Z_5$, $\bar{X}_2^{(\mathrm{init})}$, and $\bar{Z}_1^{(\mathrm{init})}$ to $\mathcal{H}$. To $\mathcal{T}_0$, we add $\bar{X}_1^{(\mathrm{init})}$, $\bar{X}_1^{(\mathrm{init})}\bar{X}_2^{(\mathrm{init})}\bar{X}_4^{(\mathrm{init})}\overline{Z}_4 = X_1X_7X_9$ (same coset as $\bar{X}_1^{(\mathrm{init})}\bar{X}_2^{(\mathrm{init})}$), and $\bar{Z}_1^{(\mathrm{init})}\bar{Z}_2^{(\mathrm{init})}\bar{Z}_4^{(\mathrm{init})} = Z_1Z_7Z_9$ (same coset as $\bar{Z}_1^{(\mathrm{init})}\bar{Z}_2^{(\mathrm{init})}$). Thus, we obtain the isotropic and symplectic subgroup generators and elements of the coset transversal subset $\mathcal{T}_0$ to be the following:
\[
\begin{array}{c|ccccccccccccccc}
\hhline{================}
\bar{Z}_3 & Z & I & Z & I & Z & I & I & I & I & I\\
\bar{X}_3 & X & I & X & I & X & I & I & I & I & I \\
\bar{Z}_8 & Z & Z & Z & Z & I & I & I & I & I & I\\
\hhline{================}
T_0 & I & I & I & I & I & I & I & I & I & I & I & I & I\\ 
T_1 & I & X & I & I & I & I & X & I & I & X & I & I & I\\ 
T_2 & X & I & I & I & I & I & X & I & X & I & I & I & I\\ 
T_3 & Z & I & I & I & I & I & Z & I & Z & I & I & I & I\\ 
\hhline{==============}
\end{array}
\]

We note that $\bar{Z}_3$ and $\bar{X}_3$ anticommute with each other while commute with all other elements in the isotropic and symplectic subgroups. The operator $\bar{Z}_8$ commutes with all elements in the isotropic and symplectic subgroups. The example constructed in \Cref{sec:GaugeToCosetTransversal_general} is based on these updated isotropic and symplectic subgroups and $\mathcal{T}_0$.
\end{document}